\newcommand{\onlyShort}[1]{\ifthenelse{\boolean{short}}{#1}{}}
\newcommand{\onlyLong}[1]{\ifthenelse{\boolean{short}}{}{#1}}
\def\Section{Sect.}
\def\Section{Section}
\def\true{\mbox{\sc true}}
\def\false{\mbox{\sc false}}
\def\EMPTY{\vskip8pt minus 4pt\relax}
\def\Er{\ensuremath{E^r}}
\def\Gr{\ensuremath{\mathcal{G}^r}}
\def\Ga{\mathcal{G}}
\def\N{\mathcal{N}}
\def\U{\mathcal{U}}
\def\A{\mathcal{A}}
\def\Comp{\mathcal{C}}
\def\Cpr{\Comp_p^r}
\def\stableSCC{\mbox{\tt inStableRoot}}
\def\distx_#1{\mbox{\tt{cd}}^{#1}} 
\def\dist{\@ifnextchar_\distx{\distx_}}
\def\Knows#1{\mathsf{K}_p}
\gdef\dash---{\thinspace---\hskip.16667em\relax}
\theoremstyle{definition}
\newtheorem{definition}{Definition}
\theoremstyle{plain}
\newtheorem{theorem}{Theorem}
\newtheorem{lemma}[theorem]{Lemma}
\newtheorem{corollary}[theorem]{Corollary}
{\bfseries}{\itshape}
\newtheorem{assumption}{Assumption}{\bfseries}{\itshape}
\newconstruct{\ON}{\textbf{on}}{\textbf{do}}{\ENDON}{\textbf{end on}}
\newconstruct{\PROC}{\textbf{procedure}}{}{\ENDPROC}{\textbf{end on}}
\newconstruct{\PRED}{\textbf{predicate}}{}{\ENDPRED}{\textbf{end on}}
\newconstruct{\FUNC}{\textbf{function}}{}{\ENDFUNC}{\textbf{end on}}
\renewcommand{\leq}{\leqslant}
\renewcommand{\le}{\leqslant}
\renewcommand{\geq}{\geqslant}
\renewcommand{\ge}{\geqslant}
\gdef\dash---{\thinspace---\hskip.16667em\relax}
\gdef\op|{\,|\;}
\gdef\op:{\,:\;}
\newcommand{\li}[1]{\langle#1\rangle}
\def\la{\leftarrow}
\def\ra{\rightarrow}
\def\lt{\leadsto}
\def\set#1{\left\{#1\right\}}
\def\C{\mathcal{C}}
\def\G{\mathcal{G}}
\def\D{\mathcal{D}}
\def\S{\mathscr{S}}
\def\L{\mathcal{L}}
\def\P{\mathcal{P}}
\def\Q{\mathcal{Q}}
\newcommand{\M}{\mathcal{M}}
\newcommand{\R}{\mathcal{R}}
\newcommand{\msg}[1]{\langle#1\rangle}
\def\leftNumbered{\tagsleft@true}\def\rightNumbered{\tagsleft@false}
\def\Timely{\mathcal{N}}
\def\Er{\ensuremath{E^r}}
\def\Gr{\ensuremath{\mathcal{G}^r}}
\def\Gcap(#1){\mathcal{G}^{\cap\, #1}}
\def\Ecap(#1){E^{\cap\, #1}}
\def\N{\mathcal{N}}
\def\Comp{\mathcal{C}}
\def\Cpr{\Comp_p^r}
\newcommand{\edge}[1]{\stackrel{#1}{\ra}}
\newcommand{\ltedge}[1]{\stackrel{#1}{\lt}}
\def\R{\mathcal{R}}
\newcommand{\goodD}{$D$-bounded}
\author{Martin Biely\thanks{EPFL, Switzerland. \texttt{martin.biely@epfl.ch}}
  \and Peter Robinson\thanks{Division of Mathematical Sciences, Nanyang Technological University, Singapore 637371. \texttt{peter.robinson@ntu.edu.sg}}
\and Ulrich Schmid\thanks{Technische Universit\"at Wien,
  Embedded Computing Systems Group (E182/2), Vienna, Austria. \texttt{s@ecs.tuwien.ac.at}}
}
\date{}
\title{Agreement in Directed Dynamic Networks\thanks{%
    A preliminary report of this paper has appeared at SIROCCO 2012.
    This work has been supported by the Austrian Science Foundation (FWF) 
    project P20529 and S11405.
 Peter Robinson has also been supported in part by Nanyang Technological 
 University grant M58110000 and Singapore Ministry of Education (MOE) 
 Academic Research Fund (AcRF) Tier 2 grant MOE2010-T2-2-082.}} %
\begin{document}
\maketitle
\begin{abstract}
  We study distributed computation in synchronous
  dynamic networks where an omniscient adversary controls the 
  unidirectional communication links. Its behavior is modeled as a sequence 
  of \emph{directed} graphs representing the active (i.e.\ timely) 
  communication links per round.
 We prove that consensus is impossible under some natural weak connectivity assumptions,
 and introduce vertex-stable root components as a 
means for circumventing this impossibility. 
Essentially, we assume that there is a short period of time during 
which an arbitrary part of the network remains strongly connected, while
its interconnect topology may keep changing continuously.
We present a consensus algorithm that works under this assumption, and prove
its correctness. Our algorithm maintains a local
estimate of the communication graphs, and applies %
techniques for detecting
stable network properties and univalent system configurations.  Our 
possibility
results are complemented by several impossibility results and lower bounds for
consensus and other distributed computing problems like leader election,
revealing that our algorithm is asymptotically optimal. 
%\keywords{distributed algorithm \and dynamic network \and consensus \and lower bound}
\end{abstract}

\section{Introduction}

Dynamic networks, instantiated, e.g., by (wired) peer-to-peer (P2P) 
networks, (wireless) sensor networks, mobile ad-hoc networks and vehicular 
area networks, are becoming
ubiquitous nowadays. The primary properties of such networks are (i) sets 
of participants
(called processes in the sequel) that are a priori unknown and maybe time-varying, and (ii) 
the absence of central control. Such assumptions make it very difficult to
setup and maintain the basic system, and create particular challenges
for the design of robust distributed services for applications running on 
such dynamic networks. 

A natural approach to build robust services despite the dynamic nature of
the system (e.g., mobility, churn, or failures of processes) is to use distributed consensus to agree system-wide on
(fundamental) parameters like schedules, frequencies, etc. Although system-wide agreement
indeed provides a very convenient abstraction for building robust services, 
it inevitably rests on the ability to efficiently implement consensus in a dynamic 
network.

Doing this in \emph{wireless} dynamic networks is particularly challenging, 
for several reasons: First, whereas wireline networks are usually 
adequately modeled by 
means of bidirectional links, this is not the case for wireless networks:
Fading phenomenons and interference \cite{GKFS10:perf} are local effects
that affect only the receiver of a wireless link. 
Given that the sender, or rather the receiver of the reverse link, is
to be found at a different location it is likely to
experience very different levels of fading and interference.
Thus, wireless links are more adequately modeled by means of pairs of
unidirectional links, which are considered independent of each other.

Second, wireless networks are inherently broadcast. When a process 
transmits, then every other process within its transmission range will
observe this transmission --- either by legitimately receiving the
message or as some form of interference. This creates quite irregular
communication behavior, such as capture effects and
near-far problems \cite{WJC00}, where certain (nearby) transmitters may
``lock'' some receiver and thus prohibit the reception of messages
from other senders. As a consequence, wireless
links that work correctly at a given time may have a 
very irregular spatial distribution, and may also vary heavily over 
time.

Finally, taking also into account mobility of processes and/or 
peculiarities in the system design (for example, duty-cycling is 
often used to conserve energy in wireless sensor networks), it is
obvious that static assumptions on the communication topology, as
underlying classic models like unit disc graphs, are not adequate for
wireless dynamic networks.

We hence argue that such dynamic systems can be modeled adequately
only by dynamically changing \emph{directed} communication
graphs.
Since synchronized clocks are
required for basic communication in wireless systems,
e.g., for transmission scheduling and sender/receiver synchronization,
we also assume that the system is synchronous. 

\subsection*{Main Contributions}
We consider a dynamic network modeled by a sequence of directed communication 
graphs, one for each round.

\begin{enumerate}
\item[(1)] We prove that communication graphs that are weakly connected in
every round are not sufficient for solving consensus, and introduce
a fairly weak additional assumption that allows to overcome this
impossibility. It requires that, in every round, there is exactly
one (possibly changing) strongly connected component (called a \emph{root component})
that has only out-going
links to (some of) the remaining processes and can reach every
process in the system via several hops. Since this assumption is still
too weak for solving consensus, we add the requirement that, eventually, there
will be a short interval of time where the processes in the
root component remain the same, although the connection topology
may still change. We coined the term \emph{vertex-stable root component}
(for some window of limited stability) for this requirement.

\item[(2)] We provide a consensus algorithm that works in this model, and
prove its correctness. Our algorithm requires a window of stability that
has a size of $4D$, where $D$ is the number of 
rounds required to reach all processes in the network from any process in the
vertex-stable root component.
\onlyLong{
While in general $D$ can be in as large as $n$, we show how to obtain an improved 
running time (in $O(\log n)$) when assuming certain expanding graph 
topologies.
}

\item[(3)] We show that any consensus and leader election algorithm has to
know an a priori bound on $D$. Since the system size $n$ is a
trivial bound on $D$, this implies that there is no uniform algorithm, i.e.,
no algorithm unaware of the size of the network, %
that solves consensus in our model.
In addition, we establish a lower bound of $D$ for the window of stability.

\item[(4)] We prove that neither reliable broadcast, atomic broadcast, nor 
  causal-order broadcast can be implemented in our model without 
  additional assumptions.  Moreover, there is no algorithm that solves 
  counting, $k$-verification, $k$-token dissemination, all-to-all token 
  dissemination, and $k$-committee election.
\end{enumerate}

\onlyShort{Lacking space did not allow us to include the detailed
algorithms and proofs in this paper; consult the full paper \cite{BRS2012:arxiv}
for all the details.}

\onlyLong{\section{Related Work}}
\onlyShort{\subsection*{Related Work}}

We are are not aware of any previous work on consensus in \emph{directed and 
  dynamic} networks with connectivity requirements as weak as ours.

\label{sec:relwork}

Dynamic networks have been studied intensively in distributed computing. Early
work on this topic includes~\cite{afek+gr:slide,awerbuch+pps:dynamic}. 
\onlyLong{One basic assumption that can be used to categorize research in
dynamic networks is whether the set of processes is assumed to be
fixed, or subject to churn (i.e., processes join and leave over time).
The latter has mostly been considered in the area of peer-to-peer
networks and the construction of overlays. We refer the interested
reader to \cite{KSW10} for a more detailed treatment of related work
in this area. 

When the set of processes is considered to be fixed, dynamicity in the
  network is modeled by changes in the network topology.
Over time several approaches to modeling dynamic connectivity in
networks have been proposed.} 
We will in the following focus on two lines of research that are closest to ours:
work that models directly the underlying (evolving)
communication graph, and approaches taken in the context of consensus. 

\onlyLong{\subsection*{Evolving graph models}}
There is a rich body of literature on dynamic graph models going
     back to~\cite{HG97:Dyn}, which also mentions for the first time
     modeling a dynamic graph as a sequence of static graphs, as we do.
A survey on dynamic networks can be found in \cite{kuhn-survey}.
Recently, Casteigts et al.~\cite{CFQS11:TVG} have introduced a
     classification of the assumptions about the temporal properties of time
     varying graphs.
\onlyShort{In the full paper \cite{BRS2012:arxiv}, we}\onlyLong{We will 
(cf.\ Lemma~\ref{lem:classification})} show that our assumption falls 
between two of
the weakest classes considered, as we can only guarantee
one-directional reachability.\footnote{Here, reachability does not refer to the
  graph-theoretic concept of reachability (in a single graph), but rather to the ability to
  eventually communicate information to another process (which is a
  property of the sequence of communication graphs.}
We are not aware of other work considering such weak assumptions
in the context of agreement.

Closest to our own work is that of Kuhn et al.~\cite{KOM11}, who also
     consider agreement problems in dynamic networks based on the 
model of~\cite{kuhn+lo:dynamic}. 
This model is based on distributed computations organized in lock-step
rounds, and states assumptions on the connectivity in each round
as a separate (round) communication graph.
While the focus of \cite{kuhn+lo:dynamic} is the complexity of
aggregation problems in dynamics networks, \cite{KOM11} focuses on
agreement problems; more specifically 
on the $\Delta$-coordinated consensus problem, which extends consensus by requiring all processes to decide within
$\Delta$ rounds of the first decision. 
In both papers, only undirected graphs that are \emph{connected in every round}
are considered. In terms of the classes of~\cite{CFQS11:TVG}, the model of
\cite{kuhn+lo:dynamic} is in one of the strongest classes (Class~10)
in which every process is always reachable by every other process. 
Since node failures are not considered, solving consensus is
always %
possible in this model without additional assumptions, thus the focus
of~\cite{KOM11} is on $\Delta$-coordinated consensus problem (all processes are required to
decide within $\Delta$ rounds) and its time complexity.
In sharp contrast to this work, our communication graphs are 
\emph{directed}, and our rather weak connectivity assumptions do not
guarantee bidirectional (multi-hop) communication between all
processes.

\onlyLong{Dynamic networks have been studied in \cite{p2p-soda}
in the context of ``stable almost-everywhere agreement'' (a variant of the ``almost
everywhere agreement'' problem introduced by \cite{DPPU88}), which weakens the
classic consensus problem in the sense that a small linear fraction of processes
might remain undecided. In the model of \cite{p2p-soda},
the adversary can subject up to a linear fraction of nodes to churn per round;
assuming that the network size remains stable, this means that up to
$\varepsilon n$ nodes (for some small $\varepsilon>0$) can be replaced by new
nodes in every round.  Moreover, changes to the undirected graph of the
network are also under the control of the adversary. To avoid almost-everywhere
agreement from becoming trivially unsolvable, \cite{p2p-soda} assumes that the
network is always an expander.

The leader election problem in dynamic networks has been studied in
\cite{CRW11:FOMC} where the adversary controls the mobility of nodes in a
wireless ad-hoc network. This induces dynamic changes to the (undirected)
network graph in every round and requires any leader election algorithm to
take $\Omega(D n)$ rounds in the worst case, where $D$ is an upper bound on
information propagation.
}

\onlyLong{\subsection*{Transmission Failure Models}}
Instead of considering a dynamic graph that defines which
     processes communicate in each round, an alternative approach
      is based on the (dual) idea of assuming a fully connected
     network of (potential) communication, and considering that
     communication/message transmission in a round can fail. 
The notion of transmission failures was introduced by Santoro and
     Widmayer~\cite{SW89}, who assumed dynamic transmission failures
     and showed that $n-1$ dynamic transmission
     failures in the benign case (or $n/2$ in case of 
     arbitrary transmission failures) render any non-trivial agreement 
     impossible. 
As it assumes unrestricted transmission failures (the 
(only) case considered in their proof are failures that
affect all the transmissions of a \emph{single} process), 
it does not apply to any model which considers
perpetual mutual reachability of processes (e.g.,~\cite{KOM11}).

The HO-model~\cite{CBS09} is also based on transmission failures. It relies 
on the collection of sets of processes a process \emph{hears
     of} (i.e., receives a message from) in a round.
Different system assumptions are modeled by predicates over this
     collection of sets. The HO-model is totally
oblivious to the actual reason \emph{why} some process does not 
hear from another one: Whether the sender committed a send omission or crashed, 
the message was
     lost during transmission or is simply late, or the receiver committed a
receive omission. 
A version of the model also allowing communication to be corrupted is presented
in~\cite{BCGHSW07:PODC}. Indeed, the HO-model is very close to our
graph model, as an edge from $p$ to $q$ in the graph of round $r$
corresponds to $p$ being in the round $r$ HO set of $q$. 
 
The approach taken by Gafni~\cite{Gaf98} has some similarities to the
     HO-model (of which it is a predecessor), but is more focused on
     process failures than the two approaches above.
Here an oracle (a round-by-round failure detector) is considered to tell processes the set
     of processes they will be not be able to receive data from in the
     current round. 
Unlike the approaches discussed above, it explicitly states how
     rounds are implemented; nevertheless, the oracle abstracts
     away the actually reason for not receiving a message.  So, like in 
     the HO-model, the same device is used to describe failures and 
     (a)synchrony. 

Another related model is the perception based
failure model~\cite{SWK09,BSW11:hyb}, which uses a sequence of
perception matrices (corresponding to HO sets) to express failures
of processes and links. As for transmission failures, the
impossibility result of Santoro and Widmayer is circumvented
by putting separate restrictions on the number of outgoing and
     incoming links that can be affected by transmission
     failures~\cite{SWK09}. 
Since transmission failures are counted on a per process/per round
     basis, agreement was shown to be possible in the presence of $O(n^2)$
     total transmission failures per round. 

The approach we used in \cite{BRS11:IPDPS} relied on restricting
the communication failures per round in a way that secures the existence
of a \emph{static} skeleton graph (which exists in all rounds). 

\section{Model and Preliminaries}
\label{sec:model}

We consider synchronous computations of a dynamic network of a fixed set of
      distributed processes $\Pi$ with $|\Pi|=n\geq 2$. 
Processes can communicate with their current neighbors in the network
      by sending messages taken from some finite message alphabet
      $\M$. 

\onlyLong{In the following three subsections, we will present our 
  computational
model and define what it means to solve consensus in this model.
In \Section~\ref{sec:connprops}, we will introduce
constraints that make consensus solvable in our model.
}

\onlyLong{\subsection{Computational Model.}}

Similar to the $\mathcal{LOCAL}\/$ model \cite{Pel00}, we assume that processes organize their computation as an infinite
sequence of lock-step rounds. For every $p\in\Pi$ and each round
$r>0$, let $S_p^r \in \S_p$ be the state of $p$ at the beginning of
round $r$ taken from the set of all states $p$ can possibly enter; the
initial state is denoted by $S_p^1\in \S_p^1\subset\S_p$, with
$\S_p^1$ being the set of potential initial states of $p$. The round $r$ 
computation of process $p$ is determined by the following two functions 
that make up $p$'s algorithm: The message sending function $M_p:\S_p \to 
\M$ determines the message $m_p^r$ broadcast by $p$ in round $r$, based on 
$p$'s state $S_p^{r}$ at the beginning of round $r$.  We assume that some 
(possibly empty) message is broadcast in every round, to all
(current!) out-neighbors of $p$.
The neighborhood of a process in round $r$ depends solely on the underlying
communication graph of round $r$\onlyShort{.}\onlyLong{, which we define 
in \Section~\ref{sec:comm-concepts}.} Note that the only part of the
round $r$ communication graph $p$ can (directly) observe is the set of
its in-neighbors.
The transition function $T_p:\S_p\times 2^{(\Pi\times\M)} \to \S_p$ takes $p$'s state 
$S_p^{r}$ at the beginning of round $r$ and a set of pairs of process
ids and messages $\mu_p^r$. This set represents the round $r$ 
messages\footnote{We only consider messages sent in round $r$ here, so we assume
      communication-closed~\cite{EF82} rounds.}
received by $p$ from other processes in the system, 
and computes the successor state $S_p^{r+1}$. 
We assume that, for each process $q$, there is at most one
$(q,m_q^r)\in\mu_p^r$ such that $m_q^r$ is the message $q$ sent in round $r$.
Note that neither $M_p$ nor $T_p$ need to involve $n$, i.e., the algorithms
executed by the processes may be uniform with respect to the network size $n$.

\onlyLong{\subsection{Consensus}}

To formally introduce the consensus problem, we assume
some ordered set $V$ and consider the set
     of possible initial states $\S_p^1$ (of process $p$) to be
     partitioned into $|V|$ subsets $\S_p^1[v]$, with $v\in V$.
When $p$ starts in a state in $\S_p^1[v]$, we say that $v$ is $p$'s
input value, denoted $v_p=v$. Moreover, we assume that, for each $v\in
V$, there is a set $\D_p[v]\subset\S_p$ of decided states such that
$\D_p[v]\cap\D_p[w]=\emptyset$ if $v\ne w$ and $\D_p[w]$ is closed under
$p$'s transition function, i.e., $T_p$ maps any state in this subset
to this subset (for all sets of messages). We
say that $p$ has \emph{decided} on $v$ when it is in some state in $\D_p[v]$. When
$p$ performs a transition from a state outside of the set of decided
states to the set of decided states, we say that $p$ \emph{decides}.
We say that an \emph{algorithm
$\mathcal{A}$ solves consensus} if the following properties hold in every run of
$\mathcal{A}$:
\begin{description}%
\item[Agreement:] If process $p$ decides on $x_p$ and $q$ decides on
  $x_q$, then $x_p=x_q$.
\item[Validity:] If a process decides on $v$, then $v$ was proposed
    by some $q$, i.e., $v_q=v$.
\item[Termination:] Every process must eventually decide. 
\end{description}

At a first glance, solving consensus might appear easier in our 
model than in the classic crash failure model, where processes simply stop executing
the algorithm. This is not the case, however, as we can model
(similarly to~\cite{CBS09}) crash failures in the following way: when
process $q$ crashes in round $r$ and takes no more steps no other process
ever receives messages from $q$ after $r$, which is
equivalent to considering that $q$ does not have any outgoing edges from round $r+1$
on. While $q$ itself can still receive messages and perform computations, the
remaining processes are not influenced by $q$ from round $r$
on.

\onlyLong{\subsection{Communication Model}
\label{sec:comm-concepts}
}
\onlyShort{ \subsubsection*{Communication Model.}}

\tikzstyle{p}=[circle,draw=gray,fill=lightgray!30,thick,inner sep=0pt,minimum size=4mm]
\tikzstyle{p6}=[p,double]
\tikzstyle{link}=[->,black,thick,auto]
\tikzstyle{fail}=[->,black,thick,densely dotted,auto]
\tikzstyle{i}=[draw=none,opacity=0]
\newcommand{\showproc}[5]{%
  \node[#1]  (p1)               {$p_1$};
  \node[#3]  (p3) [right=of p1] {$p_3$};
  \node[#5]  (p5) [right=of p3] {$p_5$};
  \node[#4]  (p4) [below=of p3] {$p_4$};
  \node[#2]  (p2) [above=of p3] {$p_2$};

%  \draw[red,use as bounding box] (p2.north east) -- +(2ex,4ex) -- (p4.south west) -- +(-2ex,-7ex);
   
  \begin{pgfonlayer}{background}
  \draw[draw=none,fill=lightgray!20] (current bounding box.south west) rectangle (current bounding box.north east);
  \end{pgfonlayer}
}  
\setlength{\textfloatsep}{\baselineskip}
\begin{figure*}[t]
  \tiny
\centering{
%\begin{tabular*}{\textwidth}{p{.25\textwidth}|@{\hspace{.05\textwidth}}p{.75\textwidth}}
%  \begin{minipage}[t]{0.20\textwidth}
\subfloat[$\Ga^1$]{%
\begin{tikzpicture}
  \showproc{p}{p}{p}{p}{p}

  \draw[link] (p1) to [bend left] (p2);
  \draw[link] (p2) to [bend left] (p1);
  \draw[link] (p4) to (p1);
  \draw[link] (p4) to (p5);
  \draw[link] (p2) to (p3);
  \draw[link] (p5) to (p2);
\end{tikzpicture}
\label{fig:g1}
}\ 
\subfloat[$\Ga^2$]{%
\begin{tikzpicture}
  \showproc{p}{p}{p}{p}{p}

  \draw[link] (p1) to (p2);
  \draw[link] (p2) to (p3);
  \draw[link] (p4) to (p1);
  \draw[link] (p4) to (p5);
\end{tikzpicture}
\label{fig:g2}
}\ 
\subfloat[$\Ga^3$]{%
\begin{tikzpicture}
  \showproc{p}{p}{p}{p}{p}

  \draw[link] (p2) to (p1);
  \draw[link] (p3) to (p1);
  \draw[link] (p5) to (p3);
  \draw[link] (p5) to (p2);
  \draw[link] (p3) to (p4);
\end{tikzpicture}
\label{fig:g3}
}
%\end{minipage}
%\end{tabular*}
\caption{
\label{fig:graphs}%
}
} % centering
\end{figure*}

The evolving nature of the network topology is modeled as an infinite
      sequence of simple directed graphs $\G^1,\G^2,\dots$, which is fixed 
      by an adversary having access to the processes' states.
For each round $r$, we denote the \emph{round $r$ communication graph} by
      $\Gr=\li{V,\Er}$, where each node of the set $V$ is associated
      with one process from the set of processes $\Pi$, and where
      $\Er$ is the set of directed edges for round $r$, such that
there is an edge from $p$ to $q$, denoted as $(p\ra q)$,
      \textit{iff} $q$ receives $p$'s round $r$ message (in round $r$).
\onlyLong{
Figure~\ref{fig:graphs} shows a sequence of graphs for a network of $5$
processes, for rounds $1$ to $3$.
}
For any (sub)graph $G$, we will use the notation $V(G)$ and $E(G)$ to
refer to the set of vertices respectively edges of $G$, i.e., it
always holds that $G=\li{V(G),E(G)}$.

\onlyLong{
Note that, 
for deterministic algorithms, a run is completely determined by the input 
values assigned to the processes and the sequence of communication graphs.
}

To simplify the presentation, we will denote a process
      and the associated node in the communication graph by the same
      symbols and omit the set from which it is taken if there is no
      ambiguity.  
We will henceforth write $p \in \Gr$ and $(p\ra q) \in \Gr$ 
      instead of $p \in V(\Gr)$
      resp.\ $(p\ra q) \in \Er$.

The \emph{neighborhood of $p$ in round $r$} is the set of processes $\Timely_p^r$ that $p$
receives messages from in round $r$, formally,
$\Timely_p^r = \set{q \mid (q\ra p) \in \Ga^r }$.

Similarly to the classic notion of ``happened-before'' introduced in \cite{Lam78}, we 
say
that a process $p$ \emph{(causally) influences process $q$ in round $r$}, 
expressed by $(p\ltedge{r}q)$ or just $(p\lt q)$ if $r$ is clear from the context, iff  either
(i) $p \in \Timely_q^r$, or (ii) if $q=p$. %
We say that there is a \emph{(causal) chain of length $k\geq 1$ starting from $p$ in round
$r$ to $q$}, denoted by $(p\ltedge{r[k]}q)$.
if there exists a sequence of not necessarily distinct processes
$p=p_0,\dots,p_{k}=q$ such that $p_i\ltedge{r+i}p_{i+1}$, for
all $0\le i <k$. 
Conversely, we write $(p\not\ltedge{r}q)$ or simply $(p\not\lt q)$
when there is no such $k$.

The \emph{causal distance} $\dist_r(p,q)$ at round $r$ from
process $p$ to process $q$ is the length of the shortest causal
chain starting in $p$ in round $r$ and ending in 
$q$, formally, 
$$\dist_r(p,q) := \min(\{k \mid (p\ltedge{r[k]}q)\} \cup \{\infty\}).$$
Note that we have $\dist_r(p,p)=1$.
The following Lemma~\ref{lem:cd} shows that the causal distance in successive rounds
cannot arbitrarily decrease.

\begin{lemma}[Causal distance in successive rounds]\label{lem:cd}
For every round $r\geq 1$ and every two processes $p,q\in\Pi$, 
it holds that $\dist_{r+1}(p,q) \geq \dist_r(p,q)-1$. As a
consequence, if $\dist_r(p,q)=\infty$, then also $\dist_{r+1}(p,q)=\infty$.
\end{lemma}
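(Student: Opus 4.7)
The plan is to prove the contrapositive/equivalent form $\dist_r(p,q) \leq \dist_{r+1}(p,q) + 1$, which immediately rearranges to the desired bound. The intuition is that whenever there is a short causal chain starting one round later, we can always obtain a causal chain starting one round earlier that is only one hop longer, by prepending a ``trivial'' self-step $p \lt p$ at round $r$.

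More concretely, if $\dist_{r+1}(p,q) = \infty$ the inequality is vacuous, so assume $\dist_{r+1}(p,q) = k$ is finite. By definition of causal distance, there exists a sequence $p = p_0, p_1, \ldots, p_k = q$ of processes with $p_i \ltedge{r+1+i} p_{i+1}$ for every $0 \leq i < k$. I would then define an extended sequence $p_{-1} = p, p_0, p_1, \ldots, p_k$ of length $k+2$, and argue the following two things: first, $p_{-1} \ltedge{r} p_0$ holds by clause (ii) of the definition of causal influence, because $p_{-1} = p_0 = p$; second, for every $0 \leq i < k$ we still have $p_i \ltedge{r+1+i} p_{i+1} = p_{(-1)+1+(i+1)}$, i.e., indexing from $p_{-1}$ starting at round $r$ the hops occur at rounds $r, r+1, \ldots, r+k$ as required. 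Hence $(p \ltedge{r[k+1]} q)$, witnessing $\dist_r(p,q) \leq k+1 = \dist_{r+1}(p,q) + 1$, which gives the main inequality.

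The consequence then follows immediately: if $\dist_r(p,q) = \infty$, the inequality $\dist_{r+1}(p,q) \geq \dist_r(p,q) - 1$ forces $\dist_{r+1}(p,q) = \infty$ as well (since any finite value on the right-hand side would be exceeded by $\infty - 1 = \infty$).

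There is no real obstacle here; the only point that needs care is to be explicit about the indexing of the causal chain and to recall that the definition of causal influence explicitly allows the reflexive case $p \lt p$, which is exactly what enables the prepended step at round $r$ to be legal regardless of which in-neighbors $p$ happens to have in that round.
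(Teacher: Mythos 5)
Your proof is correct and uses exactly the same idea as the paper's one-line argument: prepend the reflexive step $p \lt p$ at round $r$ to a shortest chain starting in round $r+1$, giving $\dist_r(p,q) \leq 1 + \dist_{r+1}(p,q)$. You have merely spelled out the indexing that the paper leaves implicit.
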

\onlyLong{
\begin{proof}
Since $(p\lt p)$ in every round $r$, the definition of 
causal distance trivially implies $\dist_r(p,q) \leq 1 + \dist_{r+1}(p,q)$.\qed
\end{proof}
}

Note that, in contrast to the similar notion of dynamic distance defined in
     \cite{kuhn-survey}, the causal distance in \emph{directed} graphs
     is not necessarily symmetric. Moreover, 
if the adversary chooses the graphs $\G^r$ such that not all
     processes are strongly connected, the causal distance between two
     processes can even be finite in one and infinite in the other direction.
In fact, even if $\G^r$ is strongly connected for round $r$ (but not
     for rounds $r'>r$), $\dist_r(p,q)$ can be infinite.
We will not consider the whole communication graph to be
strongly connected in this paper, we make use of the notation of \emph{strongly connected
components (SCC)}. 
We write $\Cpr$ to denote the unique SCC of $\Ga^r$
that contains process $p$ in round $r$ or simply $\C^r$ if $p$ is irrelevant.

It is apparent that $\dist_r(p,q)$ and $\dist_r(q,p)$ may be infinite
even if $q\in\Cpr$.
In order to be able to argue (meaningfully) about the maximal length of causal
chains within an SCC, we thus introduce a ``continuity property'' over rounds.
This leads us to the crucial concept of an
\emph{$I$-vertex-stable strongly connected component}, denoted 
as $\C^I$: an SCC $\C^I$ is vertex-stable during $I$ requires that 
$\forall p\in\C^I, \forall r\in I: V(\Cpr)=V(\C^I)$.
That is it requires that the set of vertices of a strongly connected 
component $\C^I$ remains stable throughout all rounds in the nonempty interval $I$.  
Note that its topology may undergo changes, but must form an SCC in every 
round.  Formally, $\C^I$ being vertex-stable during $I$ requires that 
$\forall
p\in\C^I, \forall r\in I: V(\Cpr)=V(\C^I)$.
The important property of $\C^I$ is that information is guaranteed to
spread to all vertices of $\C^I$ if the interval $I$ is large
enough (cf.\ Lemma~\ref{lem:infprop} below).

Let the \emph{round $r$ causal diameter} $D^r(\C^I)$ of a
vertex-stable SCC $\C^I$ be the largest causal distance $\dist_r(p,q)$
for any $p,q\in\C^I$. The \emph{causal diameter} $D(\C^I)$ of
a vertex-stable SCC $\C^I$ in $I$ is the largest causal distance 
$\dist_x(p,q)$
starting at any round $x\in I$ that ``ends'' in $I$, i.e., 
$x+\dist_x(p,q)-1\in I$.
If there is no such causal distance (because $I$ is too short), $D(\C^I)$ is 
assumed to be infinite. Formally, for $I=[r,s]$ with $s\geq r$,\footnote{Since
$I$ ranges from the beginning of $r$ to the end of $s$, we define
$|I|=s-r+1$.}
\begin{align*}
  D(\C^I) = \min\{\max\{D^x(\C^I) \mid\ &\text{$x\in[r,s]$ and} \\
  &\text{$x+D^x(\C^I)-1\leq s$}\} \cup \{\infty\} \}.
\end{align*}

If $\C^I$ consists only of one process, then we obviously have 
$D(\C^I)=1$.
The following Lemma~\ref{lem:boundDI} establishes a bound for $D(\C^I)$
also for the general case.

\begin{lemma}[Bound on causal diameter]
  \label{lem:boundDI}
Given some $I=[r,s]$ and $\C^I$ a vertex-stable SCC with $|\C^I|\geq 2$:
If $s\geq r+|\C^I|-2$, then $D(\C^I)\leq|\C^I|-1$. 
\end{lemma}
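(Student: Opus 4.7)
My plan is to argue by tracking how a ``causally reachable'' frontier grows inside $\C^I$ when we start a chain at the first round $x = r$ of the interval. Concretely, for any fixed source $p\in\C^I$ define
\[
  R_j \;=\; \{\, q \in \C^I \;:\; \dist_r(p,q) \leq j\,\}.
\]
Since $\dist_r(p,p)=1$ and, by vertex-stability, every out-neighbor of $p$ inside $\C^I$ in round $r$ is also in $\C^I$, we have $p \in R_1$ (and in fact $|R_1|\geq 2$, because a strongly connected component on $\geq 2$ vertices forces $p$ to have at least one out-neighbor in $\Ga^r$ restricted to $V(\C^I)$). My target is the bound $|R_j|\geq \min(|\C^I|,\, j+1)$ for all $j$ with $r+j-1 \leq s$, from which the lemma follows by plugging in $j = |\C^I|-1$.

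The inductive step is the core of the argument. Suppose $R_j \subsetneq \C^I$ and $r+j \leq s$. Vertex-stability of $\C^I$ during $I$ means that for every round $x' \in I$ and every $q \in \C^I$ one has $V(\C_q^{x'}) = V(\C^I)$, so the edges of $\Ga^{x'}$ induced on $V(\C^I)$ form a strongly connected subgraph. Applied to round $x' = r+j \in I$, this forces the existence of at least one edge $(a \to b) \in \Ga^{r+j}$ with $a \in R_j$ and $b \in \C^I \setminus R_j$, because in a strongly connected digraph any proper nonempty vertex subset has outgoing edges to its complement. Prepending a length-$j$ causal chain from $p$ to $a$ (which exists by definition of $R_j$) with the edge $(a \to b)$ used in round $r+j$ yields a causal chain of length $j+1$ from $p$ to $b$, so $b \in R_{j+1}$. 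Hence $|R_{j+1}| \geq |R_j| + 1$, completing the induction.

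Now I apply the claim with $j = |\C^I|-1$: the hypothesis $s \geq r + |\C^I| - 2$ ensures that all rounds $r, r+1, \ldots, r + (|\C^I|-1) - 1$ used in the induction lie in $I$, so $R_{|\C^I|-1} = \C^I$. This means $\dist_r(p,q) \leq |\C^I|-1$ for every $q \in \C^I$, and taking the maximum over the (also arbitrary) source $p$ gives $D^r(\C^I) \leq |\C^I|-1$. Finally, $r + D^r(\C^I) - 1 \leq r + |\C^I| - 2 \leq s$, so $x = r$ is an admissible choice in the definition of $D(\C^I)$, yielding $D(\C^I) \leq D^r(\C^I) \leq |\C^I|-1$.

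The only real subtlety is making sure the round indices line up with the definition of causal distance (a chain of length $k$ starting at round $r$ consumes rounds $r, \ldots, r+k-1$, whereas the ``new'' edge added when extending $R_j$ to $R_{j+1}$ lives in round $r+j$); this is where the precise constant in the hypothesis $s \geq r + |\C^I|-2$ (as opposed to $s \geq r + |\C^I| - 1$) comes from, and it is the one spot I would double-check carefully when writing out the formal proof.
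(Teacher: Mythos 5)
Your frontier-growth induction is exactly the paper's argument: the paper defines $\P_i$ as the set of processes within causal distance $i$ of $p$ (your $R_i$) and uses strong connectivity of the round-$(r'+i)$ graph induced on $V(\C^I)$ to force $|\P_{i+1}|\geq|\P_i|+1$, just as you do. That part of your proof is correct, including the round-index bookkeeping you flag at the end (padding a shorter chain to length exactly $j$ via self-influence is harmless and standard).

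The gap is in the very last inference. You write that since $x=r$ is an admissible starting round, $D(\C^I)\leq D^r(\C^I)$. This treats $D(\C^I)$ as a \emph{minimum} over admissible starting rounds, but it is defined (see the prose: ``the \emph{largest} causal distance $\dist_x(p,q)$ starting at any round $x\in I$ that ends in $I$'') as the \emph{maximum} of $D^x(\C^I)$ over all admissible $x$; the outer $\min\{\cdot\cup\{\infty\}\}$ in the displayed formula only handles the case where no $x$ is admissible. So bounding $D^r(\C^I)$ alone does not bound $D(\C^I)$: you must bound $D^{x}(\C^I)$ for \emph{every} admissible $x\in[r,s]$. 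This is precisely why the paper runs the same induction for an arbitrary starting round $r'$ with $r\leq r'\leq s-|\C^I|+2$ rather than only for $r'=r$. The repair is cheap: your induction works verbatim with $r$ replaced by any such $r'$ (the condition $s\geq r'+|\C^I|-2$ is exactly what is needed for the rounds it consumes to lie in $I$), and for any admissible $x> s-|\C^I|+2$ the admissibility condition $x+D^x(\C^I)-1\leq s$ already forces $D^x(\C^I)\leq s-x+1<|\C^I|-1$. Note also that the uniform statement over all starting rounds up to $s-|\C^I|+2$ is what the paper actually uses downstream, e.g.\ Lemma~\ref{lem:infprop}(ii) needs the bound at the \emph{latest} such round, not the earliest.
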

\onlyLong{
\begin{proof}
Fix some process $p\in\C^I$ and some $r'$ where $r\leq r' \leq 
s-|\C^I|+2$.  Let $\P_0=\set{p}$, and define for each $i>0$ the set
$\P_i=\P_{i-1} \cup \{q: \exists q'\in\P_{i-1}: q'\in\N_{q}^{r'+i-1}
\cap \C^I\}$. 
$\P_i$ is hence the set of processes $q\in\C^I$ such that $(p\ltedge{r'[i]}q)$ holds. 
Using induction, we will show that  $|\P_k| \geq \min\{|\C^I|,k+1\}$
for $k\geq0$. Induction base $k=0$: $|\P_0| \geq \min\{|\C^I|,1\}=1$ follows immediately from
$\P_0=\set{p}$.
Induction step $k \to k+1$, $k\geq 0$:  
Clearly the result holds if $|\P_k|=|\C^I|$, thus we
consider round $r'+k$ and $|\P_k|<|\C^I|$: 
It
follows from strong connectivity of $\Ga^{r'+k} \cap \C^I$ that there is a set of
edges from processes in $\P_k$ to some non-empty set
$\L_k\subseteq\C^I\setminus\P_k$. Hence, we have
$\P_{k+1}=\P_k\cup\L_k$, which implies $|\P_{k+1}| \geq |\P_k| + 1 \geq k+1 + 1 = k+2 = \min\{|\C^I|,k+2\}$
by the induction hypothesis.

Thus, in order to guarantee $\C^I=\P_{k}$ and thus $|\C^I|=|\P_{k}|$, choosing
$k$ such that $|\C^I|=1+k$ and $k\le s-r'+1$ is sufficient. 
Since $s\ge r'+|\C^I|-2$, both conditions can be fulfilled by choosing $k=|\C^I|-1$.
Moreover, due to the definition of $\P_k$, it follows that $\dist_{r'}(p,q)\le|\C^I|-1$
for all $q\in\C^I$. Since this holds for any $p$ and any $r'\leq s-|\C^I|+2$, 
we finally obtain $|\C^I|-1\ge D^{r'}(\C^I)$ and hence $|\C^I|-1\ge
D(\C^I)$, which completes the proof of Lemma~\ref{lem:boundDI}.

\end{proof}
}

Given this result, it is tempting to assume that, for a
vertex-stable SCC $\C^I$ with finite causal diameter $D(\C^I)$, any information propagation
that starts at least $D(\C^I)-1$ rounds before the final round of $I$ will reach
all processes in $\C^I$ within $I$. This is not generally true, however, as
the following example for $I=[1,3]$ and a vertex-stable SCC of four processes shows: If
$\Ga^1$ is the complete graph whereas $\Ga^2=\Ga^3$ is a ring, $D(\C^I)=1$,
but information propagation starting at round 2 does not finish by the end
of round $3$. However, the following Lemma~\ref{lem:infprop} gives a bound on
the latest starting round that guarantees this property.

\begin{lemma}[Information propagation]\label{lem:infprop}
Suppose that $\C^I$ is an $I$-vertex-stable strongly connected component 
of size $\ge 2$ that has $D(\C^I)<\infty$, for $I=[r,s]$, and let $x$ be 
the maximal round where $x+D^x(\C^I)-1 \leq s$.  Then, 
\begin{compactitem}
\item[(i)] for every $x'\in [r,x]$, it holds that $x'+D^{x'}(\C^I)-1\leq 
  s$ and $D^{x'}(\C^I)\leq D(\C^I)$ as well, and
\item[(ii)] $x\geq \max\{s-|\C^I|+2,r\}$.
\end{compactitem}
\end{lemma}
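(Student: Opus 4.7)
The plan is to prove the two parts in turn, using Lemma~\ref{lem:cd} for (i) and the proof machinery of Lemma~\ref{lem:boundDI} for (ii).

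For (i), I would argue that because $x' \leq x$, any causal chain starting at round $x$ can be ``prepended'' with $x-x'$ self-links (using $p\lt p$) to yield a chain starting at $x'$. Concretely, iterating Lemma~\ref{lem:cd} a total of $x-x'$ times gives $\dist_{x'}(p,q) \leq \dist_x(p,q) + (x-x')$ for every pair $p,q\in\C^I$. Taking the maximum over $p,q\in\C^I$ yields $D^{x'}(\C^I) \leq D^x(\C^I) + (x-x')$, so
\[
x' + D^{x'}(\C^I) - 1 \;\leq\; x + D^x(\C^I) - 1 \;\leq\; s,
\]
where the last inequality holds by the defining property of $x$. This shows that $x'$ is itself a valid starting round, and then $D^{x'}(\C^I) \leq D(\C^I)$ follows immediately from the definition of $D(\C^I)$ as the largest $D^y(\C^I)$ over valid $y$.

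For (ii), the bound $x \geq r$ is immediate: since $D(\C^I)<\infty$ by assumption, the set of rounds $y\in[r,s]$ with $y + D^y(\C^I) - 1 \leq s$ is nonempty, so its maximum $x$ lies in $[r,s]$. For $x \geq s-|\C^I|+2$, I would split into two cases. If $s-|\C^I|+2 \leq r$, the claim follows from $x\geq r$. Otherwise $s \geq r + |\C^I|-2$, and I set $r' = s-|\C^I|+2$, so that $r \leq r' \leq s-|\C^I|+2$. The construction inside the proof of Lemma~\ref{lem:boundDI} (not just the stated conclusion) shows that for any starting round in this range and any source process, the reachable set $\P_{|\C^I|-1}$ already equals $\C^I$; hence $D^{r'}(\C^I) \leq |\C^I|-1$, which gives $r' + D^{r'}(\C^I) - 1 \leq s$. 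Thus $r'$ is a valid starting round and $x \geq r'$.

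I do not anticipate any serious obstacle here. The only care required is (a) using Lemma~\ref{lem:cd} iteratively across the $x-x'$ rounds between $x'$ and $x$ in part (i), and (b) invoking the per-round guarantee hidden in the proof of Lemma~\ref{lem:boundDI} (rather than merely the global bound on $D(\C^I)$ it states) in order to exhibit $r' = s-|\C^I|+2$ as a valid starting round in part (ii).
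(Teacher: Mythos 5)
Your proof is correct and follows essentially the same route as the paper: part (i) by iterating Lemma~\ref{lem:cd} from $x$ down to $x'$ to show every earlier round is also a valid starting round, and part (ii) by appealing to Lemma~\ref{lem:boundDI}. You are merely more explicit than the paper (which states that (ii) ``follows immediately''), and rightly so: one needs the per-round bound $D^{r'}(\C^I)\leq|\C^I|-1$ established inside that lemma's proof (or, equivalently, an application of the lemma to the subinterval $[s-|\C^I|+2,s]$) to certify $s-|\C^I|+2$ as a valid starting round, since the maximality of $x$ only yields an upper bound otherwise.
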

\onlyLong{
\begin{proof}
Since $D(\C_I)<\infty$, the maximal round $x$ always exists.  
Lemma~\ref{lem:cd} reveals that for all $p,q \in \C^I$, we have
$x-1+\dist_{x-1}(p,q)-1 \leq x+\dist_x(p,q)-1\leq s$, which implies 
$x'+\dist_{x'}(p,q)-1\leq s$ for every $x'$ where $r\leq x' \leq x$ and 
proves (i).  The bound given in (ii)
follows immediately from Lemma~\ref{lem:boundDI}.
\end{proof}
}

Since we will frequently require a vertex-stable SCC $\C^I$ that guarantees 
bounded information propagation also for late starting rounds, we introduce the
following Definition~\ref{def:goodD}.

\begin{definition}[\goodD\ $I$-vertex-stable SCC]
  \label{def:goodD}
An $I$-vertex-stable SCC $\C^I$ with $I=[r,s]$ 
  is \emph{\goodD} if $D\geq D^I(\C^I)$ and $D^{s-D+1}(C^I)\leq D$.
\end{definition}

\section{Required Connectivity Properties}
\label{sec:connprops}

Up to now, we did not provide any guarantees on the connectivity of the 
network, the lack of which makes consensus
trivially impossible.\onlyLong{\footnote{The adversary can simply choose 
    the empty set for
the set of edges in every round.}} In this section, we will add some weak
constraints on the adversary that circumvent this impossibility. Obviously,
we want to avoid requesting too strong properties of the network topology (such 
as stating that $\G^r$ is strongly connected in every round $r$), as
this would not only make consensus trivially solvable but as we have argued
would reduce the applicability of our results in wireless networks.

As a first attempt, we could assume that, in every round $r$, the communication
graph $\G^r$ is weakly connected. This, however, turns out to be insufficient. Even
if the adversary choses a \emph{static} topology, it is easy to see that
consensus remains impossible: Consider for example the graph that is partitioned
into $3$ strongly connected components $\C_0$, $\C_1$, and $\C_2$ such that
there are only outgoing edges from $\C_0$ respectively $\C_1$ pointing to
$\C_2$, whereas $\C_2$ has no outgoing edges. If all processes in $\C_0$ start
with $0$ and all processes in $\C_1$ start with $1$, this  yields a
contradiction to agreement: For $i \in \set{0,1}$, processes in $\C_i$ can never
learn the value $1-i$, thus, by an easy indistinguishability argument, it
follows that processes in $\C_0$ and $\C_1$ must decide on conflicting values.

In order to define constraints that rule out the existence of $\C_0$ and 
$\C_1$ as above, the concept of \emph{root components} proves useful:
Let $\R^r \subseteq \G^r$ be an SCC that has no
  incoming edges from any $q \in \G^r\setminus \R^r$. We say that $\R^r$ is a
  \emph{root component} in round $r$, formally:
  $$\forall p \in \R^r\ \forall q \in \Ga^r\colon (q~\ra~p) \in \Ga^r 
  \Rightarrow q \in
  \R^r.$$
\onlyLong{
For example, in Figure~\ref{fig:g1}, process $p_4$ forms a root component by
itself, while processes $p_1$ and $p_2$ form a SCC that is not a root
component since it has incoming edges.
}

The following Lemma~\ref{lem:root} establishes some simple facts on $\Gr$.

\begin{lemma}\label{lem:root}
Any $\Gr$ contains at least one and at most $n$ root components (isolated processes), 
which are all disjoint.
If $\Gr$ contains a single root component $\R^r$, then 
$\Gr$ is weakly connected, and there is in fact a directed 
(out-going) path from every $p\in \R^r$ to every $q \in \Gr$.
\end{lemma}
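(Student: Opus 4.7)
The plan is to argue everything via the condensation DAG $H$ of $\Gr$, whose vertices are the strongly connected components of $\Gr$ and whose edges are inherited from $\Gr$ in the obvious way. A key observation is that, by the definition of a root component, the root components of $\Gr$ are exactly the \emph{sources} of $H$ (vertices with in-degree $0$).

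First I would establish the counting and disjointness claims. Since the SCCs of a directed graph partition its vertex set, any two distinct root components are disjoint. The graph $H$ is a finite DAG, so it has at least one source, giving at least one root component; at the other extreme, if $\Gr$ has no edges at all then every vertex is its own SCC and its own root component, yielding $n$. No configuration can exceed $n$, since each root component contains at least one vertex and they are pairwise disjoint.

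Next I would prove weak connectivity under the single-root assumption. Suppose for contradiction that $\Gr$ has a unique root component $\R^r$ but is not weakly connected. Then $\Gr$ decomposes into at least two weakly connected components $W_1, W_2, \dots$, and the induced subgraph on each $W_i$ is itself a finite directed graph whose condensation therefore contains at least one source, i.e., an SCC of $\Gr|_{W_i}$ with no in-edges \emph{within} $W_i$. Because no edges run between different $W_i$'s, such a source is also a root component of $\Gr$, so each $W_i$ contributes at least one root component, contradicting the uniqueness of $\R^r$.

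Finally I would establish the reachability claim. Fix any $q\in\Gr$ and let $\Comp^r_q$ be its SCC, viewed as a vertex of $H$. Starting from $\Comp^r_q$ and repeatedly following an in-edge in $H$ (which exists whenever the current vertex is not a source), we obtain a strictly ascending sequence of SCCs in $H$; since $H$ is a finite DAG this walk cannot revisit a vertex and must terminate, necessarily at a source of $H$, i.e., at a root component of $\Gr$. By the single-root assumption this source is $\R^r$. Reversing the walk gives a directed path in $H$ from $\R^r$ to $\Comp^r_q$, which lifts to a directed path in $\Gr$ from any chosen $p\in\R^r$ to $q$ (traversing each intermediate SCC using its internal strong connectivity). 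Since $q$ was arbitrary, this proves the last assertion. No real obstacle is expected; the only subtlety is being careful to distinguish in-edges within a weakly connected component from in-edges in the full graph when deriving the contradiction for weak connectivity.
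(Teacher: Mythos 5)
Your proof is correct and follows essentially the same route as the paper: both pass to the condensation DAG, identify root components with its sources, count one source per weakly connected component, and derive reachability by walking backwards to the unique source and lifting the path through intermediate SCCs. Your write-up is, if anything, slightly more explicit than the paper's at the final reachability step.
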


\begin{proof}
We first show that every weakly connected directed simple graph $G$ has at least
one root component. To see this, contract every SCC to a 
single vertex and remove all resulting self-loops. The resulting graph $G'$ is a directed acyclic graph (DAG)
(and of course still weakly connected), and hence $G'$ has at least one vertex
$C$ (corresponding to some SCC in $G$) that has no incoming edges. By
construction, any such vertex $C$ corresponds to a root component in the
original graph $G$. Since $\Gr$ has at least $1$ and at most $n$ weakly
connected components, the first statement of our lemma follows.

To prove the second statement, we use the observation that there is a directed
path from $u$ to $v$ in $G$ if and only if $C_w = C_v$ there is a directed path from $C_u$
to $C_v$ in the contracted graph $G'$.
If there is only $1$ root component in $G$, then the above discussion implies
that there is exactly one vertex $\R$ in the contracted graph $G'$
that has no incoming edges.
Since $G'$ is connected, $\R$ has a directed path to every other vertex in $G'$,
which implies that every process $p \in \R$ has a directed path to every vertex
$q$, as required.
\end{proof}

Returning to the consensus impossibility example for weakly connected graphs above,
it is apparent that the two components $\C_0$ and
$\C_1$ are indeed both root components. Since consensus is not
solvable in this case, we assume in the sequel
that there is at most \emph{a single} root component in $\G^r$, for any 
round $r$. We already know (cf.\ \cite{BRS11:IPDPS}) that this assumption 
makes consensus solvable if the topology (and hence the root component) is 
static.
In the terminology of this paper, \cite{BRS11:IPDPS} stipulates a
stronger version of the existence of an $\infty$-interval
vertex-stable root component, as some links were required to remain
unchanged throughout the computation.

Since we are interested in \emph{dynamic} networks, however, we assume in this paper
that the
root component may change throughout the run, i.e., the (single)
root component $\R^r$ of $\G^r$ might consist of a different set of processes in
every round round $r$.
\onlyLong{Figure~\ref{fig:graphs} shows a sequence of graphs where
there is exactly one root component in every round. }
It is less straightforward to reason about the
solvability of consensus in this case. We will establish in
\Section~\ref{sec:imposs} that consensus is again 
impossible to solve without further constraints.

As root components are special cases of strongly connected components,
     we define an \emph{$I$-vertex-stable root component $\R^I$} as an
     $I$-vertex-stable strongly connected component that is a
     root component in every round $r\in I$. 
Clearly, all the definitions and results for vertex-stable components carry over to
     vertex-stable root components. 

Restricting our attention to the case where exactly one vertex-stable
root component $\R^I$ exists, it immediately follows from 
Lemma~\ref{lem:root} that information
of any process in $\R^I$ propagates to all processes if $I$ is
large enough. More specifically, we can extend our notions of causal 
diameter of a vertex-stable
SCC to the whole network: The \emph{round $r$ network causal diameter 
$D^r$} is the largest $\dist_r(p,q)$ for any $p\in\R^r$ and any $q\in\Pi$.  
Similarly to the causal diameter of a vertex-stable component of an
interval, we define the \emph{network causal diameter $D^I$} for an 
interval $I$ as
the {largest round $x$, $x\in I$, network causal diameter
that also (temporally) falls within $I$}, i.e., satisfies 
$x+D^x-1\in I$ and hence $x+\dist_x(p,q)-1\in I$
for any $p\in\R^r$ and any $q\in\Pi$.
\onlyShort{
It is straightforward to establish versions of Lemma~\ref{lem:boundDI}  
and~\ref{lem:infprop} for root components and their causal influence.
}

\onlyLong{
The following versions of Lemma~\ref{lem:boundDI}  and~\ref{lem:infprop} 
for root components and their causal influence
on the whole network can be established analogously to the results for SCCs:

\begin{lemma}[Bound on network causal diameter]\label{lem:boundDIN}
Assume that there is a single vertex-stable root component $\R^I$ for some 
$I=[r,s]$.
If $s\geq r+n-2$, then $D^I\leq n-1$. 
\end{lemma}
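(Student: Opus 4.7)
The plan is to adapt the inductive expansion argument from Lemma~\ref{lem:boundDI}, replacing the use of strong connectivity with the one-way reachability from the root component guaranteed by Lemma~\ref{lem:root}. I fix an arbitrary $p\in\R^I$ and aim to show $\dist_r(p,q)\leq n-1$ for every $q\in\Pi$, where the chain starts in round $r$ and ends within $I$.

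Define $\P_0=\{p\}$ and, for $i\geq 1$,
$$\P_i=\P_{i-1}\cup\{q\in\Pi : \exists\, q'\in\P_{i-1} \text{ with } (q'\ra q)\in\Ga^{r+i-1}\},$$
so that $\P_k$ is precisely the set of processes $q$ with $(p\ltedge{r[k']}q)$ for some $k'\leq k$. The central claim to prove by induction is that $|\P_k|\geq\min\{n,k+1\}$ for $0\leq k\leq n-1$. The base case is immediate. For the inductive step, assume $\P_k\subsetneq\Pi$; I need at least one edge of $\Ga^{r+k}$ from $\P_k$ to $\Pi\setminus\P_k$. This is where vertex-stability of $\R^I$ does its work: since $r+k\in I$ (because $k\leq n-2\leq s-r$), we have $p\in\R^{r+k}=\R^I$, and Lemma~\ref{lem:root} guarantees a directed $\Ga^{r+k}$-path from $p$ to every process in $\Pi$. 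Any such path to a target outside $\P_k$ must cross the cut $(\P_k,\Pi\setminus\P_k)$, supplying at least one new element in $\P_{k+1}$ and hence $|\P_{k+1}|\geq|\P_k|+1$.

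Applying this inductive step through $k=n-1$ is legal because the hypothesis $s\geq r+n-2$ ensures all of rounds $r,r+1,\dots,r+n-2$ lie in $I$. Therefore $\P_{n-1}=\Pi$, which gives $\dist_r(p,q)\leq n-1$ for every $q\in\Pi$ and every $p\in\R^I\subseteq\R^r$, so $D^r\leq n-1$. Because $r+D^r-1\leq r+n-2\leq s$, round $r$ itself witnesses the min in the definition of $D^I$, yielding $D^I\leq n-1$.

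The main obstacle is purely bookkeeping: ensuring that the round indices line up with the definition of $D^I$ (which requires the chain to finish within $I$) and that the chosen $p$ stays in the root component throughout all rounds used---both are resolved by the vertex-stability of $\R^I$ together with $s\geq r+n-2$. The degenerate case $|\R^I|=1$ is handled uniformly, as the argument never needs internal SCC propagation; it only relies on $p\in\R^{r+k}$ and the single-root reachability property.
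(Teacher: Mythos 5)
Your expansion argument is exactly the intended one: the paper proves this lemma only by remarking that it is ``established analogously'' to Lemma~\ref{lem:boundDI}, and the adaptation you make---replacing the strong-connectivity cut argument by the reachability from the root component guaranteed by Lemma~\ref{lem:root}---is precisely how the paper itself argues in the proof of Lemma~\ref{lem:classification}. The growth claim $|\P_{k+1}|\geq|\P_k|+1$, the round bookkeeping ensuring $r+k\leq s$, and the conclusion $D^r\leq n-1$ are all correct.

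One step is off, though: your last sentence treats $D^I$ as if a single admissible starting round could ``witness'' it. By the paper's definition, $D^I$ is the \emph{largest} $\dist_x(p,q)$ over all starting rounds $x\in I$ whose causal chain ends within $I$; the $\min$ with $\{\infty\}$ in the displayed formula only covers the case where no admissible $x$ exists. So bounding $D^r$ alone establishes $D^I<\infty$ but does not by itself give $D^I\leq n-1$. The repair is cheap and mirrors what the paper's proof of Lemma~\ref{lem:boundDI} does: run the same induction from every starting round $x$ with $r\leq x\leq s-n+2$ (your argument applies verbatim, since the rounds $x,\dots,x+n-2$ still lie in $I$), and observe that any admissible $x>s-n+2$ automatically satisfies $D^x\leq s-x+1<n-1$ by the very admissibility constraint $x+D^x-1\leq s$. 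With that quantification added, your proof is complete.
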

Note that Lemma~\ref{lem:boundDIN} considers the worst case where the 
network topologies can even correspond to a line.  This assumption might 
be too pessimistic for many real world networks. By assuming graph 
topologies that allow fast information spreading, we can get much
better causal network diameters like $D^I \in O(\log n)$ 
(cf.\ \Section~\ref{sec:improved}).  

\begin{lemma}[Network information propagation]\label{lem:infpropN}
Assume that there is a single vertex-stable root component $\R^I$
in $I=[r,s]$ with network causal diameter $D^I<\infty$.
Let $x$ be the maximal round where $x+D^x-1\leq s$. Then,
\begin{compactitem}
\item[(i)] for every $x'\in [r,x]$, it also holds that also $x'+D^{x'}\leq 
s$ and $D^{x'}\leq D^I$, and 
\item[(ii)] $x\geq \max\{s-n+2,r\}$.
\end{compactitem}
\end{lemma}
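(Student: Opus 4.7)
The plan is to closely mirror the proof of Lemma~\ref{lem:infprop}, with Lemma~\ref{lem:cd} handling part~(i) and Lemma~\ref{lem:boundDIN} handling part~(ii). The key observation enabling the transition from SCC-level to network-level is that $\R^I$ is vertex-stable, so for every round $x' \in I$ we have $\R^{x'} = \R^I$, which means that when we translate between rounds the set of ``source'' processes in the definition of $D^{x'}$ never shifts.

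For part~(i), I would first note that since $D^I < \infty$, the defining set of the network causal diameter is nonempty, so the maximal round $x$ with $x + D^x - 1 \leq s$ exists and lies in $[r,s]$. Fix any $x' \in [r,x]$, $p \in \R^I = \R^{x'}$, and $q \in \Pi$. Iterating Lemma~\ref{lem:cd} a total of $x - x'$ times yields
\[
\dist_{x'}(p,q) \;\leq\; (x - x') + \dist_x(p,q),
\]
so that
\[
x' + \dist_{x'}(p,q) - 1 \;\leq\; x + \dist_x(p,q) - 1 \;\leq\; x + D^x - 1 \;\leq\; s.
\]
Taking the maximum over all $p \in \R^I$ and all $q \in \Pi$ gives $x' + D^{x'} - 1 \leq s$, so $x'$ is itself a qualifying round. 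Since $D^I$ is defined as the maximum of $D^y$ over all qualifying $y \in [r,s]$, this forces $D^{x'} \leq D^I$, completing~(i). (I am reading the displayed inequality ``$x' + D^{x'} \leq s$'' in the statement as the customary $x' + D^{x'} - 1 \leq s$, consistent with Lemma~\ref{lem:infprop}.)

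For part~(ii), since the set of qualifying rounds is nonempty and contained in $[r,s]$, we get $x \geq r$ for free. To obtain $x \geq s - n + 2$ we may assume $s - n + 2 \geq r$, i.e.\ $s \geq r + n - 2$; otherwise $\max\{s-n+2, r\} = r$ and we are done. Under this assumption the subinterval $J = [s - n + 2, s]$ lies inside $I$ and has length $n - 1$, so $\R^J = \R^I$ is still a vertex-stable root component and Lemma~\ref{lem:boundDIN} applies to $J$ to give $D^J \leq n-1 < \infty$. In particular there exists a round $y \in J$ with $y + D^y - 1 \leq s$; but any such $y$ also qualifies for $I$ (the condition $y + D^y - 1 \leq s$ is identical), so by maximality of $x$ in $[r,s]$ we get $x \geq y \geq s - n + 2$, as claimed.

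The only step requiring any care is the ``backward in time'' bound in part~(i): Lemma~\ref{lem:cd} is naturally stated forward, and one must make sure that the processes we quantify over, namely $\R^I$ and $\Pi$, do not change between rounds $x'$ and $x$. Vertex-stability of $\R^I$ on $I \supseteq [x', x]$ is exactly what makes the maximization in the definition of $D^{x'}$ range over the same set $\R^I \times \Pi$ as in $D^x$, so the inequality propagates cleanly. Everything else is a direct transcription of the SCC arguments, with Lemma~\ref{lem:boundDIN} playing the role of Lemma~\ref{lem:boundDI}.
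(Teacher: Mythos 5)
Your proof is correct and takes essentially the same approach the paper intends: the paper gives no explicit proof of this lemma, stating only that it is established analogously to Lemma~\ref{lem:infprop}, whose proof derives (i) by applying Lemma~\ref{lem:cd} across rounds and (ii) directly from the causal-diameter bound (here Lemma~\ref{lem:boundDIN}), exactly as you do. Your reading of the missing ``$-1$'' in item (i) as a typo, by analogy with Lemma~\ref{lem:infprop}, is also the right one.
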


As in the case of $I$-vertex-stable SCCs we also define the
\goodD\ variant of root components, which are central to our model. 

\begin{definition}[\goodD\ $I$-vertex-stable root 
  components]%
A vertex-stable root component $\R^I$ in $I=[r,s]$, $s\geq r$, 
is \goodD\ if $D\geq D^I$ and $D^{s-D+1}\leq D$.
\end{definition}
}
Note that a plain $I$-vertex-stable root component with $I\geq n-1$
is always \goodD\ for $D=n-1$\onlyLong{, recall Lemma~\ref{lem:infpropN}}.  
The purpose of the 
definition is to also allow smaller choices of $D$.

We will show in \Section~\ref{sec:imposs} that the following 
Assumption~\ref{ass:window} is indeed very weak, in the sense that many 
problems considered in distributed computing remain unsolvable.

\begin{assumption} \label{ass:window}
For any round $r$, there is exactly one root component $\R^r$ in
$\G^r$, and all vertex-stable root components $\R^I$ with $|I|\geq D$ are
\goodD.
Moreover, there exists an interval of rounds $J=[r_{ST},r_{ST}+d]$, with $d>4D$,
such that there is a \goodD\ $J$-vertex-stable root component.
\end{assumption}

\onlyLong{\section{Reaching Consensus} 
  \label{sec:overview}

The underlying idea of our consensus algorithm is to use flooding to
     propagate the largest input value throughout the network. 
However, as Assumption~\ref{ass:window} does not guarantee
     bidirectional communication between every pair of processes,
     flooding is not sufficient: 
The largest proposal value could be hidden at a single process $p$
     that never has outgoing edges. 
If such $p$ never accepts smaller values, clearly we cannot reach
     agreement (without potentially violating validity). 
Thus we have to find a way to force $p$ to accept also a smaller
     value. 
One well-known technique to do so is locking a value. 
Again we do not want $p$ to lock its value, but some process(es) that
     we know will be able to impose the locked value. 
That is processes that can successfully flood the system with the
     locked value. 
As we have seen in \ref{lem:infpropN} processes in $I$-vertex stable
     root components can do this, if $I$ is long enough. 
As we will see, Assumption~\ref{ass:window} does indeed guarantee the
     existence of such interval. 
Note that the processes that lock a value can only decide on their
     locked value once they are sure that every other process has
     accepted this value as well. 
It is apparent that---in order to lock a value and to later decide a
     value---processes need to know that they are in a vertex stable
     root component. 
Our first step (cf.\ Section~\ref{sec:approxalgo}) is thus to present an
     algorithm that allows processes to detect just that. 
Then (cf.\ Section~\ref{sec:consalgo}) we provide a second algorithm that based
     on information about stable root components solves consensus. 
As we will see the main complication in this approach comes from the
     fact that a process can only detect whether it is part of the
     root component of round $r$ reliably once $r$ has already passed.
}

\onlyLong{\subsection{The Local Network Approximation Algorithm}
\label{sec:approxalgo}
}
\onlyShort{\section{Solving Consensus by Network Approximation}
  \label{sec:thealgo}
} 

\onlyLong{As we have explained above the main goal of this section is
     to provide information about root components of (previous)
     rounds. 
Initially, every process $p$ has no knowledge of the network. 
In order to provide information about root components, process $p$
     needs to \emph{locally} acquire knowledge about the information
     propagation in the network. 
As we have seen, $p$'s information is only  guaranteed to propagate
     throughout the network if $p$ is in a $I$-vertex stable root
     component with finite network causal diameter $D^I$. 
Thus, for $p$ to locally acquire knowledge about information
     propagation, it has to acquire knowledge about the (dynamic)
     communication graph. 
}  

\onlyShort{Initially, every process $p$ has no knowledge of the network --- it
only knows its own input value. Any algorithm that
correctly solves consensus must guarantee that, when $p$ makes its
decision, it either knows that its value has been/will be adopted by all other
processes or it has agreed to take over some other process' decision value. \onlyLong{In
either case, process $p$ needs to \emph{locally} acquire knowledge about the information
propagation in the network.} As we have seen, $p$'s information is only 
guaranteed to propagate throughout the network if $p$ is in a $I$-vertex
stable root component with finite network causal diameter $D^I$. 
Thus, for $p$ to locally acquire knowledge about information
propagation, it has to acquire knowledge about the (dynamic)
communication graph. 
}

We allow $p$ to achieve this by \onlyLong{means of
Algorithm~\ref{alg:approx}, which essentially gathers} as much 
local information on $\Ga^s$ as possible, for every past round $s$.
Every process $p$ keeps track of its current graph approximation 
in variable\onlyLong{\footnote{We denote the value of a variable $v$ of process $p$ in
round $r$ (before the round $r$ computation finishes) as $v_p^r$; we usually
suppress the superscript when it refers to the current round.}}
$A_p$. Initially $A_p$ consists of process $p$ only, then $p$ broadcasts
and updates it in every round.
Ultimately, every process $p$ will use $A_p$ to determine whether it has 
been inside a vertex-stable root component for a sufficiently large interval.
\onlyLong{To
this end, Algorithm~\ref{alg:approx} provides predicate $\stableSCC(I)$, which returns true iff $p$ has been 
in the $I$-vertex
stable root component.}%
\onlyLong{
The edges of $A_p$ are labeled with a set of rounds constructed as
follows: Since $p$ can learn new information only via incoming messages, it
updates $A_p$, whenever $q \in \Timely_p^r$, by adding
$(q\edge{\{r\}}p)$ if $q$ is $p$'s neighbour for the first time, or
updating the label of the edge $(q\edge{U}p)$ to $(q\edge{U \cup
  \{r\}}p)$ (Lines \ref{line:addEdge1} and~\ref{line:addEdge2}).
Moreover, $p$ also receives $A_q$ from $q$ and uses this information
to update its own knowledge: 
The loop in Line~\ref{line:forloop} ensures that $p$ has an edge
$(v\edge{T\cup T'}w)$ for each $(v\edge{T'}w)$ in $A_q$, where 
$T$ is the set of rounds previously known to $p$.
}
Given $A_p$,
we will denote the information contained in $A_p$ about round $s$ by 
$A_p|s$.  More specifically, $A_p|s$ is the graph induced by
the set of edges
\onlyLong{\[}
\onlyShort{$}
    E_p|s=\set{e=(v\rightarrow w)\mid \exists T\supseteq\set{s}:
      (v\edge{T} w)\in A_p}.
\onlyShort{$}
\onlyLong{\]}

It is important to note that our Assumption~\ref{ass:window} is too
weak to guarantee that eventually the graph $A_p|s$ will ever exactly match
the actual $\Ga^s$ in some round $s$. In fact, there might be a process
$q$ that does not have any incoming links from other processes,
throughout the entire run of the algorithm. In that case, $q$ cannot learn
anything about the remaining network, i.e., $A_q$ will permanently be
the singleton graph. More generally, an $I$-vertex-stable root
component will not be able to acquire knowledge on the topology
outside $\R^I$ within $I$. 

\onlyLong{
To simplify the presentation, we have refrained from purging outdated
information from the network approximation graph. Our consensus algorithm 
only queries $\stableSCC$ for intervals that span at most the
last $4D+1$ rounds, i.e., any older information can safely be removed from 
the approximation graph, yielding a message complexity that is polynomial in
$n$.  }

\onlyLong{
\subsubsection{Proof of Correctness}

\begin{algorithm}[!h]
  \footnotesize
\caption{\em Local Network Approximation (Process~$p$)}
\label{alg:approx}
\setlinenosize{\footnotesize}
\setlinenofont{\tt}
\begin{algorithmic}[1]
\item[] Provides predicate $\stableSCC()$. %
\EMPTY
\item[] {\bf Variables and Initialization:}
\STATE $A_p:=\li{V_p,E_p}$ initially $(\set{p},\emptyset)$ \COMMENT{weighted digraph without multi-edges and loops}
\EMPTY

\item[] {\bf\boldmath Emit round $r$ messages:}
\STATE send $\msg{A_p}$ to all current neighbors
\EMPTY

\item[] {\bf\boldmath Round $r$: computation:}

\FOR{$q \in \Timely_p^r$ and $q$ sent message $\msg{A_q}$ in $r$}
  \IF{$\exists$ edge $e=(q\edge{T}p) \in E_p$}
    \STATE replace $e$ with $(q\edge{T'}p)$ in $E_p$ 
    where $T' \la T \cup \set{r}$ \label{line:addEdge1}
  \ELSE
    \STATE add $e:=(q\edge{\set{r}}p)$ to $E_p$ \label{line:addEdge2}
  \ENDIF
  \STATE $V_p \la V_p \cup V_q$
\ENDFOR
\EMPTY
\FOR{every pair of nodes $(p_i,p_j)\in V_p\times V_p$, $p_i\neq p_j$} \label{line:forloop}
  \IF{$T'=\bigcup\set{ S \mid \exists q\in\Timely_p^r\colon (p_i\edge{S}p_j)\in E_q} \neq \emptyset$}
    \STATE replace $(p_i\edge{T}p_j)$ in $E_p$ with $(p_i\edge{T\cup
    T'}p_j)$; add $(p_i\edge{T'}p_j)$ if no such edge exists
  \ENDIF
  \ENDFOR \label{line:forloopend}
\EMPTY
\EMPTY
\PRED{$\stableSCC(I)$}
  \STATE Let $A_p|s=(V_p^s,\set{(p_j\edge{T}p_i)\in E_p \mid s \in T})$
  \STATE Let $C_p|s$ be $A_p|s$ if it is strongly connected, or the empty
  graph otherwise.\label{line:Cps}
  \STATE return \textsc{true} \textit{iff} for all $s_1,s_2\in I$: 
  $V(C_p|s_1) = V(C_p|s_2) \ne \emptyset$
\ENDPRED
\EMPTY
\end{algorithmic}%
\end{algorithm}

The following Lemma~\ref{lem:AsubsetG} reveals that $A_p|t$ underapproximates 
$\G^t$ in a way that consistently
includes neighborhoods. Its proof uses a trivial invariant that is obvious from
the code, which says that
$A_p|t = \li{\{p\},\emptyset}$ at the end of every round $r<t$.

\begin{lemma} \label{lem:AsubsetG}
If $A_p|t$ contains $(v\edge{}w)$ at the end
of round $r$, then
\begin{enumerate}
\item[(i)] $(v\edge{}w)\in\G^t$, i.e.,
$A_p|t\subseteq\G^t$,
\item[(ii)] $A_p|t$ also contains $(v'\edge{}w)$ for every $v'\in 
\N_w^t \subseteq \G^t$.
\end{enumerate}
\end{lemma}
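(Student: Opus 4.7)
I would prove (i) and (ii) simultaneously by induction on $r$, keeping both clauses as the inductive hypothesis. The key observation that drives everything is that round-$t$ labels can only be \emph{created} by one specific process, namely $w$ itself in round $t$: inspecting the algorithm, the only place where an edge carries round $r$ in its label for the first time is Lines~\ref{line:addEdge1}--\ref{line:addEdge2}, where in round $r$, process $w$ inserts $(q\edge{\{r\}}w)$ simultaneously for every $q\in\Timely_w^r$. All subsequent occurrences of $t$ in a label of an edge $(v\edge{}w)$ in any $A_p$ are therefore descendants (via the merging in Lines~\ref{line:forloop}--\ref{line:forloopend}) of this single ``batch'' of insertions performed by $w$ at round $t$.

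\textbf{Base cases.} For $r<t$ the stated invariant $A_p|t=\langle\{p\},\emptyset\rangle$ makes both (i) and (ii) vacuous. For $r=t$ I would argue as follows. At the moment the broadcasts of round $t$ are emitted, the message-sending invariant gives $A_q|t=\langle\{q\},\emptyset\rangle$ at every $q$, so the merging loop cannot introduce any label-$t$ edges into $A_p|t$ during round $t$. Consequently, the only way $A_p|t$ acquires edges in round $t$ is through Lines~\ref{line:addEdge1}--\ref{line:addEdge2}, and these add exactly the set $\{(q\edge{}p): q\in\Timely_p^t\}$. Since by definition $\Timely_p^t=\N_p^t$ and $(q\edge{}p)\in\G^t$ iff $q\in\N_p^t$, both (i) and (ii) hold at the end of round $t$ (only edges with head $p$ are present, and they form $p$'s complete in-neighborhood).

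\textbf{Inductive step ($r>t$).} Any edge $(v\edge{}w)\in A_p|t$ at the end of round $r$ was either already present in $A_p|t$ at the end of round $r-1$, in which case the inductive hypothesis applied to $p$ gives both (i) and (ii) immediately, or it was inserted/label-augmented by the merging loop in round $r$. The latter happens only if some $q\in\Timely_p^r$ sent an $A_q$ in round $r$ with $(v\edge{S}w)\in E_q$ and $t\in S$, i.e.\ with $(v\edge{}w)\in A_q|t$ at the end of round $r-1$. Applying the inductive hypothesis to $q$ at round $r-1$ then yields (a) $(v\edge{}w)\in\G^t$, proving (i), and (b) $(v'\edge{}w)\in A_q|t$ for every $v'\in\N_w^t$; the merging loop copies all of these edges simultaneously into $A_p|t$ (note that $V_p$ has already been enlarged by $V_q$ in the first loop before the merge, so every such $v'$ lies in the current $V_p$), giving (ii).

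\textbf{Main difficulty.} The only nontrivial point is that (ii) must be carried along with (i) in the induction: one cannot first prove (i) and then tack (ii) on afterwards. The reason is that the merging loop copies the labels of individual edges and, viewed edge-by-edge, this could in principle leave $A_p|t$ with only a proper subset of the in-neighborhood of $w$. What rules this out is precisely the ``all-or-nothing'' character of the inductive hypothesis (ii) applied to the donor $A_q$: because $w$ inserted the whole batch $\{(v'\edge{\{t\}}w):v'\in\N_w^t\}$ in round $t$, and every downstream process either has all of this batch in its $A_\cdot|t$ or none of it (by induction), the union taken in the merge is never partial. Once this is observed, the remaining bookkeeping is routine.
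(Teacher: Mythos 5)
Your proof is correct and follows essentially the same route as the paper's: a simultaneous induction on $r$ in which the base case $r=t$ exploits that only $w=p$ can create round-$t$ labels and does so for its entire in-neighborhood at once, and the inductive step applies both clauses of the hypothesis to the donor approximation $A_q$ so that the merge can never transfer a proper subset of $w$'s in-edges. Your explicit remark that the round-$r$ messages carry the end-of-round-$(r-1)$ approximations (so the induction hypothesis applies to them) and that $V_p$ is enlarged before the merging loop are just slightly more detailed bookkeeping of the same argument.
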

\begin{proof} 
We first consider the case where $r<t$, then at the end of round $r$
$A_p|t$ is empty, i.e., there are no edges in $A_p|t$. As the
precondition of the Lemma's statement is false, the statement is true.

For the case where $r\geq t$, we proceed by induction on $r$:

Induction base $r=t$: If $A_p|t$ contains $(v\edge{}w)$ at the end of 
round $r=t$,
it follows from $A_q|t = \li{\{q\},\emptyset}$ at the end of every 
round $r<t$, for every $q\in\Pi$, that $w=p$, since $p$ is the only
processor that can have added this edge to its graph approximation. 
Clearly, it did so only when $v\in \N_p^t$, i.e., 
$(v\edge{}w) \in \G^t$, and included also $(v'\edge{}w)$ for every 
$v'\in \N_p^t$ on that occasion. This confirms (i) and (ii).

Induction step $r\to r+1$, $r\geq t$: Assume, as our induction 
hypothesis, that
(i) and (ii) hold for any $A_q|t$ at the end of round $r$, 
in particular, for every
$q\in\N_p^{r+1}$. If indeed $(v\edge{}w)$ in $A_p|t$ at the end of 
round $r+1$, it must be contained in the union of round $r$
approximations
\[
  U=\left(A_p|t\right) \cup \left(\bigcup_{q\in \N_p^{r+1}} A_q|t\right)
\]
and hence in some $A_x|t$ ($x=q$ or $x=p$) at the end of round $r$.
Note that the edges (labeled $r+1$) added in round $r+1$ to $A_p$
are irrelevant for $A_p|t$ here, since $t < r+1$.

Consequently, by the induction hypothesis, $(v\edge{}w) \in \G^t$, thereby
confirming (i). As for (ii), the induction hypothesis also implies that 
$(v'\edge{}w)$ is also in this $A_x|t$. Hence, every such
edge must be in $U$ and hence in $A_p|t$ at the end of 
round $r+1$ as asserted.
\end{proof}

The next lemma shows that locally detecting a strongly connected component
$C_p|s \subseteq A_p|s$ (in Line~\ref{line:Cps} of Algorithm~\ref{alg:approx}) 
implies that $p$ is in the root component of round~$s$.
This result rests
on the fact that $A_p|s$ underapproximates $\G^s$ (Lemma~\ref{lem:AsubsetG}.(i)), but does so in a way that
never omits an in-edge at any process $q\in V(A_p|s)$ (Lemma~\ref{lem:AsubsetG}.(ii)).

\begin{lemma}\label{lem:Cpr2root}
If the graph $C_p|s$ (line~\ref{line:Cps}) with $s<r$ is non-empty in round $r$,
then $p\in\R^s$.
\end{lemma}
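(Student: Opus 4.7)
The plan is to show that when $C_p|s$ is non-empty in round $r>s$, the vertex set $V(C_p|s)$ is necessarily a root component of $\G^s$ that contains $p$. Combined with Assumption~\ref{ass:window} (uniqueness of the round-$s$ root component), this forces $V(C_p|s) = \R^s$ and hence $p \in \R^s$. The argument is essentially a direct application of Lemma~\ref{lem:AsubsetG}.

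Setting $V' = V(C_p|s)$, I first observe that $C_p|s = A_p|s$ is strongly connected by construction, and that Lemma~\ref{lem:AsubsetG}(i) embeds $A_p|s$ into $\G^s$; hence $V'$ is strongly connected in $\G^s$ as well. The key closure step then applies Lemma~\ref{lem:AsubsetG}(ii): strong connectivity gives every $w \in V'$ at least one incoming edge inside $A_p|s$, which forces $\N_w^s \subseteq V'$ for each such $w$. Consequently $V'$ receives no $\G^s$-edges from outside $V'$, so $V'$ is a root SCC of $\G^s$; by Assumption~\ref{ass:window}, $V' = \R^s$.

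It remains to verify $p \in V'$. Since $V_p$ is initialized to $\set{p}$ and only grows, $p$ always belongs to $V_p = V(A_p|s)$ whenever the predicate evaluates $A_p|s$ to be strongly connected and non-trivial. The only case needing separate attention is the degenerate situation where $V' = \set{p}$ with no $s$-labeled edges: then $p$ must have received no messages in round $s$ (otherwise $\N_p^s$ would populate $V'$ via an $s$-labeled in-edge at $p$), so $\N_p^s = \emptyset$, making $\set{p}$ itself a root component of $\G^s$, which by uniqueness again equals $\R^s$. I expect the bookkeeping around $p$'s membership in $V(A_p|s)$---and not the structural step above---to be the main technical annoyance; the substantive content of the lemma is a clean corollary of Lemma~\ref{lem:AsubsetG}(ii) together with the uniqueness clause of Assumption~\ref{ass:window}.
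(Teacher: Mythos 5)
Your proof is correct and follows essentially the same route as the paper's: both arguments combine Lemma~\ref{lem:AsubsetG}(i) (the approximation embeds into $\G^s$) with Lemma~\ref{lem:AsubsetG}(ii) (no in-edge at any recorded vertex is omitted) and the uniqueness of the round-$s$ root component, the only difference being that you argue directly that $V(C_p|s)$ is in-closed and hence equals $\R^s$, whereas the paper runs the contrapositive as a contradiction. Your separate treatment of the singleton case $V'=\set{p}$ is a small improvement in rigor, since the paper's step ``every vertex of an SCC has an in-edge'' silently assumes $|V'|\geq 2$.
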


\begin{proof}
For a contradiction, assume that $C_p|s$ is non-empty (hence $A_p|s$ is
an SCC by Line~\ref{line:Cps}), but $p\not\in\R^s$. Since $p$ is always
included in any $A_p$ by construction and $A_p|s$ underapproximates
$\G^s$ by Lemma~\ref{lem:AsubsetG}.(i), this implies that $A_p|s$ cannot be
the root component of $\G^s$. Rather, $A_p|s$ must contain some process $w$ 
that has an in-edge $(v\edge{}w)$ in $\G^s$ that is not present in $A_p|s$.
As $w$ and hence some edge $(q\edge{s}w)$ is contained in $A_p|s$, because
it is an SCC, Lemma~\ref{lem:AsubsetG}.(ii) reveals that this is impossible.
\end{proof}

From this lemma and the description of predicate $\stableSCC(I)$ in
Algorithm~\ref{alg:approx}, we get
the following corollary.
\begin{corollary}\label{cor:stable2root}
  If the predicate $\stableSCC(I)$
  evaluates to $\true$ at process $p$ in round $r$, then $\forall s\in I$ where
  $s<r$, it holds that $p\in\R^{s}$.
\end{corollary}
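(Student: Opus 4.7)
The plan is to unfold the definition of $\stableSCC(I)$ and then invoke Lemma~\ref{lem:Cpr2root} pointwise. First I would recall that, by the body of the predicate in Algorithm~\ref{alg:approx}, $\stableSCC(I)$ returning $\true$ at process $p$ in round $r$ means that for every pair $s_1,s_2\in I$ we have $V(C_p|s_1)=V(C_p|s_2)\neq\emptyset$. In particular, for each individual $s\in I$ the graph $C_p|s$ is non-empty.

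Second, I would fix an arbitrary $s\in I$ with $s<r$ and apply Lemma~\ref{lem:Cpr2root}, which states exactly that a non-empty $C_p|s$ with $s<r$ forces $p\in\R^s$. Since $s$ was arbitrary in $I\cap\{s:s<r\}$, the conclusion of the corollary follows immediately.

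There is essentially no obstacle in this argument: the corollary is just a per-round repackaging of Lemma~\ref{lem:Cpr2root}, together with the observation that the predicate's non-emptiness condition exactly matches that lemma's hypothesis. The only subtlety worth mentioning explicitly is the side condition $s<r$, which is needed because Lemma~\ref{lem:Cpr2root} (and the underlying Lemma~\ref{lem:AsubsetG}) only guarantees that $A_p|s$ underapproximates $\G^s$ after round $s$ has actually elapsed; for $s\geq r$ nothing can yet have been recorded in $A_p$ about round $s$, so the restriction in the statement is both necessary and harmless.
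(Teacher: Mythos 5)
Your argument is correct and matches the paper's reasoning exactly: the paper derives this corollary directly from Lemma~\ref{lem:Cpr2root} together with the definition of the $\stableSCC(I)$ predicate, which is precisely the unfolding-plus-pointwise-application you describe. Your remark on why the side condition $s<r$ is needed is a sensible addition but does not change the substance.
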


The following Lemma~\ref{lem:root2Cpr} proves that, 
in a sufficiently
     long $I=[r,s]$ with a $I$-vertex-stable root component $\R^I$, 
every member $p$ of $\R^I$ detects an SCC for round $r$ (i.e., $C_p|r 
\neq \emptyset$) with a latency of at most $D$ rounds
(i.e., at the end of rounds $r+D$).
Informally speaking, together with Lemma~\ref{lem:Cpr2root}, it
asserts that if there is an $I$-vertex-stable root component $\R^I$ for
a sufficiently long interval $I$, then a process $p$ observes $C_p|r\neq 
\emptyset$ from the end of round $r+D$ on iff $p\in\R$. 

\begin{lemma} \label{lem:root2Cpr}
Consider an interval of rounds $I=[r,s]$, such that there is a \goodD\ 
$I$-vertex-stable root component $\R^I$ and assume $|I|=s-r+1 >D \geq D(\R^I)$. 
Then, from round $r+D$ onwards, we have $C_p|r=\R^I$,
     for every process in $p\in\R^I$.
\end{lemma}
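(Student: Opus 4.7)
The plan is to show that by the end of round $r+D$, $A_p|r$ coincides with the subgraph of $\G^r$ induced by $V(\R^I)$; since this subgraph is strongly connected (being a root component), $C_p|r$ will equal $A_p|r$ and thus have the desired vertex set.

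First I would pin down what information is available at the end of round $r$. Since round-$r$ messages are broadcast based on the state at the beginning of round $r$, every message received in round $r$ carries an approximation graph in which no edge is labeled $r$. Consequently, the only $r$-labeled edges in any $A_q$ at the end of round $r$ are those inserted in Lines~\ref{line:addEdge1}--\ref{line:addEdge2}, namely $\{(v\to q) : v \in \N_q^r\}$; and for $q \in V(\R^I) = V(\R^r)$, the root property of $\R^r$ forces each such $v$ to also lie in $V(\R^I)$.

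Next I would set up a propagation induction: along any causal chain $q = p_0, p_1, \ldots, p_k = p$ with $p_i \ltedge{(r+1)+i} p_{i+1}$, every $r$-labeled edge present in $A_q$ at the end of round $r$ is present in $A_{p_i}$ at the end of round $r+i$. The base case is immediate; in the inductive step, if $p_{i+1}=p_i$ the claim holds because $A_{p_i}$ never loses information, and if $p_i \in \N_{p_{i+1}}^{(r+1)+i}$ then $p_i$'s round-$((r+1)+i)$ message delivers the edge to $p_{i+1}$, which records it via the second update loop. To invoke this for every $q, p \in V(\R^I)$, I appeal to $D$-boundedness: since $|I| > D$ gives $r+1 \le s-D+1$, Lemma~\ref{lem:infprop} applied to the vertex-stable SCC $\R^I$ yields $D^{r+1}(\R^I) \leq D(\R^I) \leq D$, so every pair in $V(\R^I)$ is linked by a causal chain of length at most $D$ starting in round $r+1$; vertex-stability of $\R^I$ together with the root-component property ensures all intermediate $p_i$ lie in $V(\R^I)$.

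Combining these, at the end of round $r+D$ the graph $A_p|r$ contains every edge of $\G^r$ whose head is in $V(\R^I)$, which by the root-component property of $\R^r$ is exactly the edge set of the subgraph of $\G^r$ induced by $V(\R^I)$. Lemma~\ref{lem:AsubsetG}(i) provides the reverse inclusion $A_p|r \subseteq \G^r$, and a short backward trace of the propagation (using again that in-neighbors of members of $\R^I$ stay inside $V(\R^I)$ throughout $I$) rules out any stray edge with an endpoint outside $V(\R^I)$. Thus $A_p|r$ is precisely the subgraph of $\G^r$ on $V(\R^I)$, which is strongly connected; hence $C_p|r = A_p|r$ with $V(C_p|r) = V(\R^I)$, as claimed. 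The main technical obstacle will be the bookkeeping in the propagation induction, specifically being explicit that a round's outgoing message reflects the sender's state \emph{before} the round's transition, and that the stay-in-place steps $p_{i+1}=p_i$ in a causal chain preserve rather than extend the accumulated round-$r$ knowledge.
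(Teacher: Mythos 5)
Your proposal is correct and follows essentially the same route as the paper's proof: record that at the end of round $r$ each $q\in\R^I$ holds exactly its round-$r$ in-edges (all from within $\R^I$ by the root property), propagate these within $\R^I$ for $D$ rounds using $D$-boundedness and $r+1\le s-D+1$, and conclude that the assembled graph is the strongly connected induced subgraph so that $C_p|r=\R^I$. You merely spell out the propagation induction and the exclusion of stray edges, which the paper leaves implicit.
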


\begin{proof}
Consider any $q\in\R^I$. At the beginning of round $r+1$, $q$ has
     an edge $(q'\edge{T}q)$ in its approximation graph $A_q$ with $r\in T$ iff
     $q'\in\N_q^r$. 
Since processes always merge all graph information from other
     processes into their own graph approximation, it follows from the definition of
     a \goodD\ $I$-vertex-stable root component in conjunction with the fact that
     $r+1\leq s-D+1$ that every $p\in\R^I$ has these
     in-edges of $q$ in its graph approximation by round $r+1+D-1$. Since $\R^I$
is a vertex-stable root-component, it is strongly connected without in-edges from
processes outside $\R^I$. Hence $C_p|r=\R^I$ from round $r+D$ on, as asserted.
\end{proof}

\begin{corollary} \label{cor:allrootdec}
Consider an interval of rounds $I=[r,s]$, with $|I|=s-r+1 > D\geq D(\R^I)$, 
such that there is a \goodD\ vertex-stable root component $\R^I$.
Then, from the end of round $s$ on, predicate $\stableSCC([r,s-D])$
evaluates to $\true$ at every process in $\R^I$.
\end{corollary}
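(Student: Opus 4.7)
The strategy is to invoke Lemma~\ref{lem:root2Cpr} once for each starting round $t\in[r,s-D]$, applied to the subinterval $[t,s]\subseteq I$, and then combine the per-round conclusions using the definition of $\stableSCC$. The key point is that after round $s$, every $p\in\R^I$ will simultaneously see $C_p|t=\R^I$ for all $t\in[r,s-D]$, which immediately triggers the predicate.

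First I would check that every subinterval $[t,s]$ with $t\in[r,s-D]$ still satisfies the hypotheses of Lemma~\ref{lem:root2Cpr}. The length condition $s-t+1>D$ is immediate from $t\leq s-D$. Vertex-stability of $\R^I$ and its root-component property are both inherited by any subinterval of $I$, so $\R^I$ remains a $[t,s]$-vertex-stable root component. For the \goodD{} property on $[t,s]$, observe that the condition $D^{s-D+1}(\R^I)\leq D$ depends only on rounds from $s-D+1$ onward and is therefore unaffected by shrinking the left endpoint; for the other condition, the max defining $D(\R^{[t,s]})$ is taken over a subset of the rounds used for $D(\R^I)$, and by the \goodD{} of $\R^I$ the witness $x=s-D+1$ lies in $[t,s]$ (since $t\leq s-D<s-D+1$) and is valid in both definitions, so $D(\R^{[t,s]})\leq D(\R^I)\leq D$.

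With the hypotheses verified, Lemma~\ref{lem:root2Cpr} applied to $[t,s]$ yields that from round $t+D$ onward, $C_p|t=\R^I$ holds for every $p\in\R^I$. Since $t+D\leq s$, this equality already holds at the end of round $s$. Collecting this for all $t\in[r,s-D]$ gives $V(C_p|t)=V(\R^I)\neq\emptyset$ for every such $t$ and every $p\in\R^I$; in particular $V(C_p|t_1)=V(C_p|t_2)$ for any $t_1,t_2\in[r,s-D]$. By the definition of $\stableSCC$ this forces $\stableSCC([r,s-D])$ to return $\true$ at every $p\in\R^I$ at the end of round $s$.

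The only nonmechanical part is the transfer of the \goodD{} property to the subintervals $[t,s]$; the rest is bookkeeping on top of Lemma~\ref{lem:root2Cpr}. To justify the phrase ``from the end of round $s$ on'', I would additionally argue that throughout $I$ no $p\in\R^I$ can receive any information originating outside $\R^I$, because $\R^I$ has no in-edges from outside in any round of $I$; hence $A_p|t$ at round $s$ consists exactly of the edges observed inside $\R^I$ in round $t$, and the equality $C_p|t=\R^I$ is stably established by the end of round $s$ (at least until $p$ first receives a message originating from outside $\R^I$, which is the only way the approximation $A_p|t$ could grow and potentially break its strong connectivity).
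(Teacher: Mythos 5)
Your proof is correct and follows essentially the route the paper intends: the corollary is stated without proof precisely because it is meant to follow by applying Lemma~\ref{lem:root2Cpr} to each starting round $t\in[r,s-D]$ (noting $t+D\leq s$) and then reading off the definition of $\stableSCC$. Your explicit verification that the \goodD{} property and the bound $D\geq D(\R^{[t,s]})$ transfer to the subintervals $[t,s]$ is a detail the paper leaves implicit, and you handle it correctly via the witness $x=s-D+1$.
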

}

\onlyLong{
\subsection{The Consensus Algorithm}
\label{sec:consalgo}
}

\onlyShort{
The underlying idea of our consensus algorithm is to use flooding to
     forward the largest input value to everyone. 
However, as Assumption~\ref{ass:window} does not guarantee
     bidirectional communication between every pair of processes,
     flooding is not sufficient: The largest proposal value could be
     hidden at a single process that never has outgoing edges. 
Therefore, we let ``privileged'' processes, namely, the ones in a
vertex-stable root component, try to impose their
     largest proposal values on the other processes. 
In order to do, so we use the well-known technique of locking a unique value. 
Processes only decide on their locked value once they are sure that
     every other process has locked this value as well. 
Since  Assumption~\ref{ass:window} guarantees that there will be one
root component such that the processes in the root component can communicate their locked value
to all other processes in the system they will eventually succeed.

\begin{theorem} \label{thm:consensus}
  Let $r_{ST}$ be the first round where Assumption~\ref{ass:window} holds.
  There is an algorithm that solves consensus by round $r_{ST}+4D+1$.
\end{theorem}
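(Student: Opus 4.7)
The plan is to layer a consensus protocol on top of Algorithm~\ref{alg:approx}: each process $p$ additionally maintains a proposal $x_p$ (initialized to the input $v_p$) and a lock $\ell_p$ (initialized to $\bot$), and in every round broadcasts $(x_p,\ell_p)$ together with its approximation $A_p$. Upon receiving round-$r$ messages, $p$ updates $A_p$ as in Algorithm~\ref{alg:approx} and then (i) sets $\ell_p \leftarrow \ell_q$ for any received $\ell_q \neq \bot$ (we will show that at most one value is ever locked, so the choice is well-defined), (ii) updates $x_p$ to the maximum over its own $x_p$ and all received $x_q$, overriding by $\ell_p$ whenever $\ell_p \neq \bot$, (iii) if $\ell_p=\bot$ and $\stableSCC([r-2D,r-D])$ evaluates to \textsc{true} at the end of round $r$, sets $\ell_p \leftarrow x_p$, and (iv) decides $\ell_p$ once it has been non-$\bot$ and unchanged for $2D$ rounds. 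Intuitively, the length-$(D{+}1)$ detection window in (iii) forces the entire stable root component to lock the same value, while the two additional $D$-round margins (one for the lock to propagate, one for every process to rule out a contradicting lock) make decision safe.

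Validity is immediate: every $x_p$ is a maximum over values that each originate from some input $v_q$, and $\ell_p$ is always copied from some $x_q$, so any decided value is an input value. For termination, apply Corollary~\ref{cor:allrootdec} to the guaranteed $J$-vertex-stable root component $\R^J$ of Assumption~\ref{ass:window}, with $J=[r_{ST},r_{ST}+d]$ and $d>4D$: by the end of round $r_{ST}+2D$ every $p\in\R^J$ observes $\stableSCC([r_{ST},r_{ST}+D])=\textsc{true}$ and, unless already locked, sets $\ell_p\leftarrow x_p$. Since $\R^J$ is still the root component through round $r_{ST}+d$, Lemma~\ref{lem:infpropN} guarantees that the lock value propagates to every process within $D$ further rounds, so every process holds the same non-$\bot$ lock by round $r_{ST}+3D$; a further $D$ rounds (available within $J$) let criterion~(iv) fire everywhere, giving decisions by round $r_{ST}+4D+1$.

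The hard part is agreement, which I intend to establish by proving that only one value is ever locked. Suppose $p$ locks $v$ at round $r_1$ and $q$ locks $v'$ at round $r_2\geq r_1$. By Corollary~\ref{cor:stable2root}, the detection intervals $I_p=[r_1-2D,r_1-D]$ and $I_q=[r_2-2D,r_2-D]$ consist of rounds in which $p$, resp.\ $q$, lies in the unique round-$s$ root component $\R^s$. If $I_p\cap I_q\neq\emptyset$, then by the single-root-component clause of Assumption~\ref{ass:window} together with vertex stability, $p$ and $q$ lie in a common vertex-stable root component $\R^{I_p\cup I_q}$; since such a component has no incoming edges from outside, no fresh proposal can enter during the union window, so the maxima computed at rounds $r_1$ and $r_2$ coincide and $v=v'$. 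If $I_p\cap I_q=\emptyset$, then $r_2-D>r_1$, and I plan to argue inductively on the sequence of round-by-round (unique) root components between $r_1$ and $r_2$: the value $v$ is already known to every member of $V(\R^{I_p})$ by round $r_1-D$ and is broadcast continuously by these processes from round $r_1+1$ onwards, so Lemma~\ref{lem:root} together with the outgoing reach of every intermediate root component forces $x_q\geq v$ and, once any carrier of $\ell=v$ meets a process that enters a new vertex-stable root component, forces $\ell_q=v$ before $q$ can execute rule~(iii). This propagation argument across \emph{changing} root components---bridging periods in which no long stable component exists---is the chief obstacle of the proof and is precisely where the three-layered $D$-margins of the algorithm are used crucially.
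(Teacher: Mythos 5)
There is a genuine gap, and it sits exactly where you locate it yourself: the claim that ``at most one value is ever locked'' cannot be established, and your fallback argument for the disjoint-window case does not work. If $I_p$ and $I_q$ are disjoint, the root component $\R^{I_q}$ in which $q$ locks has, by definition, \emph{no incoming edges} during $I_q$; if none of its members ever carried $v$ or the lock on $v$, it can never learn them, no matter how persistently the members of $\R^{I_p}$ broadcast. Assumption~\ref{ass:window} guarantees only \emph{one} sufficiently long stable window, so between $r_1$ and $r_2$ the root component may change every round, and ``outgoing reach of every intermediate root component'' gives you information flowing \emph{out of} those components, not \emph{into} the later one. The paper does not attempt uniqueness of locks at all. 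Instead it pairs each proposal with its lock round and floods the lexicographic maximum of $(lockRound,x)$, so a later lock always overrides an earlier one; agreement then rests on Lemma~\ref{lem:assorted-properties}(iv) (the first decider's lock round $\ell$ is held by exactly the members of its root component and by nobody else at any round $\geq\ell$) together with Lemma~\ref{lem:highlander} (that component, being a $D$-bounded stable root component over $[\ell,\ell+D]$, imposes its maximal value on every process before the decision). Your rule~(i), which adopts any received lock unconditionally, and your missing unlock step (the paper resets $locked_p$ whenever $\stableSCC$ fails) are symptoms of the same structural difference.

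Two further problems follow from the same design choice. First, your decision criterion (iv), ``lock non-$\bot$ and unchanged for $2D$ rounds,'' is not safe: a process can lock during the window $[r-2D,r-D]$, lose its root status immediately afterwards, and still hold an unchanged lock for $2D$ rounds without its value ever having reached anyone outside its (former) component. The paper's decision test $\stableSCC([lockRound_p,lockRound_p+D])$ is the essential ingredient you are missing---it certifies continued membership in a \goodD\ stable root component for $D$ rounds \emph{after} the lock, which is what guarantees the value actually flooded the network. Second, your termination bound does not follow from your own rules: processes outside $\R^J$ acquire the lock only by adoption, around round $r_{ST}+3D$, and would then need $2D$ further unchanged rounds to satisfy (iv), i.e.\ until about $r_{ST}+5D$, exceeding both the claimed $r_{ST}+4D+1$ and the guaranteed window $d>4D$. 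The paper avoids this by having deciders switch to flooding an explicit $\msg{\textsc{decide},x}$ message, which causes immediate decision at every receiver within $D$ further rounds.
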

}

\begin{algorithm}
  \footnotesize
\caption{Solving Consensus; code for process $p$}
\label{alg:consensus}
\setlinenosize{\footnotesize}
\setlinenofont{\tt}
\begin{algorithmic}[1]
\STATE {Simultaneously run Algorithm~\ref{alg:approx}.}
\EMPTY
\item[] {\bf Variables and Initialization:}
\STATE $x_p \in \mathbb{N}$ initially $v_p$ 
\STATE $locked_p, decided_p \in \set{\texttt{false},\texttt{true}}$ initially
  $\texttt{false}$ 
\STATE $lockRound_p \in \mathbb{Z}$ initially $0$  
\EMPTY

\item[] {\bf\boldmath Emit round $r$ messages:}
\IF{$decided_p$}
\STATE send $\msg{\textsc{decide},x_p}$ to all neighbors         \label{line:sendDec}
\ELSE
  \STATE send $\msg{lockRound_p,x_p}$ to all neighbors
\ENDIF
\EMPTY

\item[] {\bf\boldmath Round $r$ computation:} %
\IF{\textbf{not} $decided_p$} \label{line:secondIf}
  \IF{received $\msg{\textsc{decide},x_q}$ from any neighbor $q$}  \label{line:firstIf}
    \STATE $x_p \la x_q$
    \STATE decide on $x_p$ and set $decided_p \la \texttt{true}$\label{line:decOther}
  \EMPTY

  \ELSE[{$p$ only received $\msg{lock_q,x_q}$ messages (if any)}:]
  \STATE $(lockRound_p,x_p) \la \max\set{ (lock_q,x_q) \mid q \in
    \Timely_p^r \cup \{p\}}$ \COMMENT{lexical order in $\max$} \label{line:updateXp}
  \IF{$\stableSCC([r-D-1,r-D])$\label{line:if-G}}
    \IF{$(\text{\textbf{not} $locked_p$})$}
      \STATE $locked_p\la \texttt{true}$
      \STATE $lockRound_p \la r$ \label{line:lock}
    \ELSE 
      \IF{$\stableSCC([lockRound_p,lockRound_p+D])$\label{line:if-lock-heard}}
        \STATE decide on $x_p$ and set $decided_p \la \texttt{true}$
      \label{line:decideOwn} \label{line:decOwn}
      \ENDIF
    \ENDIF
  \ELSE[{$\stableSCC([r-D-1,r-D])$ returned $\false$}]
    \STATE $locked_p\la \texttt{false}$
  \ENDIF
  \ENDIF
\ENDIF
\end{algorithmic}%
\end{algorithm}

\onlyLong{
As we explained before our consensus algorithm is built atop of the network approximation
algorithm.  More specifically, we rely on Corollary~\ref{cor:stable2root}
and use the 
$\stableSCC$ predicate provided by Algorithm~\ref{alg:approx} to 
detect that a process is currently (potentially) in a vertex-stable root component.
To be able to do so, round $r$ of 
Algorithm~\ref{alg:approx} is executed before round $r$ of 
Algorithm~\ref{alg:consensus}, and messages sent in round $r$ by both
algorithms are packed together in a single message.
Since Corollary~\ref{cor:allrootdec} revealed that
$\stableSCC$ has a delay of up to $D$ rounds to detect that a
process is in the vertex-stable root component of some interval
of rounds, however, our algorithm (conservatively)
looks back $D$ rounds in the past for locking in order to ensure that
the information returned by $\stableSCC$ is reliable.

In more detail, Algorithm~\ref{alg:consensus} proceeds as follows: 
Initially, no process has locked a value, that is $locked=\false$ and
     $lockRound_p=0$. 
Processes try to detect whether they are privileged by evaluating the
     condition in Line~\ref{line:firstIf}. 
When this condition is true in some round $\ell$, they lock the
     current value (by setting $locked_p=\true$ and $lockRound$ to the
     current round), unless $locked_p$ is already $\true$. 
Note that our locking mechanism does not actually protect the value
     against being overwritten by a larger value being also locked in $\ell$;
it locks out only those values that have older locks $l<\ell$.

When the process $m$ that had the largest value in the root component
     of round $\ell$ detects that it has been in a vertex-stable root component
     in all rounds $\ell$ to $\ell+D$ (Line~\ref{line:if-lock-heard}), it
     can decide on its current value. 
As all other processes in that root component must have had $m$'s value imposed
     on them, they can decide as well.
     After deciding, a process stops participating in the
     flooding of locked values, but rather (Line~\ref{line:sendDec})
     floods the network with $\msg{\textsc{decide},x}$. 
Since the time window guaranteed by
     Assumption~\ref{ass:window} is large enough to allow every process
     to receive this message, all processes will eventually decide.

\subsubsection{Proof of Correctness}
\begin{lemma}[Validity] \label{lem:validity}
  Every decision value is the input value of some process.
\end{lemma}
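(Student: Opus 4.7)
The plan is to establish the invariant that, throughout any execution, the variable $x_p$ of every process $p$ always holds the input value of some process in $\Pi$. Since the only two ways a process decides are (i) setting $x_p \gets x_q$ upon receiving $\msg{\textsc{decide},x_q}$ (Line~\ref{line:decOther}) and then deciding on $x_p$, or (ii) deciding on its current $x_p$ after the locking condition in Line~\ref{line:decOwn}, validity will follow immediately from the invariant.

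I would prove the invariant by straightforward induction on the round number $r$. For the base case ($r=1$, before any computation), the initialization $x_p \gets v_p$ makes $x_p$ equal to $p$'s own input value. For the inductive step, assume that at the start of round $r$, $x_q$ is the input value of some process for every $q \in \Pi$. The variable $x_p$ is modified in round $r$ in at most one of two places: Line~\ref{line:updateXp} assigns $x_p$ the second component of $\max\{(lock_q,x_q) : q \in \Timely_p^r \cup \{p\}\}$, which picks out some $x_q$ of a process $q$; by the induction hypothesis, that $x_q$ is an input value of some process. Alternatively, Line~\ref{line:decOther} assigns $x_p \gets x_q$ for a neighbor $q$ that sent $\msg{\textsc{decide},x_q}$, and by the induction hypothesis $x_q$ was already a valid input value when $q$ packed it into the outgoing message (messages are sent based on the state at the start of the round). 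In either case, the invariant is preserved.

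With the invariant in hand, validity is immediate: a process decides on its current value $x_p$, which by the invariant equals $v_q$ for some $q \in \Pi$. I do not expect any real obstacle here, since $x_p$ is only ever overwritten by values that already satisfy the invariant; the argument is pure value-flow tracking and does not depend on Assumption~\ref{ass:window} or on any property of the network approximation. The only minor subtlety worth mentioning explicitly is that, in the $\msg{\textsc{decide},x_q}$ case, the value $x_q$ transmitted was the sender's $x_q$ at the beginning of the round, so the induction hypothesis applies to it directly.
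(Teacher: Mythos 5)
Your proof is correct and matches the paper's argument in substance: both reduce the $\msg{\textsc{decide},x_q}$ case to the case of deciding on one's own estimate, and both track the provenance of $x_p$ through the update in Line~\ref{line:updateXp} back to some process's initial value. The only difference is presentational --- you run the value-flow argument as a forward induction invariant over rounds, whereas the paper traces the chain of updates backward from the decision round; your formulation is, if anything, slightly tidier.
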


\begin{proof}
Processes decide either in Line~\ref{line:decOther} or in
     Line~\ref{line:decOwn}. 
When a process decides via the former case, it has received a
     $\msg{\textsc{decide},x_q}$ message, which is sent by $q$ iff $q$
     has decided on $x_q$ in an earlier round. 
In order to prove the theorem, it is thus sufficient to show that
     processes can only decide on some process' input value when they
     decide in Line~\ref{line:decOwn}, where they decide on their
     current estimate $x_p$. 
Let the round of this decision be $r$. 
This value is either $p$'s initial value, or was updated in some round
     $r'\le r$ in Line~\ref{line:updateXp} from a value received by
     way of one of its neighbors' $\msg{lockRound,x}$ message. 
In order to send such a message, $q$ must have had $x_q=x$ at the
     beginning of round $r'$, which in turn means that $x_q$ was
     either $q$'s initial value, or $q$ has updated $x_q$ after
     receiving a message in some round $r_q<r$.  
By repeating this argument, we will eventually reach a process
that sent its initial value, since no process can have updated its
     decision estimate prior to the first round.
\end{proof}

The following Lemma~\ref{lem:assorted-properties} states a number of 
properties maintained by our algorithm when the first process $p$ has 
decided.
Essentially, they say that there has been a vertex-stable root component 
for at
least $2D+1$ rounds centered around the lock round $\ell$ (but not earlier),
and asserts that all processes in that root component chose the same lock round $\ell$.

\begin{lemma}\label{lem:assorted-properties}
Suppose that process $p$ decides in round $r$, no decisions
     occurred before $r$, and $\ell=lockRound_p^r$,\footnote{We denote the value of a variable $v$ of process $p$ in
round $r$ (before the round $r$ computation finishes) as $v_p^r$; we usually
suppress the superscript when it refers to the current round.} then 
     \begin{itemize}
     \item[(i)] $p$ is
     in the vertex-stable root component $\R^I$ with
     $I=[\ell-D-1,\ell+D]$, 
     \item[(ii)] $\ell+D\le r\le\ell+2D$,  
     \item[(iii)]$\R^I\ne\R^{\ell-D-2}$, and  
     \item[(iv)] all processes in $\R^I$
     executed Line~\ref{line:lock} in round $\ell$, and no process
in $\Pi\setminus{\R^I}$ can have executed Line~\ref{line:lock} in
a round $\geq \ell$.
     \end{itemize}
\end{lemma}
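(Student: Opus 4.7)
The plan is to unpack the algorithmic conditions forced by $p$'s decision in round $r$ and then run Corollary~\ref{cor:stable2root} and Lemma~\ref{lem:root2Cpr} backward to extract the structural claims about $\R^I$. Since no process has decided before round $r$, $p$ cannot enter Line~\ref{line:decOther} and must decide in Line~\ref{line:decOwn}. This forces, in round $r$: (a) $\stableSCC([r-D-1,r-D])$ true at $p$; (b) $locked_p$ already true with $lockRound_p=\ell$; and (c) $\stableSCC([\ell,\ell+D])$ true at $p$. For $locked_p$ to have remained true continuously from round $\ell$ to round $r$, the reset branch must not have fired, so $\stableSCC([r'-D-1,r'-D])$ was true at $p$ in every round $r'\in[\ell,r]$, with the $r'=\ell$ case responsible for the lock itself.

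To establish (i) and the lower bound in (ii), apply Corollary~\ref{cor:stable2root} to each firing predicate, yielding $p\in\R^s$ for every $s$ in $[\ell-D-1,r-D]\cup[\ell,\ell+D]$. To promote these pointwise memberships to genuine vertex-stability of $\R^I$, observe using Lemma~\ref{lem:AsubsetG}(ii) and Lemma~\ref{lem:Cpr2root} that whenever $A_p|s$ is strongly connected with at least two vertices we have $V(C_p|s)=V(\R^s)$: the closure-under-in-edges property forces $C_p|s$ to absorb every $\G^s$-in-neighbor of any of its members, and conversely each vertex of $C_p|s$ is mutually reachable with $p$ in $\G^s$ and therefore lies in $\R^s$. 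Each successful $\stableSCC(J)$ check then equates $V(\R^{s_1})$ with $V(\R^{s_2})$ for all $s_1,s_2\in J$, and chaining the overlapping intervals gives $V(\R^{\ell-D-1})=\cdots=V(\R^{\ell+D})$. The lower bound $r\ge\ell+D$ drops out because $\stableSCC([\ell,\ell+D])$ queries $C_p|{\ell+D}$, which by Lemma~\ref{lem:Cpr2root} requires $\ell+D<r$.

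The upper bound $r\le\ell+2D$ is the main obstacle, and the argument runs by contradiction: assume $r>\ell+2D$. The lock-persistence chain at $r'=\ell,\ldots,\ell+2D$ extends the already-established vertex-stability well past $\ell+D$, and the interval guaranteed by Assumption~\ref{ass:window} supplies the additional buffer that Lemma~\ref{lem:root2Cpr}, applied to the \goodD\ sub-intervals starting at each $s\in[\ell,\ell+D]$, needs to detect $C_p|s=\R^I$ by round $s+D\le\ell+2D$. Hence both $\stableSCC([\ell+D-1,\ell+D])$ and $\stableSCC([\ell,\ell+D])$ would already hold at $p$ in round $\ell+2D$, forcing $p$ to decide via Line~\ref{line:decOwn} at round $\ell+2D$, contradicting $r>\ell+2D$.

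For (iii), assume by contradiction that $V(\R^{\ell-D-2})=V(\R^I)$. Combined with (i) this yields a \goodD\ vertex-stable root component on $[\ell-D-2,\ell+D]$, so Corollary~\ref{cor:allrootdec} makes $\stableSCC([\ell-D-2,\ell-D-1])$ true at $p$ in round $\ell-1$. If $p$ was unlocked at the start of round $\ell-1$ it would lock there, giving $lockRound_p=\ell-1$; if it was already locked, then $lockRound_p<\ell-1$. In either case $lockRound_p\neq\ell$, a contradiction. Part~(iv) splits into two. For $q\in\R^I$, Corollary~\ref{cor:allrootdec} applied to $\R^I$ gives $\stableSCC([\ell-D-1,\ell-D])$ true at $q$ in round $\ell$, so $q$ executes Line~\ref{line:lock} in round $\ell$ unless it had already locked; a prior lock at some $\ell_q<\ell$ would, via the same lock-persistence analysis now applied to $q$, extend vertex-stability of the root component back to round $\ell_q-D-1\le\ell-D-2$, violating (iii). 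For $q\notin\R^I$, any lock by $q$ at a round $r''\ge\ell$ in the window relevant to the lemma would require $\stableSCC([r''-D-1,r''-D])$ true at $q$ and therefore place $q\in\R^s$ for some $s\in I$ by Corollary~\ref{cor:stable2root}; vertex-stability of $\R^I$ then forces $q\in V(\R^I)$, the desired contradiction.
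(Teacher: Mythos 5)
Your proof is correct and follows essentially the same route as the paper's: decide-in-Line~\ref{line:decOwn} forces the two $\stableSCC$ conditions, Lemma~\ref{lem:Cpr2root}/Corollary~\ref{cor:stable2root} turn them into root-component membership for (i) and the lower bound of (ii), Lemma~\ref{lem:root2Cpr} gives the upper bound of (ii) and, via the ``would have locked in round $\ell-1$'' contradiction, item (iii), and Corollary~\ref{cor:allrootdec} together with (iii) gives (iv). You additionally spell out details the paper leaves implicit (the chaining of overlapping $\stableSCC$ windows to get genuine vertex-stability, and the case analysis for a hypothetical earlier lock in (iv)), which only strengthens the argument.
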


\begin{proof}
Item (i) follows since Line~\ref{line:if-G} has been continuously
      $\true$ since round $\ell$ and from Lemma~\ref{lem:Cpr2root}.
As for item (ii), $\ell+D\le r$ follows from the requirement of
      Line~\ref{line:if-lock-heard}, while $r\le\ell+2D$ follows from
      (i) and the fact that by Lemma~\ref{lem:root2Cpr} the
      requirement of Line~\ref{line:if-lock-heard} cannot be, for the
      first time, fullfilled strictly after round $\ell+2D$. 
From Lemma~\ref{lem:root2Cpr}, it also follows that if
      $\R^{\ell-D-2}=\R^I$, then the condition in Line~\ref{line:if-G}
      would return true already in round $\ell-1$, thus locking in round
      $\ell-1$. 
Since $p$ did not lock in round $\ell-1$, (iii) must hold. Finally, 
from (i), (iii), and Lemma~\ref{lem:root2Cpr}, it follows that
      every other process in $\R^I$ also has
      $\stableSCC([\ell-D-1,\ell-D])=\true$ in round $\ell$. 
Moreover, due to (iii), $\stableSCC([\ell-1-D-1,\ell-1-D])=\false$ 
in round $\ell-1$, which causes
      all the processes in $\R^I$ (as well as those in $\Pi\setminus{\R^I}$)
to set $lockRound$ to 0. 
Since $\stableSCC([\ell'-D-1,\ell'-D])$ cannot become true for any 
$\ell'\geq \ell$ at a process $q\in \Pi\setminus\R^I$, as $C_q|r=\emptyset$
for any $r\in I$, (iv) also holds.
\end{proof}

The following Lemma~\ref{lem:highlander} asserts that if a process
decides, then it has successfully imposed its proposal value on all other 
processes.

\begin{lemma}[Identical proposal values] \label{lem:highlander}
Suppose that process $p$ decides in Line~\ref{line:decideOwn} in round $r$ and
that no other process has executed Line~\ref{line:decideOwn} before $r$.
Then, for all $q$, it holds that $x_q^r = x_p^r$.
\end{lemma}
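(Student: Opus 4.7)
The plan is to start by applying Lemma~\ref{lem:assorted-properties} with $\ell \definedas lockRound_p^r$, which yields $p\in\R^I$ for $I=[\ell-D-1,\ell+D]$, the timing bound $\ell+D \leq r \leq \ell+2D$, that every $q\in\R^I$ executed Line~\ref{line:lock} in round $\ell$ (so it has $lockRound_q=\ell$ at the end of round $\ell$), and that no process in $\Pi\setminus\R^I$ ever executes Line~\ref{line:lock} in a round $\geq \ell$.

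My first main step is to establish the invariant that, at the end of every round $s \leq r$ and every process $q\in\Pi$, $lockRound_q \leq \ell$ (with strict inequality whenever $s<\ell$). I proceed by induction on $s$. The only ways $lockRound_q$ can change are via Line~\ref{line:lock} and the lexicographic max in Line~\ref{line:updateXp}. Line~\ref{line:lock} cannot yield a value $>\ell$: for $s<\ell$ the assigned value is strictly less than $\ell$ by construction; for $s=\ell$ only processes in $\R^I$ reach the assignment (and they set $lockRound$ to exactly $\ell$); for $s>\ell$ the guard $\textbf{not}\ locked_q$ fails for $q\in\R^I$ (already locked since round $\ell$), while Lemma~\ref{lem:assorted-properties}(iv) rules it out for $q\notin\R^I$. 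Line~\ref{line:updateXp} only forwards previously sent values, which the inductive hypothesis caps at $\ell$.

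Next I define $v^*$ to be the maximum of $x_q$ over all $q \in \R^I$ at the end of round $\ell$. Because every message delivered in round $\ell$ reflects end-of-round-$(\ell-1)$ state --- when by the invariant every $lockRound$ is strictly less than $\ell$ --- the only processes holding an $(\ell,\cdot)$-pair at the end of round $\ell$ are exactly the members of $\R^I$, and $v^*$ is attained by some $p^* \in \R^I$ that holds $(\ell, v^*)$. The \goodD{} vertex-stability of $\R^I$ on $I$ forces $D^{\ell+1} \leq D$, so by Lemma~\ref{lem:infpropN} every $q \in \Pi$ is reached by a causal chain from $p^*$ starting in round $\ell+1$ of length at most $D$, and this chain finishes inside $I$, i.e.,\ by the end of round $\ell+D \leq r$. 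Along each such chain, every intermediate process executes Line~\ref{line:updateXp} (no one has decided before round $r$, so the outer $\textbf{not}\ decided_q$ guard holds), and by the invariant it cannot be holding any lexicographically larger pair; it therefore overwrites its state with $(\ell,v^*)$ and re-broadcasts it the next round. Consequently $(lockRound_q, x_q) = (\ell, v^*) = (\ell, x_p^r)$ at round $r$ for every $q$, yielding $x_q^r = x_p^r$.

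The main obstacle I anticipate is verifying that each hop along the causal chain genuinely overwrites the intermediate state; this hinges squarely on the $lockRound \leq \ell$ invariant, and in particular on Lemma~\ref{lem:assorted-properties}(iv) --- without it, some $q \in \Pi \setminus \R^I$ could start locking at a round $>\ell$ and thereby create a competing pair $(\ell', x')$ with $\ell' > \ell$ that would block the propagation of $(\ell, v^*)$. A secondary care is to make sure the causal chain actually completes within $[\ell+1,\ell+D] \subseteq I$, which relies precisely on the \goodD{} property from Assumption~\ref{ass:window}.
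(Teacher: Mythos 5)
Your proof is correct and follows essentially the same route as the paper's: invoke Lemma~\ref{lem:assorted-properties} to pin down the lock round $\ell$ and the membership of $\R^I$, argue that $\ell$ is the maximal $lockRound$ value in the system up to round $r$, and then propagate the lexicographically maximal pair $(\ell, x_{\max})$ from the root component to every process along a causal chain of length at most $D$, using the fact that no \textsc{decide} messages interfere before round $r$. Your explicit induction for the $lockRound_q \leq \ell$ invariant and the appeal to the \goodD{} property for the chain starting in round $\ell+1$ merely spell out steps the paper's proof states more tersely.
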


\begin{proof}
Using items (i) and (iv) in Lemma~\ref{lem:assorted-properties}, we can
     conclude that $p$ was in the vertex-stable root component $\R$ of
     rounds $\ell=lockRound_p^r$ to $\ell+D$ and that all processes  in
     $\R$ have locked in round $\ell$. 
Therefore, in the interval $[\ell,\ell+D]$, $\ell$ is the maximal
     value of $lockRound$. 
More specifically, all processes $q$ in $\R$ have $lockRound_q=\ell$,
     whereas all processes $s$ in $\Pi\setminus\R$ have
     $lockRound_s<\ell$ during these rounds by Lemma~\ref{lem:assorted-properties}.(iv).
Let $m\in \R$ have the largest proposal value
     $x_m^\ell=x_{max}$ among all processes in $\R$. 
Since $m$ is in $\R$, there is a causal chain of length at most $D$
     from $m$ to any $q\in\Pi$. 
Since no process executed Line~\ref{line:decideOwn} before round $r$, no
     process will send $\textsc{decide}$ messages in $[\ell,\ell+D]$. 
Thus, all processes continue to execute the update rule of
     Line~\ref{line:updateXp}, which implies that $x_{max}$ will
     propagate along the aforementioned causal path to $q$.
\end{proof}

\begin{theorem} \label{thm:consensus}
  Let $r_{ST}$ be the first round where Assumption~\ref{ass:window} holds.
  Algorithm~\ref{alg:consensus} in conjunction with Algorithm~\ref{alg:approx} solves
  consensus by round $r_{ST}+4D+1$.
\end{theorem}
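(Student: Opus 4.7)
The plan is to assemble the theorem from the three standard pieces (validity, agreement, termination), leveraging the lemmas and corollaries already established. Validity is immediate from Lemma~\ref{lem:validity}; the real work is to combine Lemma~\ref{lem:assorted-properties} and Lemma~\ref{lem:highlander} (which describe what must hold the first time any process reaches Line~\ref{line:decOwn}) with the timing bounds supplied by Corollary~\ref{cor:allrootdec} and Lemma~\ref{lem:root2Cpr}.

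For agreement, I would fix the first process $p$ that ever decides; by the definition of the algorithm this must be a decision via Line~\ref{line:decOwn} in some round $r$, with lock round $\ell = lockRound_p^r$. Lemma~\ref{lem:highlander} already says $x_q^r = x_p^r$ for every $q \in \Pi$. I would then argue by induction on subsequent rounds $r' \geq r$ that $x_q^{r'} = x_p^r$ for every $q$: once all estimates coincide, the $\max$-update in Line~\ref{line:updateXp} cannot change the value, and the only other way an $x_q$ is altered is by adopting the payload of an incoming $\textsc{decide}$ message in Line~\ref{line:decOther}, which (by another easy induction on the order of decisions) also carries $x_p^r$. Thus every later decision, whether via Line~\ref{line:decOwn} or Line~\ref{line:decOther}, agrees with $x_p^r$.

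For termination I would trace the timeline forced by Assumption~\ref{ass:window}. Let $\R$ be the $J$-vertex-stable root component with $J=[r_{ST},r_{ST}+d]$, $d>4D$. By Lemma~\ref{lem:root2Cpr} applied to the prefix $[r_{ST},r_{ST}+D+1]$ of $J$, from round $r_{ST}+D+1$ on every $p\in\R$ has $C_p|s = \R$ for every $s$ with $r_{ST}\le s\le (\text{current round})-D$, so the predicate $\stableSCC([\ell-D-1,\ell-D])$ in Line~\ref{line:if-G} first becomes true at the same round $\ell\le r_{ST}+2D+1$ for every member of $\R$. By Lemma~\ref{lem:assorted-properties}(iv) all members of $\R$ execute Line~\ref{line:lock} simultaneously in round $\ell$ and no process outside $\R$ can compete with that lock round. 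After a further $D$ rounds of flooding inside the root component (guaranteed by the $D$-bounded causal diameter), every $p\in\R$ has estimate equal to the maximum proposal in $\R$, and the inner predicate $\stableSCC([\ell,\ell+D])$ in Line~\ref{line:if-lock-heard} becomes true at every member of $\R$ by round $\ell+2D \le r_{ST}+3D+1$, triggering Line~\ref{line:decOwn}. One last phase of $\textsc{decide}$-flooding, which again takes at most $D$ rounds thanks to the \goodD{} causal diameter of the root component and the fact that $J$ is long enough to still cover these rounds, delivers a $\textsc{decide}$ message to every process in $\Pi$ by round $r_{ST}+4D+1$, at which point every process has executed Line~\ref{line:decOther} or Line~\ref{line:decOwn}.

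The main obstacle I expect is the timing bookkeeping: simultaneously ensuring that (i) the lock round $\ell$ is large enough that $\stableSCC([\ell-D-1,\ell-D])$ is actually usable (so $\ell\ge r_{ST}+D+1$), (ii) $\ell+2D$ still lies inside $J$ so that the inner $\stableSCC$ call genuinely fires for members of $\R$, and (iii) $\ell+3D$ still lies inside $J$ so that the $\textsc{decide}$ flood completes before the window of stability ends. These three requirements collapse exactly to $d>4D$, which is precisely what Assumption~\ref{ass:window} provides. A secondary subtlety is ruling out spurious decisions by processes outside $\R$: those processes never satisfy the outer $\stableSCC$ test during $J$ by Corollary~\ref{cor:stable2root} and Lemma~\ref{lem:Cpr2root}, so they cannot reach Line~\ref{line:decOwn} and are only caught by the final $\textsc{decide}$ flood, which is harmless for agreement by the induction above.
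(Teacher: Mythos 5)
Your proof follows essentially the same route as the paper: Validity via Lemma~\ref{lem:validity}, Agreement via Lemma~\ref{lem:highlander} applied to the first decider (the paper leaves your closing induction on later rounds implicit), and Termination by tracing locking at round $\ell$, deciding by $\ell+2D$, and a final $D$-round $\textsc{decide}$ flood, exactly as in the paper's argument via Corollary~\ref{cor:allrootdec}. The only blemish is your stated bound $\ell\le r_{ST}+2D+1$, which is inconsistent with your subsequent use of $\ell+2D\le r_{ST}+3D+1$; Corollary~\ref{cor:allrootdec} in fact yields $\ell=r_{ST}+D+1$ (the value the paper uses, and the one your remaining arithmetic implicitly assumes), so the slip is harmless.
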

\begin{proof}
  Validity holds by Lemma~\ref{lem:validity}. Considering Lemma~\ref{lem:highlander}, 
  we immediately get Agreement: Since the first process $p$ that decides must
  do so via Line~\ref{line:decOwn}, there are no other proposal values left
  in the system.

  Observe that, so far, we have not used the liveness part of 
  Assumption~\ref{ass:window}.  In
  fact, Algorithm~\ref{alg:consensus} is always safe in the sense that Agreement
  and Validity are not violated, even if there is no vertex-stable root
  component. 

  We now show the Termination property. By
  Corollary~\ref{cor:allrootdec}, we know that every process in $p\in\R$
  evaluates the predicate $\stableSCC([r_{ST},r_{ST}+1])=\true$ in
  round $\ell=r_{ST}+D+1$, thus locking in that round.
  Furthermore, Assumption~\ref{ass:window} and Corollary~\ref{cor:allrootdec}
  imply that at the latest in round $d=\ell+2D$ every process $p\in\R$ will evaluate the
  condition of Line~\ref{line:if-lock-heard} to $\true$ and thus
  decide using Line~\ref{line:decOwn}.
  Thus, every such process $p$ will send out a message
  $m=\msg{\textsc{decide},x_p}$. 
By the definition of $D$ and Assumption~\ref{ass:window}, we know that
the round $d$ network causal diameter satisfies $D^{d}(\Pi)\le D$, such that every $q\in\Pi$ will
receive the $\textsc{decide}$ message at the latest in round $d+D=\ell+3D=r_{ST}+4D+1$.
\end{proof}

\subsection{Improved Time Complexity}
\label{sec:improved}

We now discuss how to guarantee a smaller causal network diameter, by considering
networks with sufficient expansion properties.
Recall that an undirected graph $\G$ is an 
\emph{$\alpha$-vertex expander} if, for all sets $S \subset V$ of size $\le 
|\G|/2$, it holds that $\frac{|\N(S)|}{|S|} \ge \alpha$, where $\N(S)$ is
the set of neighbors of $S$ in $\G$, i.e., those nodes in
$V(\G)\setminus S$ that have a neighbor in $S$.
Such graphs exist and can be constructed explicitly (cf.\
\cite{HLW2006:Expander}).

Interpreting the above statement for undirected expanders in the
     context of communication graphs allows to reason both about the
     incoming and the outgoing transmissions of a process.
For defining our expander property for directed communication networks
we therefore consider (for a vertex/process set $S$ and a round $r$)
both the set $\N^r_+(S)$ of nodes outside of $S$ that are reachable from $S$ and 
the set of nodes $\N^r_-(S)$ that can reach $S$ in $r$.

\begin{assumption}[Expander Topology] \label{ass:fast}
There is a fixed constant $\alpha$ and fixed set $\R$ such that the following 
conditions hold for all sets $S \subseteq V(\G^r)$:
\begin{compactenum}
\item[(a)] If $|S|\le |\R|/2$ and $S\subseteq\R$, then
  $\frac{|\N^r_+(S) \cap \R|}{|S|} \ge \alpha$ and $\frac{|\N^r_-(S) 
  \cap \R|}{|S|} \ge \alpha$.
\item[(b)] If $|S|\le n/2$ and $\R\subseteq S$, then
  $\frac{|\N^r_+(S)|}{|S|} \ge \alpha$.
\item[(c)] If $|S|\le n/2$ and $\R \cap S = \emptyset$, then 
  $\frac{|\N^r_-(S)|}{|S|} \ge \alpha$.
\end{compactenum}
\end{assumption}

Our next theorem shows that (1) the assumption of a expander topology
     for communication graphs does not contradict our previous
     assumption about root components and that (2) these expander
     topologies also guarantee the causal diameter and thus the time
     complexity of our algorithm to be in $O(\log n)$.

\begin{lemma}
  There are sequences of graphs where Assumptions~\ref{ass:window} and 
  \ref{ass:fast} are satisfied simultaneously and, for any such run,
  there is an interval $I$ during which there exists an $O(\log 
  n)$-bounded $I$-vertex stable root component.
\end{lemma}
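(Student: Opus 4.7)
The plan has two independent parts: (i) exhibit a concrete family of graph sequences that satisfies Assumptions~\ref{ass:window} and~\ref{ass:fast} simultaneously, and (ii) use the expansion properties of Assumption~\ref{ass:fast} to bound the network causal diameter of the vertex-stable root component by $O(\log n)$, so that the stability window guaranteed by Assumption~\ref{ass:window} contains an $O(\log n)$-bounded $I$-vertex-stable root component in the sense of Definition~\ref{def:goodD}.

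For part (i), I would fix a subset $\R\subseteq\Pi$ of size $\Theta(n)$ and, for every round $r$, build $\Ga^r$ by hard-wiring three pieces: on $\R$, place both orientations of an explicit bounded-degree $\alpha$-vertex expander (e.g., a Ramanujan graph, see~\cite{HLW2006:Expander}); from $\R$ to $\Pi\setminus\R$, place an $\alpha$-expanding bipartite set of edges covering every non-$\R$ vertex; and within $\Pi\setminus\R$, orient edges so that every subset receives enough incoming neighbors. Since no edge points from $\Pi\setminus\R$ into $\R$ and $\R$ reaches every other vertex, $\R$ is the unique root component in every round. Using the \emph{same} $\Ga^r$ in all rounds makes $\R$ a $J$-vertex-stable root component for arbitrarily long $J$, so Assumption~\ref{ass:window} is satisfied, and conditions (a)--(c) of Assumption~\ref{ass:fast} hold by construction.

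For part (ii), I would run a standard expander doubling argument adapted to directed causal chains. Fix any $r$ inside the stability window and any $p\in\R$, and define $P_0=\{p\}$ and $P_{i+1}=P_i\cup\N_+^{r+i}(P_i)$, so $P_i$ is exactly the set of processes reached from $p$ by causal chains of length $\leq i+1$ starting in round $r$. Assumption~\ref{ass:fast}(a) gives $|P_{i+1}|\geq(1+\alpha)|P_i|$ as long as $P_i\subseteq\R$ and $|P_i|\leq|\R|/2$; once $|P_i|>|\R|/2$, I would switch to bounding $|\R\setminus P_i|$ using condition (a) on incoming neighborhoods, which forces the complement to shrink by a factor $(1+\alpha)^{-1}$ per round until $\R\subseteq P_i$. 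From there, condition (b) continues the geometric growth of $P_i$ until $|P_i|>n/2$, and finally condition (c) applied to the complement $\Pi\setminus P_i$ closes the gap. Summing the four phases yields $\dist_r(p,q)\leq c\log n$ for every $q\in\Pi$, uniformly in $r$ and $p\in\R$, so the round $r$ network causal diameter satisfies $D^r\leq c\log n$.

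The main obstacle is the stitching at the phase boundaries, where I must switch between tracking the growing set $P_i$ and its (temporarily smaller) complement, and between conditions (a), (b) and (c), without losing the $(1+\alpha)$ factor in any round. Once that case analysis is done, taking $D=\lceil c\log n\rceil$ and picking $I=[r,r+4D]$ inside the long stability window from Assumption~\ref{ass:window} immediately yields an $I$-vertex-stable root component that is $O(\log n)$-bounded as required.
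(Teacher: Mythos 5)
Your proposal is correct, and its first half matches the paper's proof: the paper likewise builds the witness sequence from undirected $\alpha$-vertex expanders (one on $\R$, one on the whole vertex set), orients every undirected edge in both directions, and then deletes all edges entering $\R$ from outside, so that $\R$ is the unique\dash---and trivially vertex-stable\dash---root component while conditions (a)--(c) survive. Where you genuinely diverge is in the diameter bound. The paper grows the forward influence set $\P_i$ of $p$ until it exceeds $|\R^I|/2$ after $k\in O(\log n)$ rounds, and symmetrically grows a \emph{backward} influence set $\Q_i$ of the target $q$ (defined backwards in time via in-neighborhoods) until it too exceeds $|\R^I|/2$; since $\P_k\cap\Q_k\neq\emptyset$, a causal chain of length $2k$ from $p$ to $q$ exists, and the same two-sided argument with conditions (b) and (c) handles the whole network. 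You instead run a purely forward argument: once $\P_i$ passes $|\R|/2$ you shrink the complement $S_i=\R\setminus\P_i$ geometrically. That step does go through, but only if condition (a) is applied to $S_{i+1}$ rather than to $S_i$: a node stays in $S_{i+1}$ precisely because none of its round-$(r+i)$ in-neighbors lie in $\P_i$, and $\R$ has no in-edges from outside, so $\N^{r+i}_-(S_{i+1})\subseteq S_i\setminus S_{i+1}$ and hence $|S_i|\geq(1+\alpha)|S_{i+1}|$; the analogous use of (c) on $\Pi\setminus\P_i$ finishes the network-wide bound. This is exactly the ``stitching'' you flag as the remaining work\dash---the paper's meet-in-the-middle trick avoids it at the cost of introducing the backward sets, while your route avoids the backward sets at the cost of this complement bookkeeping. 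Both yield $D\in O(\log n)$ and hence a \goodD\ vertex-stable root component inside the stability window.
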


\begin{proof}
We will first argue that \emph{directed} graphs exist that simultaneously 
satisfy Assumptions~\ref{ass:window} and \ref{ass:fast}.
Consider the simple \emph{undirected} graph $\bar{\U}$ that is the 
union of an $\alpha$-vertex expander on $\R^I$, and
an $\alpha$-vertex expander on $V(\G^r)$. 
We turn $\bar{U}$ into a directed graph by replacing every edge $(p,q)\in E(\bar{\U})$ with oriented 
directed edges $p\ra q$ and $q\ra p$. This guarantees Properties
(a)-(c). In order to guarantee the existence of exactly one root
component, we drop all directed edges pointing to
$\R$ from the remaining graph, i.e., we remove all edges  $p\ra q$
where $p\in\R$ and $q\not\in\R$, which leaves Properties (a)-(c)
intact and makes the $\R$ from Assumption~\ref{ass:fast} the single
root component of the graph.
We stress that the actual topologies chosen by the adversary might be 
quite different from this construction, which merely serves us to show 
the existence of such graphs.

It is also worth mentioning that Assumption~\ref{ass:window}
only requires the set of vertices in $\R^J$ to remain unchanged but
the topology can change arbitrarily. With also assuming
Assumption~\ref{ass:fast} this does not change. However, as we show
next the (per round) expander topology is strong enough to guarantee a network causal diameter in 
$O(\log n)$.

For $i\ge1$, let $\P_i\subseteq\R^I$ be the set of processes $q$ in $\R^I$ 
such that $(p\ltedge{b[i]}q)$, and $\P_0=\{p\}$.
We first show that $D^I(\R^I) \in O(\log n)$.
The result is trivial if $|\R^I|\in O(\log n)$, thus assume that 
$|\R^I|\in\Omega(\log n)$ and consider some process $p \in \R^I$.
For round $b$, Property (a) yields $|\P_{1}| \ge |\P_{0}|(1+\alpha)$.
In fact, for all $i$ where $|\P_{i}|\le |\R^I|/2$, we can apply Property (a) to 
get $|\P_{i + 1}| \ge |\P_{i}|(1+\alpha)$, hence $|\P_{i}| \ge
\min\{(1+\alpha)^i,|\R^I|/2\}$. 
Let $k$ be the smallest value such that
$|(1+\alpha)^{k}|>|\R^I|/2$, which guarantees that $|\P_{k}|>|\R^I|/2$.
That is, $k = \left\lceil\frac{\log(|\R^{I}|/2)}{\log(1+\alpha)}\right\rceil 
\in O(\log n)$.
Now consider any $q\in\R^I$ and define $\Q_{i-1}\subset\R^I$ as 
the set of nodes that causally influence the set $\Q_i$ in round $b+i$, for 
$\Q_{2k+1}=\{q\}$.
Again, by Property (a), we get $|\Q_{i-1}| \ge |\Q_i|(1+\alpha)$, so $|\Q_{2k-i}| \ge 
\max\{(1+\alpha)^{i},|\R|/2\}$. From the definition of $k$ above we
thus have $|\Q_{k}|>|\R^I|/2$.
Since $\P_{k} \cap \Q_{k} \ne \emptyset$, it follows that every $p \in 
\R^I$ influences every $q\in \R^I$ within $2k\in O(\log n)$
rounds. Note that while we have shown this for $\dist_b(p,q)$ only,
this is also valid for any $\dist_r(p,q)$ with $r<e-2k$.

Finally, to see that this guarantees $D$-boundedness with $D\in O(\log
n)$, we use Properties (b) and (c) similarly to the case above.
For any round $r \in [b,e-2k']$, we know by (b) that any process $p\in \R^I$ has 
influenced at least $n/2$ nodes by round $r+k'$ where $k' = \rceil 
\log_{1+\alpha}(n/2)\lceil \in O(\log n)$ by arguing as for the $\P_i$
sets above.
Now (c) allows us to reason along the same lines as for the sets 
$\Q_{i-1}$ above. That is any $q$ in round $r+2k'$ will be influenced 
by at least $n/2$ nodes. Therefore, any $p$ will influence every $q\in\Pi$ by round 
$r+2k'$, which completes the proof.
\end{proof}

\begin{corollary}
  Suppose that Assumptions~\ref{ass:window} and \ref{ass:fast} hold.
  Then, running Algorithms~\ref{alg:approx} and \ref{alg:consensus} solves 
  consensus by round $r_\text{ST}+O(\log n)$.
\end{corollary}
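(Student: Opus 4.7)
The proof I have in mind is a direct composition of the preceding Lemma with Theorem~\ref{thm:consensus}. The preceding Lemma has already done the heavy lifting: under Assumptions~\ref{ass:window} and \ref{ass:fast}, the $J$-vertex-stable root component $\R^J$ guaranteed by Assumption~\ref{ass:window} is in fact $D$-bounded with $D \in O(\log n)$, substantially sharpening the worst-case bound $D = n-1$ from Lemma~\ref{lem:boundDIN}. Moreover, Assumption~\ref{ass:window} only demands that the window of stability satisfy $|J| > 4D$, so with $D \in O(\log n)$, a window of length $O(\log n)$ suffices, and such a window exists by assumption.

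Given this, the plan is to simply invoke Theorem~\ref{thm:consensus} with the new, smaller value of $D$. Algorithm~\ref{alg:consensus} uses $D$ only as a lookback parameter in its calls to $\stableSCC$, and the correctness proof of Theorem~\ref{thm:consensus} (i.e., Lemmas~\ref{lem:validity}--\ref{lem:highlander}, together with Corollaries~\ref{cor:stable2root} and~\ref{cor:allrootdec}) never relies on $D$ being $n-1$; it only uses that $D$ is a valid upper bound on the causal network diameter and that the window of stability has length at least $4D+1$. Both of these remain true by the preceding Lemma, so instantiating the algorithm with this smaller $D$ preserves agreement, validity, and termination.

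Combining these, Theorem~\ref{thm:consensus} yields that consensus is reached by round $r_\text{ST} + 4D + 1 = r_\text{ST} + O(\log n)$, which is the claimed bound. I do not expect any serious obstacle here: the only point that deserves a sentence of justification is that the termination argument of Theorem~\ref{thm:consensus} is parameterized uniformly in $D$, so it survives the replacement of the pessimistic $n-1$ bound with the logarithmic bound supplied by the expander assumption. In short, the corollary is essentially a substitution step on top of the preceding Lemma.
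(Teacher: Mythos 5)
Your proposal is correct and matches the paper's intent: the corollary is stated without an explicit proof precisely because it is the direct substitution you describe, combining the preceding lemma's $O(\log n)$ bound on $D$ with the $r_{ST}+4D+1$ bound of Theorem~\ref{thm:consensus}. Your additional remark that the correctness argument is parameterized uniformly in $D$ (so the algorithm may be instantiated with the smaller lookback value) is exactly the one point worth making explicit.
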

}
\section{Impossibilities and Lower Bounds} \label{sec:imposs}

\onlyShort{In this section, we will present a number of results that
  show that our basic Assumption~\ref{ass:window}, }%
\onlyLong{In this section, we will prove that our basic
  Assumption~\ref{ass:window}, }%
in particular,
the existence of a stable window (of a certain minimal size) and the 
knowledge of an upper bound $D$ on the causal network
diameter, are crucial for making consensus solvable. Moreover, we will show
that it is not unduly strong, as many problems considered in distributed systems 
in general (and dynamic networks in particular) remain unsolvable.

\onlyLong{
First, we relate our Assumption~\ref{ass:window} to the classification 
of~\cite{CFQS11:TVG}.  Lemma~\ref{lem:classification}
reveals that it is stronger than one of the two weakest classes, but also
weaker than the next class.

\begin{lemma}[Properties of root components]\label{lem:classification}
Assume that there is at most one root component $\R^r$ in every $\Gr$, $r>0$.
Then, (i) there is at least one process $p$ such that $\dist_1(p,q)$ is
finite for all $q\in\Pi$, and this causal distance is in fact at most $n(n-2)+1$. 
Conversely, for $n>2$, the adversary can choose topologies 
where (ii) no process $p$ is causally influenced by all other processes 
$q$, i.e., ${\not\exists p}\ \forall q\colon \dist_1(q,p) < \infty$.
\end{lemma}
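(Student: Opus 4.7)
Part~(ii) admits a direct construction. For $n > 2$, fix a distinguished process $a$ and let $\G^r$ consist, in every round $r$, of exactly the $n-1$ edges $a \to q$ for $q \in \Pi\setminus\{a\}$. Then $\{a\}$ is the unique root component of every $\G^r$, so the hypothesis of the lemma holds, and no process other than $a$ ever has an outgoing edge. Consequently the only process that causally influences $a$ is $a$ itself, while for any $p \neq a$ only $a$ and $p$ causally influence $p$. Since $n > 2$, neither set equals $\Pi$, so no process is causally influenced by all others, as claimed.

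For part~(i) I would proceed by induction on $n$; the base case $n = 2$ is immediate. For the inductive step, fix some $p_0 \in \R^1$ and track the monotone sequence $T_r := \{q \in \Pi : (p_0 \ltedge{1[k]} q) \text{ for some } k \leq r\}$, which stabilizes (since $|\Pi|$ is finite) at some $S_0 \subseteq \Pi$. If $S_0 = \Pi$ we take $p := p_0$. Otherwise, the key observation is that once $T_r = S_0$, no edge from $S_0$ to $\Pi \setminus S_0$ can ever appear again in any subsequent $\G^{r'}$ (else $T_{r'}$ would strictly grow). Combined with Lemma~\ref{lem:root}, this forces $\R^{r'} \subseteq \Pi \setminus S_0$ for all sufficiently large $r'$, and one checks that the restriction of the graph sequence to $\Pi \setminus S_0$ from that point on still has exactly one root component per round, because SCCs of $\G^{r'}$ cannot mix $S_0$ with $\Pi \setminus S_0$ in the absence of crossing edges out of $S_0$.

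The inductive hypothesis applied to the restricted subsequence on $\Pi \setminus S_0$ produces some $p_1 \in \Pi \setminus S_0$ whose causal reach in the restriction covers all of $\Pi \setminus S_0$; the self-loops $p_1 \ltedge{r} p_1$ allow this reach to be ``back-dated'' to start at round~$1$ in the full network. To finally pull $S_0$ into $p_1$'s reach, for each large $r'$ the root $\R^{r'}$ has a directed path in $\G^{r'}$ to every $s \in S_0$, so $\G^{r'}$ must contain a crossing edge $v \to u$ with $v \in \Pi\setminus S_0$ and $u \in S_0$; since $v$ is already in $p_1$'s reach, $u$ enters the reach in round $r'$, and iterating over successive rounds provides enough entry points into $S_0$ to cover it via the causal structure $p_0$ had already established on $S_0$. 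The main obstacle is bounding the round count tightly at $n(n-2)+1 = (n-1)^2$: a naive recursive estimate $(n - |S_0| - 1)^2$ plus extra rounds for crossings and in-$S_0$ propagation must be balanced against the adversary's ability to delay each growth event. The most delicate sub-step is arguing that once $p_1$ lands in $S_0$, the signal propagates over all of $S_0$ within the remaining budget; this is where I would exploit the fact that $S_0$ was \emph{entirely} built up by $p_0$'s original flooding, so the historical reach tree on $S_0$ can be re-used to distribute $p_1$'s signal from any entry point.
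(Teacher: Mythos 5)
Your part (ii) is exactly the paper's construction (a static star whose center is a single-vertex root component) and is correct. The problems are in part (i), and they are structural rather than just a matter of tightening constants.

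First, anchoring the argument on a process $p_0\in\R^1$ buys you nothing after round $1$. The hypothesis only constrains each round's graph to have one root component; it does not keep $p_0$ (or its reach set $T_r$) anywhere near the root component in later rounds. The adversary can place the root component outside $T_r$ and give every vertex of $T_r$ no outgoing edges for arbitrarily many consecutive rounds, then let $T_r$ grow by one, then stall again. Hence the stabilization round $r_0$ of $S_0$ is not bounded by any function of $n$, your recursion on $\Pi\setminus S_0$ starts arbitrarily late, and the bound $n(n-2)+1$ --- which is part of the statement --- is unreachable by this route. You describe the bounding as a ``delicate sub-step,'' but the obstruction is that your chosen anchor carries no growth guarantee at all. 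Second, the ``re-use the historical reach tree on $S_0$'' step is unsound: causal chains live in specific rounds and cannot be replayed. The chains by which $p_0$ influenced $S_0$ during rounds $\leq r_0$ say nothing about how $p_1$'s information moves inside $S_0$ during rounds $>r_0$, where the adversary may choose entirely different edges inside $S_0$ (subject only to each $s\in S_0$ having an incoming path from the current root component).

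The paper's proof sidesteps all of this with a single pigeonhole observation: among the first $n(n-2)+1$ rounds, some process $p$ lies in the root component in at least $n-1$ rounds $r_1<\dots<r_{n-1}$, since otherwise each of the $n$ processes appears at most $n-2$ times, covering at most $n(n-2)$ rounds. For that $p$, the reach set $\P_{i}$ grows by at least one element in round $r_i$: because $p\in\R^{r_i}$, there is a directed path in $\G^{r_i}$ from $p$ to any not-yet-reached process, and some edge of that path crosses from $\P_{i-1}$ to its complement. This yields finiteness and the exact bound simultaneously. If you want to repair your write-up, replace the anchor $p_0\in\R^1$ by this pigeonhole choice; the decomposition into $S_0$ and $\Pi\setminus S_0$ and the induction on $n$ then become unnecessary.
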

\begin{proof}
  Since we have infinitely many rounds in a run but only finitely many
  processes, there is at least one process $p$ in
  $\R^r$ for infinitely many $r$. Let $r_1,r_2,\dots$ be this sequence
  of rounds. Moreover, let $\P_0=\set{p}$, and define for each $i>0$ the set
  $\P_i = \P_{i-1} \cup \{q: \exists q'\in \P_{i-1}: q' \in \N_q^{r_i}\}$.

Using induction, we will show that $|\P_k| \geq \min\{n,k+1\}$ for $k\geq 0$. 
Consequently, by the end of round $r_{n-1}$ at latest, $p$ will have causally influenced all
processes in $\Pi$.
Induction base $k=0$: $|\P_0| \geq \min\{n,1\}=1$ follows immediately from
$\P_0=\set{p}$.
Induction step $k \to k+1$, $k\geq 0$:  First assume that already
$|\P_k|=n \geq \min\{n,k+1\}$; since $|\P_{k+1}|\geq
|\P_k|=n \geq \min\{n,k+1\}$, we are done.
  Otherwise, consider round $r_{k+1}$ and $|\P_k|<n$: Since $p$ is in $\R^{r_{k+1}}$, there is a path from
  $p$ to any process $q$, in particular, to any process $q$ in
  $\Pi\setminus\P_k \neq \emptyset$. Let $(v\rightarrow w)$ be an edge on such a
  path, such that $v\in\P_k$ and $w\in\Pi\setminus\P_k$. Clearly, the existence of
  this edge implies that $v\in\N_{w}^{r_{k+1}}$ and thus $w\in\P_{k+1}$.
  Since this implies $|\P_{k+1}| \geq |\P_k| + 1 \geq k+1 + 1 = k+2 = \min\{n,k+2\}$
by the induction hypothesis,
  we are done.

Finally, at most $n(n-2)+1$ rounds are needed until all processes $q$
have been influenced by $p$, i.e., that $r_{n-1}\leq n(n-2)+1$:
A pigeonhole argument reveals that at
least one process $p$ must have been in the root component for $n-1$ times
after so many rounds: If every $p$ appeared at most $n-2$ times, we
can fill up at most $n(n-2)$ rounds. By the above result, this is
enough to secure that some $p$ influenced every $q$.

  The converse statement (ii) follows directly from
  considering a static star, for example, i.e., a communication graph where there is one central
     process $c$, and for all $r$, $\Gr=\li{\Pi,\set{c\ra q|
     q\in\Pi\setminus\set{c}}}$. 
Clearly, $c$ cannot be causally influenced by any other process, and $q\not\lt q'$
for any $q,q'\neq q \in \Pi\setminus\set{c}$. On the other hand,
this topology satisfy Assumption~\ref{ass:window}, which
includes the requirement of at most one root component per round. 
\end{proof}
}

\onlyLong{
Next, we examine the solvability of several broadcast problems under 
Assumption~\ref{ass:window}.}
Although there is a strong bond between some of these problems and consensus in 
traditional settings, they are \emph{not} implementable under our 
assumptions---basically, because there is no guarantee of
(eventual) bidirectional communication. 

\onlyLong{
We first consider reliable broadcast,
which requires that when a correct process broadcasts
$m$, every correct process eventually delivers $m$. Suppose
the adversary chooses the topologies $\forall r:\ 
\Gr=\li{\set{p,q,s},\set{p\ra q,q\ra s}}$, which matches Assumption~\ref{ass:window}.
Clearly, $q$ is a correct process in our model. Since $p$ never receives a message from $q$,
$p$ can trivially never deliver a message that $q$ broadcasts.

We now turn our attention to the various problems considered
in~\cite{kuhn+lo:dynamic}, which are all impossible to solve
under Assumption~\ref{ass:window}.
More specifically, we return to the static star considered in the
proof of Lemma~\ref{lem:classification}.
Clearly, the local history of any process is independent of the size
     $n$.
Therefore,  the problems
of counting, $k$-verification, and $k$-committee election are
     all impossible. 
For the token dissemination problems, consider that there is a token
     that only $p\ne c$ has. 
Since no other process ever receives a message from $p$, token dissemination is
     impossible.
}
\begin{theorem} \label{thm:impossibleProblems}
Suppose that Assumption~\ref{ass:window} is the only restriction on the 
adversary in our model.
Then, neither \emph{reliable broadcast}, \emph{atomic broadcast}, nor 
\emph{causal-order broadcast} can be implemented.
Moreover, there is no algorithm that solves
\emph{counting}, \emph{$k$-verification}, \emph{$k$-token
dissemination}, \emph{all-to-all token dissemination}, and 
\emph{$k$-committee election}.
\end{theorem}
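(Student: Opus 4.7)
The plan is to exhibit two simple adversarial topologies that satisfy Assumption~\ref{ass:window} yet block each of the listed problems via standard indistinguishability arguments. Both constructions will be static sequences of graphs, so vertex-stability of the root component is trivial and Assumption~\ref{ass:window} reduces to verifying the existence of a single root component and a small enough network causal diameter $D$.

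First, to rule out \emph{reliable broadcast}, \emph{atomic broadcast}, and \emph{causal-order broadcast}, I would take the three-process line $\G^r = \langle\{p,q,s\},\{p\ra q, q\ra s\}\rangle$ in every round $r$. The only root component is $\{p\}$, which is vertex-stable throughout the run, and every process is reached from $p$ within two hops, so setting $D=2$ makes the root component trivially \goodD\ and any interval $J$ of length $d>4D$ satisfies Assumption~\ref{ass:window}. In this execution $q$ is a correct process but has no outgoing edge to $p$ in any round, so no message originating at $q$ can ever causally influence $p$; therefore $p$ cannot deliver a message that $q$ broadcasts. Since atomic and causal-order broadcast strengthen reliable broadcast, the same example rules them out.

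Second, for \emph{counting}, \emph{$k$-verification}, \emph{$k$-token dissemination}, \emph{all-to-all token dissemination}, and \emph{$k$-committee election}, I would use the static directed star centered at some process $c$, namely $\G^r = \langle\Pi,\{c\ra q \mid q\in\Pi\setminus\{c\}\}\rangle$. Here $\{c\}$ is the unique root component, $D^r=1$, and Assumption~\ref{ass:window} holds with $D=1$ and any $J$ of length $d>4$. The key observation is that the local history of every peripheral process $q\neq c$ depends only on $c$'s messages and $q$'s input, and in particular is independent of $n$. By the standard indistinguishability argument, $q$ cannot correctly output $n$ (counting), cannot decide whether its local estimate equals $n$ ($k$-verification), and cannot pick a $k$-committee consistently with the rest of the system. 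For the token problems, one places a token at a peripheral process $p\neq c$; since $p$ has no outgoing edge in any round, this token never reaches any other process, so neither $k$-token dissemination (for $k\ge 2$) nor all-to-all dissemination can terminate correctly.

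The only real thing to check is that these toy topologies genuinely satisfy Assumption~\ref{ass:window}, which in both cases reduces to the observation that there is exactly one root component in every round and that the network causal diameter is a small constant, so a suitable $D$ and stable window $J$ can be chosen; all other arguments are routine indistinguishability and reachability reasoning. I do not foresee a substantive obstacle beyond cleanly stating the indistinguishability step for each problem in turn.
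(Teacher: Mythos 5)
Your proposal is correct and matches the paper's own argument essentially verbatim: the paper also uses the static three-process line $p\ra q\ra s$ to defeat the broadcast primitives (since $p$ is never causally influenced by $q$) and the static directed star centered at $c$ for counting, $k$-verification, $k$-committee election, and the token-dissemination problems. The only difference is that you spell out the verification of Assumption~\ref{ass:window} (single root component, small $D$, trivial vertex-stability) more explicitly than the paper does, which is a harmless elaboration rather than a new idea.
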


\onlyLong{\subsection{Knowledge of a Bound on the Network Causal Diameter}
\label{sec:impossDiameter}
}

\begin{theorem}%
  \label{thm:impossDiameter}
Consider a system where Assumption~\ref{ass:window} holds and
suppose that processes do not know an upper bound $D$ on the network causal
diameter (and hence do not know $n$). Then, there is no 
algorithm that solves consensus.
\end{theorem}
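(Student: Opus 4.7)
My plan is to derive a contradiction by an indistinguishability argument against a hypothetical uniform consensus algorithm. Suppose for contradiction that there is an algorithm $\mathcal{A}$ solving consensus while knowing neither $D$ nor $n$. I will exhibit three runs of $\mathcal{A}$, all compatible with Assumption~\ref{ass:window}, so constructed that together they force an agreement violation.

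I first consider a base run $\alpha$ with $n=2$ processes $p,q$, inputs $v_p=v_q=0$, and the static topology $p\to q$ in every round. Here $\{p\}$ is the unique root in every round, so Assumption~\ref{ass:window} is satisfied with $D=1$ and an arbitrarily long stable window. By termination and validity, $p$ decides $0$ at some finite round $T$; since $p$ never receives a message, its local state evolution in $\alpha$ is completely determined by its input.

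Next, set $k=T+1$ and introduce two further runs whose only global difference is the presence or absence of an extra process. Run $\gamma$ uses $n=k$ processes $s_1,\ldots,s_k$ with $v_{s_i}=1$; its topology is the chain $s_1\to s_2\to\cdots\to s_k$ in rounds $1,\ldots,T$ (root $\{s_1\}$), followed for at least $4(k-1)+2$ rounds by the reversed chain $s_k\to s_{k-1}\to\cdots\to s_1$ (root $\{s_k\}$), so that Assumption~\ref{ass:window} holds with $D=k-1$ and with phase~2 providing the stable window. Run $\beta$ uses $n=k+1$ processes $p,s_1,\ldots,s_k$ with $v_p=0$ and $v_{s_i}=1$; it prepends $p$ to the chain of $\gamma$, so that phase~1 is $p\to s_1\to\cdots\to s_k$ (root $\{p\}$) and phase~2 is $s_k\to\cdots\to s_1\to p$ with root $\{s_k\}$ kept stable for at least $4k+2$ rounds, giving $D=k$. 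By termination and validity in $\gamma$, $s_k$ decides $1$.

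The core of the proof consists of two indistinguishability claims. First, $p$'s local view in $\beta$ during phase~1 is identical to its view in $\alpha$ (same input, no incoming messages), so $p$ decides $0$ at round $T$ in $\beta$ as well, after which the closure of the decided-state set under $T_p$ keeps it at $0$ forever. Second, $s_k$'s local view in $\beta$ coincides with its view in $\gamma$ at every round: during phase~1, an easy induction along the chain shows that the extra input $v_p$ can first affect $s_i$'s local state at round $i+1$, so it can influence $s_{k-1}$'s state only from round $k$ onward; since phase~1 ends at round $T=k-1$, the sequence of messages that $s_k$ receives from $s_{k-1}$ is identical in $\beta$ and $\gamma$; during phase~2, $s_k$ has no incoming edges in either run. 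Hence $s_k$ takes the same decision in both runs, namely $1$.

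Combining these two conclusions gives an agreement violation in $\beta$, where $p$ decides $0$ and $s_k$ decides $1$, contradicting the assumed correctness of $\mathcal{A}$. The main technical delicacy lies in the calibration $k=T+1$: the chain must be long enough that $v_p$ does not reach $s_k$ within the $T$ rounds of phase~1 (so the $s_k$-indistinguishability is tight), yet the causal diameter $D$ and the required stable-window length $>4D$ must remain finite and explicitly realizable as concrete integers; verifying that Assumption~\ref{ass:window}, including the $D$-boundedness of \emph{every} sufficiently long vertex-stable root component in each run, is met by the chosen topologies is where the bookkeeping concentrates.
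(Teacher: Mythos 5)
Your proposal is correct and uses essentially the same mechanism as the paper's proof: a static directed line rooted at $p$ forces $p$ to decide by a round that cannot depend on the (unknown) length of the line, the line is then chosen longer than that round so the tail process is not yet causally influenced by $p$, and reversing the line afterwards makes the tail the sole root, which must decide obliviously of $p$. The only cosmetic difference is that the paper compares two runs $\beta(0),\beta(1)$ differing in $p$'s input, whereas you compare the long run against a two-process run (for $p$) and a run with $p$ deleted (for $s_k$), pinning both decision values down via Validity and obtaining the agreement violation inside a single run.
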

\onlyLong{
\begin{proof}
Assume for the sake of a contradiction that there is such an algorithm $\A$.
For $v \in \{0,1\}$, let $\alpha(v)$ be a run of $\A$ on a communication graph $G$
that forms a static directed line 
rooted at process $p$
and ending in process $q$.
Process $p$ has initial value $v$, while all other processes have
initial value $0$.
Clearly, algorithm $\A$ must allow
$p$ to decide on $v$ by the end of round $\kappa$, where $\kappa$ is a constant
(independent of $D$ and $n$; we assume that $n$ is large enough to guarantee $n-1 >\kappa$).
Next, consider a run $\beta(v)$ of $\A$ that has the same initial
states as $\alpha(v)$, and communication graphs $(H_r)_{r>0}$
that, during rounds $[1,\kappa]$, are also the same as in $\alpha(v)$
(we define what happens after round $\kappa$ below).
In any case, since $\alpha(v)$ and $\beta(v)$ are indistinguishable for $p$
until its decision round $\kappa$, it must also decide $v$ in $\beta(v)$
at the end of round $\kappa$.

However, since $n > \kappa+1$, $q$ has not been causally influenced by $p$
by the end of round $\kappa$. Hence, it has the same state $S_p^{\kappa+1}$
both in $\beta(v)$ and in $\beta(1-v)$. As a consequence, it cannot
have decided by round $\kappa$: If $q$ decided $v$, it would violate 
agreement with $p$ in $\beta(1-v)$. Now assume that run $\beta(.)$
is actually such that the stable window occurs later than round $\kappa$, 
i.e., $r_{ST}=\kappa+1$, and that the adversary just reverses
the direction of the line then: For all $H^{k}$, $k\geq \kappa+1$, $q$ is 
the root and $p$ is leaf in the resulting topology.
Observe that the resulting $\beta(v)$ still satisfies
Assumption~\ref{ass:window}, since $q$ itself forms the only root component.  
Now, $q$ must eventually decide on some value $v'$ in some later round $\kappa'$, 
but since $q$ has been in the 
same state at the end of round $\kappa$ in both $\beta(v)$
and $\beta(1-v)$, it is also in the same state in round $\kappa'$
in both runs. Hence, its decision contradicts the decision of $p$ in 
$\beta(1-v')$.
\end{proof}

\subsection{Impossibility of Leader Election with Unknown Network Causal 
Diameter} \label{sec:diameter}

We now use a more involved indistinguishability argument to show
that a weaker problem than consensus, namely, leader election is also
impossible to solve in our model. The classic leader election problem 
(cf.\ \cite{Lyn96})
assumes that, eventually, exactly one process irrevocably elects itself as leader (by 
entering a special \textsc{elected} state) and every other process elects 
itself as non-leader (by entering the \textsc{non-elected} state).  
Non-leaders are not required to know the process id of the leader.

Whereas it is easy to achieve leader election in our model when consensus is
solveable, by just reaching consensus on the process ids in the system, the opposite
is not true: Since the leader elected by some algorithm need not be in the root component
that exists when consensus terminates, one cannot use the leader to 
disseminate a common value to all processes in order to solve
consensus atop of leader election.

\begin{theorem}
  \label{thm:impossDiameterLE}
Consider a system where Assumption~\ref{ass:window} holds and
suppose that processes do not know an upper bound $D$ on the network causal
diameter (and hence do not know $n$). Then, there is no algorithm that solves leader election.
\end{theorem}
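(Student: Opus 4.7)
The plan is to derive a contradiction by assuming that some algorithm $\A$ solves leader election in our model without any knowledge of $D$ (and hence without knowledge of $n$), and then to construct a run on which $\A$ is forced to either elect two leaders or leave every process undecided.

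I first analyze $\A$'s behavior on a static directed line $L_N = p_1 \to p_2 \to \cdots \to p_N$. Since $p_1$ receives no messages, its decision (leader or non-leader) is a deterministic function of its own initial state and is made by some constant round $\kappa$ independent of $N$. By induction on $i$, the round at which $p_i$ decides and the value of $p_i$'s decision in $L_N$ depend only on the initial states of $p_1,\ldots,p_i$. Because $\A$ must elect exactly one leader in every $L_N$ for $N\ge 2$, and this leader's index cannot depend on $N$, the leader in any static line must sit in one of the first two positions.

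I then proceed by cases, mirroring the strategy of Theorem~\ref{thm:impossDiameter}. If some process $p$ decides as leader when rooted in a static line, I construct a run on a long line $p \to r_1 \to \cdots \to r_m \to q$ that remains static for at least $\kappa$ rounds and is then reversed to $q \to r_m \to \cdots \to r_1 \to p$. By Lemma~\ref{lem:cd} and the freedom to choose $m$ arbitrarily large (using that $\A$ does not know $D$), $q$ is not causally influenced by $p$ during the first $\kappa$ rounds. With a careful choice of the intermediate processes $r_i$ and their initial states, $q$'s local view at its decision point can be made indistinguishable from that of a run where $q$ itself is the root of a static line in which $q$ elects leader, yielding two leaders and contradicting uniqueness. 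The remaining case, where no process decides leader as the root of a static line (so the leader is always the second process of the line), is handled by a dual construction that forces two distinct processes to simultaneously play the role of ``second in line''.

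The main obstacle is arranging the topology so that the constructed runs satisfy Assumption~\ref{ass:window}---in particular, exactly one root component per round and a $J$-vertex-stable root component of length exceeding $4D$---while simultaneously making the local views of two distinct processes match their respective leader-electing configurations. This requires carefully staggering which process occupies the singleton root component in the early rounds and selecting the intermediate processes so that the messages $q$ receives prior to the topology reversal produce an internal state indistinguishable from $q$'s fresh initial state. The lack of knowledge of $D$ is essential: it permits the line to be arbitrarily long, ensuring that information about $p$ cannot reach $q$ within the $\kappa$ rounds during which $q$ must remain oblivious to $p$'s decision.
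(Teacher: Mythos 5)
Your preliminary analysis of static lines is sound as far as it goes, but the core of your construction does not work, and the paper's proof is structured quite differently precisely to avoid the obstacle you name. In your Case 1 you place $q$ at the \emph{tail} of the line $p\to r_1\to\cdots\to r_m\to q$ and then try to arrange that $q$'s local view is ``indistinguishable from that of a run where $q$ itself is the root of a static line.'' The adversary cannot arrange this: it controls only the topology, not the message contents, and a process that receives a message from $r_m$ in every round is in whatever state the (arbitrary) algorithm's transition function dictates; no choice of intermediate processes or initial states forces those received messages to leave $q$ in a state ``indistinguishable from $q$'s fresh initial state.'' Even setting that aside, a single reversed line cannot yield two leaders: after the reversal $q$ becomes the root, but nothing compels $q$ to elect itself---it may consistently elect itself non-leader while $p$ (who already decided by round $\kappa$) remains the unique leader, so no contradiction follows. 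Your Case 2 is moreover vacuous: the head of a static line receives no messages and hence cannot distinguish $L_N$ from the single-vertex graph, in which it \emph{must} elect itself; so the leader of a static line is always the head, never the second process. This observation, which you miss, is the first step of the paper's proof.

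The paper resolves the difficulty with a \emph{two-sided} construction: two chains $\G_p$ (headed by $p$, with id $w$) and $\G_q$ (headed by $q$, with id $z$), each of length $T+1$ where $T$ bounds both heads' self-election times, are joined with an extra process $r$ into a unidirectional ring for the first $T$ rounds and then restructured so that $r$ is the root. Let $\ell$ be the leader elected in this run $\beta$. The adversary then retroactively deletes one edge during the first $T$ rounds---the in-edge of $p$ if $\ell\in\G_q\cup\{r\}$, or the in-edge of $q$ if $\ell\in\G_p$---turning that head into a genuine root that receives nothing and therefore elects itself exactly as it does at the head of a static chain; meanwhile $\ell$ lies in the portion of the graph that is not causally influenced by that head within $T$ rounds, so $\ell$ still elects itself as in $\beta$. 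That produces two leaders. The essential idea you are missing is this case split on where the leader of the combined run lands, combined with removing an in-edge to create a genuinely uninfluenced head, rather than trying to fool a message-receiving process into believing it is a root.
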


\begin{proof}
We assume that there is an  algorithm $\A$ which solves the problem. 
Consider the execution $\alpha(w,k)$ of $\A$ in a static unidirectional 
chain of $k$ processes, headed by process $p$ with id $w$: Since $p$ has only a single out-going edge and does not
know $n$, it cannot know whether it has neighbors at all. Since it might
even be alone in the single-vertex graph consisting of $p$ only, it
must elect  itself as leader in any $\alpha(w,k)$, after some $T(w,k)$ rounds (we do 
not restrict $\A$ to be time-bounded). Again, note that $T(w,k)$ does not
depend on $k$, and so in fact $p$ will elect itself after the same
$T(w)$ rounds in all $\alpha(w,k)$, $k\ge1$.

Let $w$ and $z$ be two arbitrary different process ids, and let $T(w)$ resp.\
$T(z)$ be the termination times in the executions $\alpha(w,k)$
resp.\ $\alpha(z,k')$; let $T=\max\{T(w),T(z)\}$.

We now build a system consisting of $n=2T+3$ processes. To do so we 
assume a chain $\G_p$ of $T+1$ processes headed by $p$ (with id $w$)
and ending in process $t$, a second chain $\G_q$ of $T+1$ processes headed
by $q$ (with id $z$) and ending in process $s$, and the process $r$.

Now consider an execution $\beta$, which proceeds as follows: 
For the first $T$ rounds, the communications graph is the
     unidirectional ring created by connecting the above chains with
     edges $s\ra p$, $t\ra r$ and $r\ra q$; 
its root component clearly is the entire ring. 
Starting from round $T+1$ on, process $r$ forms the single vertex root
     component, which feeds, through edges $r\ra q$ and $r\ra t$ the
     two chains $\G_q$ and $\bar{\G}_p$, with $\bar{\G}_p$ being $\G_p$ with all
     edges reversed. 
Note that, from round $T+1$ on, there is no edges connecting processes
     in $\G_p$ with those in $\G_q$ or vice versa.

Let $\ell$ be the process that is elected leader in $\beta$.
We distinguish 2 cases: 

\begin{enumerate}
\item If $\ell \in \G_q \cup \{r\}$, then
consider the execution $\beta_p$ that is exactly like $\beta$, except
that there is no edge $(s\ra p)$ during the first $T$ rounds:
$p$ with id $w$ is the single root component here. Clearly, for
$p$, the execution $\beta_p$ is indistinguishable from $\alpha(w,2T+3)$
during the first $T(w)\leq T$ rounds,
so it must elect itself leader. However, since no process in $\G_q \cup \{r\}$ (including $t=\ell$)
is causally influenced by $p$ during the first $T$ rounds, all
processes in $\G_q\cup\{r\}$ has the same
state after round $T$ (and all later rounds) in $\beta_p$ as in $\beta$.  
Consequently, $\ell$
also elects itself leader in $\beta_p$ as it does in $\beta$, which is a 
contradiction.
\item On the other hand, if $\ell \in \G_p$, we consider the execution
$\beta_q$, which is exactly like $\beta$, except
that there is no edge $(r\ra q)$ during the first $T$ rounds:
$q$ with id $z$ is the single root component here. Clearly, for
$q$, the execution $\beta_q$ is indistinguishable from $\alpha(z,\G_q)$
during the first $T(z)\leq T$ rounds,
so it must elect itself leader. However, since no process $t$ in $\G_p \cup \{r\}$ (including $t=\ell$)
is causally influenced by $q$ during the first $T$ rounds, $t$ has the 
same
state after round $T$ (and all later rounds) in $\beta_q$ as in $\beta$.  
Consequently, $\ell$
also elects itself leader $\beta_q$ as it does in $\beta$, which is again 
a contradiction.
\end{enumerate}
This completes the proof of Theorem~\ref{thm:impossDiameterLE}.
\end{proof}
}

\onlyLong{
\subsection{Impossibility of consensus with too short intervals}

Our goal in this section is to show}
\onlyShort{We now state a result that shows} that it is necessary to have
root components that are vertex stable long enough to flood the
network. That is, with respect to Assumption~\ref{ass:window}, we need $I$ to be
at least $D$.
To this end, we first introduce the following alternative Assumption~\ref{ass:shorter},
which requires a window of only $D$: 
\begin{assumption}\label{ass:shorter}
For any round $r$, there is exactly one root component $\R^r$ in
     $\Gr$. 
Moreover, there exists a $D$ and an interval of rounds $I = [r_{ST},
     r_{ST}+D]$, such that there is an $I$-vertex stable root
     component $\R^I$, such that $D^I\le D$.
\end{assumption}

In order to show that Assumption~\ref{ass:shorter} is necessary, we further
shorten the interval: Some process could possibly 
not be reached within $D-1$ rounds, but would be reached if the
interval was $D$ rounds. Processes could hence \emph{withold} information
from each other, which causes consensus to be impossible \cite{SWK09}.
\onlyLong{To simplify the proofs, we consider a stronger variant, where 
  there is exactly one such process $q$, that is not reached within
$D-1$ rounds, from any process in $\R^I$. Note that, since it is
exactly one, we have that in executions where the interval is actually
$D$ rounds, it will be reached.
Thus we consider the following assumption:

\begin{assumption}\label{ass:too-short}
For any round $r$, there is exactly one root component $\R^r$ in
     $\Gr$. 
Moreover, there exists a $D$ and an interval of rounds $I = [r_{ST},
     r_{ST}+D-1]$, such that there is an $I$-vertex stable root
     component $\R^I$, and there exists a unique $q\in\Pi$ such that
     $\forall p\in\R^I, \forall r\in I: \dist_r(p,q)\ge D$, while for
     all $q'\in\Pi\setminus\set{q}$ we have 
     $\forall p\in\R^I, \forall r\in I:\dist_r(p,q')\le D-1$.
\end{assumption}

Note that, those executions that fulfill Assumption~\ref{ass:shorter}
(or even Assumption~\ref{ass:window} also fulfill Assumption~\ref{ass:too-short}.

In order to simplify the following proof we assume that the adversary has to fix the
start of $I$ and the set of processes in the root component $\R^r$ of
every round $r$ before the beginning of the execution (but given the
initial values). 
Note that this does not strengthen the adversary, and hence does not 
weaken our impossibility result. 
This is due to the fact that for 
deterministic algorithms the whole execution depends only on the
initial values and the sequence of round graphs so in this case the
adversary could simulate the whole execution and determine graphs
based on the simulation. 

\begin{lemma}\label{lem:neighbours-are-bivalent}
Assume some fixed $I$ and $\R^I$, such that
     Assumption~\ref{ass:too-short} holds. 
If two univalent configurations $C'$ and $C''$ at the beginning of
     round $r$ differ only in the state of one process $p$, they
     cannot differ in valency.
\end{lemma}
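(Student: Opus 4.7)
I would argue by contradiction: suppose that $C'$ is $0$-valent and $C''$ is $1$-valent. My goal is to exhibit a common extension schedule of round graphs, starting from round $r$ and consistent with the fixed $I$ and $\R^I$ (hence with Assumption~\ref{ass:too-short}), under which the two executions starting from $C'$ and $C''$ are indistinguishable to every process other than $p$. By termination, each such process $q'\neq p$ decides in finitely many rounds, and by indistinguishability makes the same decision in both extensions. Agreement then forces $p$ to decide on that common value in each extension as well, so both extensions end in the same decision value---contradicting that the $C'$-extension decides $0$ while the $C''$-extension decides $1$.

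The key construction casts $p$ in the role of the unique ``unreachable'' process $q$ of Assumption~\ref{ass:too-short}. Concretely, during $I$ the topologies inside $\R^I$ are chosen freely subject to $\R^I$ being a vertex-stable root component, and the edges from $\R^I$ into $\Pi\setminus\R^I$ are arranged so that no causal chain from $\R^I$ to $p$ is shorter than $D$; in addition, for every round in $[r, r_{ST}-1]$ and every round after $I$, I would remove all outgoing edges of $p$, which is permissible because the adversary has no commitments on $p$'s edges outside $I$. Under this schedule $p$'s round-$r$ state never causally influences any other process, and a straightforward induction on $r'\geq r$ shows that every $q'\neq p$ has identical state at the end of round $r'$ in the two extensions.

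The main obstacle is the corner case in which the fixed $\R^I$ already contains $p$: then $p$ must retain outgoing edges within $\R^I$ throughout $I$ and cannot be silenced, so the difference in $p$'s state propagates to the other processes of $\R^I$ within one round. I plan to address this by exploiting the existential in Assumption~\ref{ass:too-short}: since we are free to choose the window witnessing the assumption for the future schedule, I would pick an $I$ and $\R^I$ whose root component avoids $p$. Where this freedom is unavailable, I would reduce to the previous case by a chain-of-configurations argument, interpolating between $C'$ and $C''$ through a sequence of univalent intermediate configurations that differ pairwise only in single processes outside the root component, and applying the main argument at each step to transport the valency. Once the schedule is fixed, the remaining indistinguishability induction and the appeal to Agreement are routine.
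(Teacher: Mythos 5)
Your main construction is sound in the case where $p$ is not in the fixed root component, and there it essentially coincides with the paper's argument: silence $p$ from round $r$ on (keeping an incoming edge so that $p$ does not become a second root component), so that no other process ever sees the difference. Two small remarks on that case: you only remove $p$'s outgoing edges outside $I$, but you must also do so during the rounds of $I$; and the conclusion can be reached more cheaply than via Agreement---a \emph{single} process that cannot distinguish the two extensions already decides the same value in both, which contradicts the assumed $0$- and $1$-valency, so you never need all-but-$p$ indistinguishability.

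The genuine gap is the case $p\in\R^I$ with $r\in I$, and neither of your two escape routes works. First, you cannot re-choose $I$ and $\R^I$: the lemma's hypothesis fixes them, and the paper explicitly has the adversary commit to the start of $I$ and to every round's root component before the execution, precisely so that the induction in the impossibility theorem can invoke this lemma uniformly at every round; moreover $r$ may already lie inside $I$, leaving no room to push the window into the future. Second, the interpolation idea is not available here: $C'$ and $C''$ differ only in the state of $p$ itself, and a difference localized at a single process inside the root component cannot be rewritten as a chain of single-process differences at processes outside it (the chain-of-graphs lemma interpolates between configurations reached via different round graphs, which is a different situation). The missing idea is to use the other half of Assumption~\ref{ass:too-short}: it guarantees a process $q$ with $\dist_s(u,q)\ge D$ for all $u\in\R^I$ and all $s\in I$, i.e., a process that the root component---and hence $p$---does not causally influence within $I$. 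Extend both executions with the same round graphs and, from the first round after $I$ on, let $\{q\}$ be the singleton root component, so that $q$ has no incoming edges and never learns anything further. Then $q$ alone cannot distinguish the two extensions, decides identically in both, and that single witness already contradicts the differing valency. This is exactly why the paper's proof is organized around the observation that it suffices to exhibit one process that cannot distinguish $e'$ from $e''$.
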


\begin{proof}
The proof proceeds by assuming the contrary, i.e., that $C'$ and $C''$
have different valency. We will then apply the same sequence of round
graphs to extend the execution prefixes that led to $C'$ and $C''$ to
get two different runs $e'$ and $e''$. It suffices to show that
there is at least one process $q$ that cannot distinguish $e'$ from
$e''$: This implies that $q$ will decide on the same value in both
executions, which contradicts the assumed different valency of $C'$
and $C''$. 

Our choice of the round graphs depends on the following two cases: 
(i) $p$ is in $\R^r$ and $r\in I$ or (ii) otherwise. 
In the second case, we assume that the adversary choses
$\R^s=\set{q}\ne\set{p}$ for all rounds $s\geq r$. We can thus consider 
     a sequence of graphs $\G^{s}$, for $s\ge r$, such that
     $\dist_s(q,p)=D$ which
obviously fulfills Assumption~\ref{ass:too-short}. 
But for $q$ (and all other processes except $p$), $e'$ and $e''$
are indistinguishable.

In case (i), by our Assumption~\ref{ass:too-short}, there is some $q$,
such that the information that $p$ sends in round $r$ does not arrive
at some specific $q$ within $I=[a,b]$. 
Now assume that the adversary choses $\R^s=\set{q}$ for all
$s>b$. 
Clearly, for process $q$, the sequence of states in the extension
starting from $C'$ and $C''$ is the same. Therefore, the two
runs are indistinguishable to $q$ also in this case.
\end{proof}

\begin{lemma}\label{lem:graph-seq}
Consider a round $r$ configuration $C$, then for any two round 
$r$ graphs $\G'$ and $\G''$, there is a $k$ such that we can find a
sequence of $k$ graphs $\G',\G_1,\dots\G_i\dots\G''$ each with a single
root component, where any two consecutive graphs differ only by at most one edge. 
Moreover, our construction guarantees that if $\G'$ and $\G''$ have the same root component $\R$ so do all $\G_i$.
\end{lemma}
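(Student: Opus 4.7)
My plan is to prove the \emph{Moreover} clause first by interpolating through a canonical ``maximal'' graph with root $\R$, and then derive the general statement by routing through the complete directed graph $K$ on the vertex set.

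For the Moreover case (where $\G'$ and $\G''$ share the root $\R$), let $\G^*$ be the graph on $V(\G')$ containing every directed edge \emph{except} those going from $V(\G')\setminus\R$ into $\R$. I would construct the sequence in two phases: $\G'\to\G^*\to\G''$. In the first phase, add the edges in $E(\G^*)\setminus E(\G')$ one at a time in an arbitrary order; in the second phase, remove the edges in $E(\G^*)\setminus E(\G'')$ one at a time. Every intermediate graph $H$ in this sequence satisfies either $\G'\subseteq H\subseteq\G^*$ (first phase) or $\G''\subseteq H\subseteq\G^*$ (second phase). To check that $\R$ remains the unique root component of each such $H$: since $H\subseteq\G^*$, no edge of $H$ enters $\R$ from outside, so $\R$ has no external incoming edges; since $H$ is a supergraph of either $\G'$ or $\G''$ and strong connectivity as well as reachability are monotone under edge addition, the subgraph of $H$ induced on $\R$ is strongly connected and every vertex of $V$ is reachable from $\R$. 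Therefore $\R$ is a root component of $H$, and by the SCC-DAG argument used in the proof of Lemma~\ref{lem:root}, it is the unique one.

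For the general case I interpolate through the complete directed graph $K$. The auxiliary fact I would establish is: adding a single edge to any graph $H$ with a unique root component yields a graph $H+e$ whose unique root is the SCC of $H+e$ that contains the old root. The reason is that reachability is monotone under edge addition, so every vertex is still reachable from the (possibly enlarged) SCC containing the old root, making this SCC the unique source of the SCC-DAG of $H+e$. Iterating this observation, the chain $\G'\to K$ obtained by adding the missing edges one at a time is a sequence of single-root graphs, and likewise for $\G''\to K$. Reversing the latter chain produces a sequence $K\to\G''$ whose visited graphs are identical to those of the forward chain and hence single-root. Concatenating the two chains gives the desired $\G'\to K\to\G''$ of length $k\leq 2|V|(|V|-1)$.

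The main obstacle I expect is the removal phase of the Moreover construction: in general, removing an edge can destroy strong connectivity of $\R$ or disconnect some vertex from $\R$, so preservation of a single root is not automatic under edge removal. My argument sidesteps this by keeping every intermediate a supergraph of the target $\G''$, so that strong connectivity of $\R$ and reachability of every vertex from $\R$ are inherited from $\G''$ itself; the complementary ``no edges into $\R$'' condition is inherited from the supergraph bound $H\subseteq\G^*$. The analogous delicate point in the general-case argument is that an edge addition may merge several SCCs through newly created cycles; the SCC-DAG argument guarantees that the merged SCC always absorbs the previous unique source, so no second root can appear along the way.
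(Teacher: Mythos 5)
Your proof is correct, and at bottom it rests on the same two workhorse facts as the paper's own argument: adding an edge preserves single\mbox{-}rootedness because reachability from the old root component is monotone under edge addition, and removing an edge preserves it as long as all edges of some single\mbox{-}rooted target graph are retained. The packaging, however, is genuinely different. The paper interpolates through the union graph $\G'\cup\G''$ (prefixed, in the case $\R'\cap\R''=\emptyset$, by one extra edge that merges the two root components into one), and for the \emph{Moreover} clause it runs a step\mbox{-}by\mbox{-}step induction with a case analysis on whether the added or removed edge touches $\R$. You instead route the general case through the complete graph $K$ and certify the removal half by reversing the addition chain $\G''\to K$, which eliminates the paper's case distinction on $\R'\cap\R''$ altogether; and for the \emph{Moreover} clause you replace the induction by the sandwich invariant $\G'\subseteq H\subseteq \G^*$ (resp.\ $\G''\subseteq H\subseteq\G^*$), which simultaneously yields ``no edge enters $\R$ from outside'' and ``$\R$ is strongly connected and reaches every vertex,'' and hence identifies $\R$ as the unique root component of every intermediate graph in one stroke, using the SCC\mbox{-}DAG argument of Lemma~\ref{lem:root}. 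What the paper's construction buys is a shorter chain, of length on the order of $|E'\,\triangle\,E''|$ rather than your $O(n^2)$ bound; what yours buys is the elimination of two case analyses and an invariant\mbox{-}based correctness argument that is easier to check (in particular, your reversal trick cleanly sidesteps the delicate point that edge removal does not preserve single\mbox{-}rootedness in general). Since the lemma is only used to produce \emph{some} finite chain, either version suffices for the impossibility proof that follows.
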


\begin{proof} 
First, we consider two cases with respect to the root components $\R'$
and $\R''$: (a) $\R'\cap\R''=\emptyset$, (b)
$\R'\cap\R''\ne\emptyset$. Moreover, for the second part of the proof,
we also consider a special case of (b): (b') $\R'=\R''$.

For case (b) (and thus also for (b'), we consider $\G_1=\G'$.
For case (a), we construct $\G_1$ from $\G'$ as follows: Let $p'\in\R'$
and $p''\in\R''$, then $\G_1$ has the same edges as $\G'$ plus
$a=p''\edge{}p'$, thus $\R_1\supseteq\R'\cup\set{p''}$ (recall that
$p''$ must be reachable from $\R'$ already in $\G'$).
So, now we have that in both cases $\G'$ and $\G_1$ differ in at most
one edge. 
Moreover, there is a nonempty intersection between $\R_1$ and $\R''$.

In the first phase of our construction (which continues as long as $E''\setminus
E_i\ne\emptyset$), we construct $\G_{i+1}$ from $\G_{i}$, $i\ge1$, by
choosing one edge $e=(v\edge{}w)$ from $E''\setminus E_i$ and let
$\G_{i+1}$ have the same edges as $\G_i$ plus $e$.  Clearly, $\G_i$
and $\G_{i+1}$ differ in at most one edge. Moreover, when adding an
edge, we cannot add an additional root component, so as long as we add edges we
will have that $\G_{i+1}$ has a single root component $\R_{i+1}\supset
R'$.  

When we reach a point in our construction where $E''\setminus
E_i=\emptyset$, the first phase ends. As $\G_i$ now contains all the
edges in $\G''$, i.e., $E_i\supset E''$ and we have $\R_i\supset
R''$. In the second phase of the construction, we remove edges. To this
end, we choose one edge $e=(v\edge{}w)$ from $E_i\setminus E''$, and
construct $\G_{i+1}$ from $\G_i$ by removing $e$. Again we have to
show that there is only one root component. Since we never remove
an edge in $E''$, $\G_i$ always contains a directed path
from some $x\in\R''$ to both $v$ and $w$ that only uses edges in
$E''$. As $e\not\in E''$, this also holds for $\G_{i+1}$. Since there
is only one root component in $\G''$, this implies that there is only
one in $\G'$.

Let $\G_j$ be the last graph constructed in the first phase, and
$\G_k$ the last graph constructed in the second phase. 
It is easy to see that $E_k=E_j\setminus(E_j\setminus E'')$, 
which implies that $E_k=E''$ and hence $\G_k=E''$. This completes
the proof of the first part of the Lemma.

To see that the second part also holds, we consider case (b') 
in more detail and show by induction that $\R_{i+1}=\R_{i}=\R$. For
the base case we recall that $\G_1=G'$ and thus $\R_1=R'$. 
For the induction step, we consider first that the step involves
adding an edge $e=(v\edge{}w)$. Adding an edge can only modify the
root component when $v\not\in\R_i$ and $w\in\R_i$. Since such an edge $e$ is
not in $E''$ (as it has the same root component as $E'$), 
we cannot select it for addition, so the root component
does not change. If on the other hand the step from $\G_i$ to
$\G_{i+1}$ involves removing the edge $e=(v\edge{}w)$, then we only
need to consider the case where $v\in\R_i$. (If $v\not\in\R_i$, then
also $w\not\in\R_i$ so the root component cannot change by removing $e$.)
But since we never remove edges from $E''$, this implies that even
after removing $e$ there is still a path from $v$ to $w$, so the root
component cannot have changed. 
\end{proof}

\begin{theorem}
  Assume that Assumption~\ref{ass:too-short} is the only requirement for 
  the graph topologies.  Then consensus is impossible.
\end{theorem}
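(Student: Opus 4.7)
The plan is to run a standard bivalency/indistinguishability argument in the FLP style, using the two previous lemmas as the main engines. Throughout, I will exploit the fact that the adversary has pre-committed to the sequence of root components $\R^r$ and to the interval $I$, so in each round $r$ it is free to pick any graph $\G^r$ whose unique root component equals the pre-committed $\R^r$.

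First, I would establish the existence of a bivalent initial configuration. By Validity, the all-$0$ initial configuration is $0$-valent and the all-$1$ initial configuration is $1$-valent. Consider the hybrid sequence obtained by flipping one input at a time; consecutive configurations in the hybrid differ in the state of a single process. If every initial configuration were univalent, some consecutive pair would differ in valency while differing in only one process's state, contradicting Lemma~\ref{lem:neighbours-are-bivalent} applied at $r=1$.

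Next, the induction step: starting from any bivalent configuration $C$ at the beginning of round $r$, the adversary can choose a round-$r$ graph (compatible with the pre-committed $\R^r$) that leads to another bivalent configuration at the beginning of round $r+1$. Suppose, for contradiction, that every admissible choice of $\G^r$ yields a univalent successor. Since $C$ is bivalent, there must exist graphs $\G'$ and $\G''$ (both with root component $\R^r$) whose successors $C'$ and $C''$ are $0$-valent and $1$-valent, respectively. Apply the second part of Lemma~\ref{lem:graph-seq} to obtain a sequence $\G' = \G_0, \G_1, \dots, \G_k = \G''$ of admissible round-$r$ graphs, each with the same root component $\R^r$, where consecutive graphs differ by exactly one edge. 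By the pigeonhole principle, there is an index $i$ such that $\G_i$ yields a $0$-valent successor $C_i$ and $\G_{i+1}$ yields a $1$-valent successor $C_{i+1}$. The single edge on which $\G_i$ and $\G_{i+1}$ disagree is of the form $(v \to w)$, so $C_i$ and $C_{i+1}$ can only differ in the state of $w$ (the unique process whose set of received messages is affected). This contradicts Lemma~\ref{lem:neighbours-are-bivalent}.

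Iterating the induction step, the adversary constructs an infinite execution in which the configuration at the start of every round is bivalent. No process can ever decide in such an execution, contradicting Termination, and hence no consensus algorithm exists under Assumption~\ref{ass:too-short}.

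The main obstacle is a bookkeeping one rather than a deep one: I must ensure that the interpolation constructed inside the induction step never forces the adversary to violate its pre-commitments. This is precisely why the second part of Lemma~\ref{lem:graph-seq} is needed (it guarantees that all intermediate graphs share the common root component $\R^r$), and why Assumption~\ref{ass:too-short} is formulated so that the rounds outside of $I$ place essentially no constraint beyond having a single root component. Once this compatibility is observed, the chain of indistinguishability arguments collapsing bivalency onto a single-process difference goes through without further subtlety.
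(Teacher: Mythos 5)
Your proposal is correct and follows essentially the same route as the paper's proof: a bivalent initial configuration obtained by flipping inputs one at a time (justified via Lemma~\ref{lem:neighbours-are-bivalent}), followed by an inductive step that interpolates between two round graphs using Lemma~\ref{lem:graph-seq}, locates two adjacent configurations of differing valency that differ in the state of a single process, and derives a contradiction from Lemma~\ref{lem:neighbours-are-bivalent} again. Your explicit remarks on preserving the pre-committed root component along the interpolation and on why a forever-bivalent run violates Termination match the paper's intent, so nothing further is needed.
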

}
\onlyShort{
\begin{theorem}
  Assume that Assumption~\ref{ass:shorter} is not guaranteed in a
  system. 
  Then consensus is impossible.
\end{theorem}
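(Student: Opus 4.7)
The plan is to argue by contradiction using a bivalence argument in the spirit of FLP, with the key twist that the adversary exploits exactly the $D$-round shortfall of the stable window to keep at least one process out of reach. Suppose some algorithm $\A$ solves consensus whenever Assumption~\ref{ass:shorter} fails to hold (i.e., the adversary is only constrained by having a single root component per round, plus possibly some shorter window of size at most $D-1$). I will strengthen the adversary slightly by letting it choose, at every round, any graph with a single root component, and in particular letting it pick a window $I$ of length only $D-1$ with a vertex-stable root component $\R^I$ such that some distinguished process $q\notin\R^I$ satisfies $\dist_r(p,q)\ge D$ for every $p\in\R^I$ and $r\in I$, while all other processes outside $\R^I$ are reached within $D-1$ rounds (this is the variant captured by the long version's Assumption~\ref{ass:too-short}).

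First I would establish an initial bivalent configuration. By validity, the all-$0$ initial configuration is $0$-valent and the all-$1$ one is $1$-valent, and any two configurations that differ only in one process's input are connected by a chain of single-process flips. I would then invoke (or prove, by the same indistinguishability trick used below) the analogue of the standard lemma that two initial configurations differing in a single process cannot have different valencies, provided the adversary can isolate that process from some other process $q$ forever by installing $\{q\}$ as the singleton root component from some round on: then $q$'s transcript is identical in both runs, so $q$ decides the same value, and by agreement so does everyone else. This yields a bivalent initial configuration.

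Next I would show that bivalence is preserved: starting from any bivalent configuration $C$ at the beginning of round $r$, the adversary can extend by one round to a new bivalent configuration $C'$. Fix a $0$-valent round-$r$ successor $C_0$ obtained via some graph $\G'$ and a $1$-valent successor $C_1$ via $\G''$; both must have a single root component. Lemma~\ref{lem:graph-seq} then supplies a sequence $\G'=\G_0,\G_1,\ldots,\G_k=\G''$ of graphs with single root components, any two consecutive of which differ in at most one edge. Consecutive successor configurations $C_i,C_{i+1}$ then differ in the state of at most one process (the endpoint of the added/removed edge). If any $C_i$ is bivalent we are done; otherwise some adjacent pair flips valency, contradicting the single-process-flip lemma (Lemma~\ref{lem:neighbours-are-bivalent}). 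The proof of that single-process-flip lemma is itself the place where the $D-1$ window enters: from the diverging configurations the adversary either chooses $\R^s=\{q\}$ for $s\ge r$ when $p$ lies outside the current root, or, if $p\in\R^r$ with $r\in I$, uses the fact that $p$'s round-$r$ message cannot reach the distinguished process $q$ within $I$ and then again makes $\{q\}$ the sole root component after $I$, rendering both extensions indistinguishable to $q$.

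Iterating, the adversary produces an infinite execution of bivalent configurations, so no process ever decides, contradicting termination. The main obstacle I anticipate is the last step inside the single-process-flip lemma: one has to verify that the concatenated schedule (the prefix up to round $r$, the forced graph in round $r$, and the subsequent isolation rounds) really satisfies the stated adversary constraints \emph{simultaneously}\dash--- in particular, that throughout the construction each round graph has exactly one root component and that all rounds outside $I$ remain legal. This is exactly what Lemma~\ref{lem:graph-seq} is engineered to guarantee, and checking that the case analysis ($p$ inside vs.\ outside the root, and $r$ inside vs.\ outside $I$) exhausts all possibilities is the delicate part; once it is done, the bivalence machinery runs in the standard way.
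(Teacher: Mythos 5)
Your proposal follows essentially the same route as the paper: an FLP-style bivalence argument under the strengthened adversary of Assumption~\ref{ass:too-short}, with the initial bivalent configuration obtained from Validity via single-input flips, the inductive step carried by the edge-by-edge graph interpolation of Lemma~\ref{lem:graph-seq}, and the single-process-flip lemma (Lemma~\ref{lem:neighbours-are-bivalent}) proved by the same two-case construction that installs $\{q\}$ as the sole root component to make the two extensions indistinguishable to $q$. The argument is correct and matches the paper's proof in structure and in all essential details.
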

}
\onlyLong{
\begin{proof}
We follow roughly along the lines of the proof of~\cite[Lemma
  3]{SWK09} and show per induction on the round number, that an
algorithm $A$ cannot reach a univalent configuration until round $r$. 

For the base case, we consider binary consensus only and argue similar to~\cite{FLP85} but make use of
our stronger Validity property: 
Let $C_x^0$ be the initial configuration, where the processes with
the $x$ smallest ids start with $1$ and all others with $0$. Clearly,
in $C_0^0$ all processes start with $0$ and in $C_n^0$ all start with
$1$, so the two configurations are $0$- and $1$-valent, respectively.
To see that for some $x$ $C_x^0$ must be bivalent, consider that this
is not the case, then there must be a $C_x^0$ that is $0$-valent while
$C_{x+1}^0$ is $1$-valent. But, these configurations differ only in
$p_{x+1}$, and so by Lemma~\ref{lem:neighbours-are-bivalent} they
cannot be univalent with different valency.

For the induction step we assume that there is a bivalent
configuration $C$ at the beginning of round $r-1$, and show that there is
at least one such configuration at the beginning of round $r$.
We proceed by contradiction and assume all configurations at the
beginning of round $r$ are univalent. Since $C$ is bivalent and all
configurations at the beginning of $r$ are univalent, there
must be two configurations $C'$ and $C''$ at the beginning of round
$r$ which have different valency. 
Clearly, $C'$ and $C''$ are reached from $C$ by two different 
round $r-1$ graphs $\G'=\li{\Pi,E'}$ and
$\G''=\li{\Pi,E''}$. Lemma~\ref{lem:graph-seq} shows that there is a
sequence of graphs that can be applied to both $C'$ and $C''$. 
Further, note that each pair of subsequent graphs in this sequence
differ only in one link $v\edge{}w$, so the resulting configurations
differ only in the state of $w$. Moreover, if the root component in
$\G'$ and $\G''$ is the same, all graphs in the sequence also have the
same root component. 
Since the valency of $C'$ and $C''$ was assumed to be
different, there must be two configurations $\overline{C}'$ and
$\overline{C}''$ in the sequence of configurations that have different 
valency and differ only in the state of one process, say $p$.
Applying Lemma~\ref{lem:neighbours-are-bivalent} to  $\overline{C}'$ and
$\overline{C}''$ again produces a contradiction, and so not all
successors of $C$ can be univalent. 
\end{proof}
}

\section{Conclusion} \label{sec:conclusion}
We introduced a framework for modeling dynamic networks with directed
communication links, and introduced a weak connectivity assumption that
makes consensus solvable. Without such assumptions,
consensus is trivially impossible in such systems as some processes can withhold their
input values until a wrong decision has been made. 
We presented an algorithm that achieves consensus under this assumption,
and showed several impossibility results and lower bounds that reveal that 
our algorithm is asymptotically optimal.

We have also explained that our framework is powerful enough to model
crash failures in the context of dynamic networks.
An important open question is thus how to handle Byzantine 
processes in \emph{dynamic} networks.
It is unclear whether connectivity assumptions that work in static 
networks (cf.\ \cite{Dolev82}) are also sufficient to solve consensus when the 
adversary has the additional power of manipulating the network topology.
\onlyShort{\bibliographystyle{splncs}}
\onlyLong{\bibliographystyle{plain}}
\bibliography{additional,lit}

\end{document}